\newcommand{\eps}{\epsilon}
\newcommand{\E}{\mathbf{E}}
\renewcommand{\Pr}{\mathbf{Pr}}
\newcommand{\abs}[1]{\left| #1 \right|}
\newcommand{\norm}[1]{\left\lVert #1 \right\rVert}
\newtheorem{theorem}{Theorem}
\newtheorem{lemma}{Lemma}
\newtheorem{claim}{Claim}
\newtheorem{property}{Property}
\newtheorem{fact}{Fact}
\newtheorem{corollary}{Corollary}
\newtheorem{remark}{Remark}
\newtheorem{definition}{Definition}
\newcommand{\poly}{\text{poly}}
\newcommand{\nnz}{\text{nnz}}
\newcommand{\qinhides}[1]{}
\newenvironment{proof}{\trivlist\item[]\emph{Proof}:}%
{\unskip\nobreak\hskip 1em plus 1fil\nobreak$\Box$
\parfillskip=0pt%
\endtrivlist}
\begin{document}

\title{Subspace Embeddings and $\ell_p$-Regression \\ Using Exponential Random Variables}

\author{David P. Woodruff\\IBM Almaden\\dpwoodru@us.ibm.com 
\and Qin Zhang\\IBM Almaden\\qinzhang@cse.ust.hk}
  

\maketitle

\begin{abstract}
Oblivious low-distortion subspace embeddings are a crucial building block for numerical linear algebra problems. 
We show for any real $p, 1 \leq p < \infty$, given a matrix $M \in \mathbb{R}^{n \times d}$ with $n \gg d$, with constant probability we can choose a matrix $\Pi$ with $\max(1, n^{1-2/p}) \poly(d)$ rows and $n$ columns so that simultaneously for all $x \in \mathbb{R}^d$,
$\|Mx\|_p \leq \|\Pi Mx\|_{\infty} \leq \poly(d) \|Mx\|_p.$
Importantly, $\Pi M$ can be computed in the optimal $O(\nnz(M))$ time, where $\nnz(M)$ is the number of non-zero entries of $M$. This generalizes all previous oblivious subspace embeddings which required $p \in [1,2]$ due to their use of $p$-stable random variables. Using our matrices $\Pi$, we also improve the best known distortion of oblivious subspace embeddings of $\ell_1$ into $\ell_1$ with $\tilde{O}(d)$ target dimension in $O(\nnz(M))$ time from $\tilde{O}(d^3)$ to $\tilde{O}(d^2)$, which can further be improved to $\tilde{O}(d^{3/2}) \log^{1/2} n$ if $d = \Omega(\log n)$, answering a question of Meng and Mahoney (STOC, 2013).

We apply our results to $\ell_p$-regression, obtaining a $(1+\eps)$-approximation in $O(\nnz(M)\log n) + \poly(d/\eps)$ time, improving the best known $\poly(d/\eps)$ factors for every $p \in [1, \infty) \setminus \{2\}$. If one is just interested in a $\poly(d)$ rather than a $(1+\eps)$-approximation to $\ell_p$-regression, a corollary of our results is that for all $p \in [1, \infty)$ we can solve the $\ell_p$-regression problem without using general convex programming, that is, since our subspace embeds into $\ell_{\infty}$ it suffices to solve a linear programming problem.  Finally, we give the first protocols for the distributed $\ell_p$-regression problem for every $p \geq 1$ which are nearly optimal in communication and computation. 
\end{abstract}


\section{Introduction}
An oblivious subspace embedding with distortion $\kappa$ is a distribution over linear maps $\Pi: \mathbb{R}^n \rightarrow \mathbb{R}^t$ for which for any fixed $d$-dimensional subspace of $\mathbb{R}^n$, represented as the column space of an $n \times d$ matrix $M$, with constant probability, $\|Mx\|_p \leq \|\Pi Mx\|_p \leq \kappa \|Mx\|_p$ simultaneously for all vectors $x \in \mathbb{R}^d$. The goal is to minimize $t$, $\kappa$, and the time to compute $\Pi \cdot M$. For a vector $v$, $\|v\|_p = (\sum_{i=1}^{n} |v_i|^p )^{1/p}$ is its $p$-norm. 

Oblivious subspace embeddings have proven to be an essential ingredient for quickly and approximately solving numerical linear algebra problems. One of the canonical problems is regression, which is well-studied in the learning community, see \cite{cd11,g11,hkz12,kst12} for some recent advances. S\'arlos \cite{s06} observed that oblivious subspace embeddings could be used to approximately solve least squares regression and low rank approximation, and he used fast Johnson-Lindenstrauss transforms \cite{AL08,AC09} to obtain the fastest known algorithms for these problems at the time. Optimizations to this in the streaming model are in \cite{cw09,KN12}.

As an example, in least squares regression, one is given an $n \times d$ matrix $M$ which is usually overconstrained, i.e., $n \gg d$, as well as a vector $b \in \mathbb{R}^n$. The goal is to output $x^* = \textrm{argmin}_x \|Mx-b\|_2$, that is, to find the vector $x^*$ so that $Mx^*$ is the (Euclidean) projection of $b$ onto the column space of $M$. This can be solved exactly in $O(nd^2)$ time. Using fast Johnson-Lindenstrauss transforms, S\'arlos was able to find a vector $x'$ with $\|Mx'-b\|_2 \leq (1+\eps)\|Mx^*-b\|_2$ in $O(nd \log d) + \poly(d/\eps)$ time, providing a substantial improvement. The application of oblivious subspace embeddings (to the space spanned by the columns of $M$ together with $b$) is immediate: given $M$ and $b$, compute $\Pi M$ and $\Pi b$, and solve the problem $\min_x \|\Pi Mx-\Pi b\|_2$. If $\kappa = (1+\eps)$ and $t \ll n$, one obtains a relative error approximation by solving a much smaller instance of regression. 

Another line of work studied $\ell_p$-regression for $p \neq 2$. One is given an $n \times d$ matrix $M$ and an $n \times 1$ vector $b$, and one seeks $x^* =  \textrm{argmin}_x \|Mx-b\|_p$. For $1 \leq p < 2$, this provides a more robust form of regression than least-squares, since the solution is less sensitive to outliers. For $2 < p \leq \infty$, this is even more sensitive to outliers, and can be used to remove outliers. While $\ell_p$-regression can be solved in $\poly(n)$ time for every $1 \leq p \leq \infty$ using convex programming, this is not very satisfying if $n \gg d$. For $p = 1$ and $p = \infty$ one can use linear programming to solve these problems, though for $p = 1$ the complexity will still be superlinear in $n$. Clarkson \cite{Clarkson05} was the first to achieve an $n \cdot \poly(d)$ time algorithm for $\ell_1$-regression, which was then extended to $\ell_p$-regression for every $1 \leq p \leq \infty$ with the same running time \cite{DDHKM09}. 

The bottleneck of these algorithms for $\ell_p$-regression was a preprocessing step, in which one well-conditions the matrix $M$ by choosing a different basis for its column space. Sohler and Woodruff \cite{SW11} got around this for the important case of $p = 1$ by designing an oblivious subspace embedding $\Pi$ for which $\|Mx\|_1 \leq \|\Pi Mx\|_1  = O(d \log d) \|Mx\|_1$ in which $\Pi$ has $O(d \log d)$ rows. Here, $\Pi$ was chosen to be a matrix of Cauchy random variables. Instead of running the expensive conditioning step on $M$, it is run on $\Pi M$, which is much smaller. One obtains a $d \times d$ change of basis matrix $R^{-1}$. Then one can show the matrix $\Pi MR^{-1}$ is well-conditioned. This reduced the running time for $\ell_1$-regression to $nd^{\omega -1} + \poly(d/\eps)$, where $\omega < 3$ is the exponent of matrix multiplication. The dominant term is the $nd^{\omega -1}$, which is the cost of computing $\Pi M$ when $\Pi$ is a matrix of Cauchy random variables. 

In \cite{CDMMMW12}, Clarkson et. al combined the ideas of Cauchy random variables and Fast Johnson Lindenstrauss transforms to obtain a more structured family of subspace embeddings, referred to as the FCT1 in their paper, thereby improving the running time for $\ell_1$-regression to $O(nd \log n) + \poly(d/\eps)$. An alternate construction, referred to as the FCT2 in their paper, gave a family of subspace embeddings that was obtained by partitioning the matrix $M$ into $n/\poly(d)$ blocks and applying a fast Johnson Lindenstrauss transform on each block. Using this approach, the authors were also able to obtain an $O(nd \log n) + \poly(d/\eps)$ time algorithm for $\ell_p$-regression for every $1 \leq p \leq \infty$. 

While the above results are nearly optimal for dense matrices, one could hope to do better if the number of non-zero entries of $M$, denoted $\nnz(M)$, is much smaller than $nd$. Indeed, $M$ is often a sparse matrix, and one could hope to achieve a running time of $O(\nnz(M)) + \poly(d/\eps)$. Clarkson and Woodruff \cite{CW12} designed a family of sparse oblivious subspace embeddings $\Pi$ with $\poly(d/\eps)$ rows, for which $\|Mx\|_2 \leq \|\Pi Mx\|_2 \leq (1+\eps)\|Mx\|_2$ for all $x$. Importantly, the time to compute $\Pi M$ is only $\nnz(M)$, that is, proportional to the sparsity of the input matrix. The $\poly(d/\eps)$ factors were optimized by Meng and Mahoney \cite{MM12}, Nelson and Nguyen \cite{NN12}, and Miller and Peng \cite{MP12}. Combining this idea with that in the FCT2, they achieved running time $O(\nnz(M) \log n) + \poly(d/\eps)$ for $\ell_p$-regression for any constant $p$, $1 \leq p < \infty$.

Meng and Mahoney \cite{MM12} gave an alternate subspace embedding family to solve the $\ell_p$-regression problem in $O(\nnz(M) \log n) + \poly(d/\eps)$ time for $1 \leq p < 2$. One feature of their construction is that the number of rows in the subspace embedding matrix $\Pi$ is only $\poly(d)$, while that of Clarkson and Woodruff \cite{CW12} for $1 \leq p < 2$ is $n/\poly(d)$. This feature is important in the distributed setting, for which there are multiple machines, each holding a subset of the rows of $M$, who wish to solve an $\ell_p$-regression problem by communicating with a central server. The natural solution is to use shared randomness to agree upon an embedding matrix $\Pi$, then apply $\Pi$ locally to each of their subsets of rows, then add up the sketches using the linearity of $\Pi$. The communication is proportional to the number of rows of $\Pi$. This makes the algorithm of Meng and Mahoney more communication-efficient, since they achieve $\poly(d/\eps)$ communication. However, one drawback of the construction of Meng and Mahoney is that their solution only works for $1 \leq p < 2$. This is inherent since they use $p$-stable random variables, which only exist for $p \leq 2$. 

\subsection{Our Results}
In this paper, we improve all previous low-distortion oblivious subspace embedding results for every $p \in [1,\infty) \backslash \{2\}$. We note that the case $p = 2$ is already resolved in light of \cite{CW12,MM12,NN12}. All results hold with arbitrarily large constant probability. $\gamma$ is an arbitrarily small constant. In all results $\Pi M$ can be computed in $O(\nnz(M))$ time (for the third result, we assume that $\nnz(M) \ge d^{2+\gamma}$). 
\begin{itemize}
\item A matrix $\Pi \in \mathbb{R}^{O(n^{1 - 2/p} \log n (d \log d)^{1 + 2/p} + d^{5+4p}) \times n}$ for $p > 2$ such that given $M \in \mathbb{R}^{n \times d}$, for $\forall x \in \mathbb{R}^d$, 
\begin{equation*}
\textstyle
\Omega(1/(d \log d)^{1/p}) \cdot \norm{Mx}_p \le \norm{\Pi M x}_\infty \le O((d \log d)^{1/p}) \cdot \norm{Mx}_p.
\end{equation*}

\item A matrix $\Pi \in \mathbb{R}^{O(d^{1+\gamma}) \times n}$ for $1 \le p < 2$ such that given $M \in \mathbb{R}^{n \times d}$, for $\forall x \in \mathbb{R}^d$, 
\begin{equation*}
\textstyle \Omega\left(\max\left\{1 / {(d \log d \log n)^{\frac{1}{p}-\frac{1}{2}}}, 1 / (d \log d)^{1/p}\right\}\right) \cdot \norm{Mx}_p \le \norm{\Pi M x}_2 \le O((d \log d)^{1/p}) \cdot \norm{Mx}_p.
\end{equation*}

Note that since $\norm{\Pi M x}_\infty \le \norm{\Pi M x}_2 \le O(d^{(1+\gamma)/2}) \norm{\Pi M x}_\infty$, we can always replace the $2$-norm estimator by the $\infty$-norm estimator with the cost of another $d^{(1+\gamma)/2}$ factor in the distortion.

\item A matrix $\Pi \in \mathbb{R}^{O(d \log^{O(1)} d) \times n}$ such that given $M \in \mathbb{R}^{n \times d}$, for $\forall x \in \mathbb{R}^d$,
\begin{equation*}
\Omega\left(\max\left\{1 / (d \log d), 1/\sqrt{d \log d \log n}\right\}\right) \cdot \norm{Mx}_1 \le \norm{\Pi M x}_1 \le O(d \log^{O(1)} d) \cdot \norm{Mx}_1.
\end{equation*}
In \cite{MM12} the authors asked whether a distortion $\tilde{O}(d^3)$~\footnote{
We use $\tilde{O}(f)$ to denote a function of the form $f \cdot \log^{O(1)}(f)$.} is optimal for $p = 1$ for mappings $\Pi M$ that can be computed in $O(\text{nnz}(M))$ time. Our result shows that the distortion can be further improved to $\tilde{O}(d^{2})$, and if one also has $d > \log n$, even further to $\tilde{O}(d^{3/2})\log^{1/2} n$. Our embedding also improves the $\tilde{O}(d^{2+\gamma})$ distortion of the much slower \cite{CDMMMW12}. In Table~\ref{tab:results} we compare our result with previous results for $\ell_1$ oblivious subspace embeddings. Our lower distortion embeddings for $p = 1$ can also be used in place of the $\tilde{O}(d^3)$ distortion embedding of \cite{MM12} in the context of quantile regression \cite{ymm13}. 
\end{itemize}

\begin{table*}[t!]
\centering
\scalebox{0.88}{
\begin{tabular}{|c||c|c|c|}
\hline
 & Time & Distortion & Dimemsion\\
\hline
\cite{SW11}& $nd^{\omega - 1}$ & $\tilde{O}(d)$ & $\tilde{O}(d)$ \\
\hline
\cite{CDMMMW12}& $nd\log d$ & $\tilde{O}(d^{2+\gamma})$ & $\tilde{O}(d^5)$ \\
\hline
\cite{CW12} + \cite{NN12} & $\nnz(A) \log n$ & $\tilde{O}\left(d^{(x+1)/2}\right)\ (x\ge 1)$ & $\tilde{O}(n/d^x)$ \\
\hline
\cite{CW12} + \cite{CDMMMW12} + \cite{NN12} & $\nnz(A) \log n$ & $\tilde{O}(d^3)$ & $\tilde{O}(d)$ \\
\hline
\cite{CW12} + \cite{SW11} + \cite{NN12} & $\nnz(A) \log n$ & $\tilde{O}(d^{1+\omega/2})$ & $\tilde{O}(d)$ \\
\hline
\cite{MM12}& $\nnz(A)$ & $\tilde{O}(d^3)$ & $\tilde{O}(d^5)$ \\
\hline
\cite{MM12} + \cite{NN12} & $\nnz(A) + \tilde{O}(d^6)$ & $\tilde{O}(d^3)$ & $\tilde{O}(d)$ \\
\hline
This paper & $\nnz(A) + \tilde{O}(d^{2})$ & $\tilde{O}(d^2)$ & $\tilde{O}(d)$ \\
           & $\nnz(A) + \tilde{O}(d^{2})$ & $\tilde{O}(d^{3/2}) \log^{1/2} n$
& $\tilde{O}(d)$\\
\hline
\end{tabular}
}
\caption{
Results for $\ell_1$ oblivious subspace embeddings.
$\omega < 3$ is the exponent of matrix multiplication.}
\label{tab:results}
\end{table*}
Our oblivious subspace embeddings directly lead to improved $(1+\eps)$-approximation results for $\ell_p$-regression for every $p  \in [1,\infty) \backslash \{2\}$. We further implement our algorithms for $\ell_p$-regression in a distributed setting, where we have $k$ machines and a centralized server. The sites want to solve the regression problem via communication. We state both the communication and the time required of our distributed $\ell_p$-regression algorithms. One can view the time complexity of a distributed algorithm as the sum of the time complexities of all sites including the centralized server (see Section~\ref{sec:regression} for details).

Given an $\ell_p$-regression problem specified by $M \in \mathbb{R}^{n \times (d-1)}, b \in \mathbb{R}^n, \eps>0$ and $p$, let $\bar{M} = [M, -b] \in \mathbb{R}^{n \times d}$. Let $\phi(t, d)$ be the time of solving $\ell_p$-regression problem on $t$ vectors in $d$ dimensions. 
\begin{itemize}
\item For $p > 2$, we obtain a distributed algorithm with communication $\tilde{O}\left(k n^{1-2/p} d^{2+2/p} + d^{4+2p}/\eps^2\right)$ and running time 
$\tilde{O}\left(\nnz(\bar{M}) + (k + d^2 ) (n^{1-2/p} d^{2+2/p} + d^{6+4p})+ \phi(\tilde{O}(d^{3+2p}/\eps^2), d)\right)$.

\item For $1 \le p < 2$, we obtain a distributed algorithm with communication $\tilde{O}\left(k d^{2+\gamma} + d^5 + d^{3+p}/\eps^2 \right)$ and running time 
$\tilde{O}\left(\text{nnz}(\bar{M}) + k d^{2+\gamma} + d^{7-p/2} + \phi(\tilde{O}(d^{2+p}/\eps^2), d\right)$.

\end{itemize}
We comment on several advantages of our algorithms over standard iterative methods for solving regression problems. We refer the reader to Section 4.5 of the survey \cite{Mahoney11} for more details. 
\begin{itemize}
\item In our algorithm, there is no assumption on the input matrix $M$, i.e., we do not assume it is well-conditioned. Iterative methods are either much slower than our algorithms if the condition number of $M$ is large, or would result in an additive $\eps$ approximation instead of the relative error $\eps$ approximation that we achieve. 
\item Our work can be used in conjunction with other $\ell_p$-regression algorithms. Namely, since we find a well-conditioned basis, we can run iterative methods on our well-conditioned basis to speed them up. 
\end{itemize}

\subsection{Our Techniques}


%
Meng and Mahoney \cite{MM12} achieve $O(\nnz(M)\log n) + \poly(d)$ time for $\ell_p$-regression with sketches of the form $S \cdot D \cdot M$, where $S$ is a $t \times n$ hashing matrix for $t = \poly(d)$, that is, a matrix for which in each column there is a single randomly positioned entry which is randomly either $1$ or $-1$, and $D$ is a diagonal matrix of $p$-stable random variables. The main issues with using $p$-stable random variables $X$ are that they only exist for $1 \leq p \leq 2$, and that the random variable $|X|^p$ is heavy-tailed {\it in both directions}.
%
%

We replace the $p$-stable random variable with the reciprocal of an {\it exponential random variable}. Exponential random variables have stability properties with respect to the minimum operation, that is, if $u_1, \ldots, u_n$ are exponentially distributed and $\lambda_i > 0$ are scalars, then $\min\{u_1/\lambda_1, \ldots, u_n/\lambda_n\}$ is distributed as $u/\lambda$, where $\lambda = \sum_i \lambda_i$. This property was used to estimate the $p$-norm of a vector, $p > 2$, in an elegant work of Andoni \cite{Andoni12}. In fact, by replacing the diagonal matrix $D$ in the sketch of \cite{MM12} with a diagonal matrix with entries $1/u_i^{1/p}$ for exponential random variables $u_i$, the sketch coincides with the sketch of Andoni, up to the setting of $t$. Importantly, this new setting of $D$ has no restriction on $p \in [1, \infty)$. We note that while Andoni's analysis for vector norms requires the variance of $1/u_i^{1/p}$ to exist, which requires $p > 2$, in our setting this restriction can be removed. If $X \sim 1/u^{1/p}$, then $X^p$ is only heavy-tailed {\it in one direction}, while the lower tail is exponentially decreasing. This results in a simpler analysis than \cite{MM12} for $1 \le p < 2$ and an improved distortion. The analysis of the expansion follows from the properties of a well-conditioned basis and is by now standard \cite{SW11,MM12,CDMMMW12}, while for the contraction by observing that $S$ is an $\ell_2$-subspace embedding, for any fixed $x$, $\|SDMx\|_1 \geq \|SDMx\|_2 \geq \frac{1}{2} \|DMx\|_2 \geq \frac{1}{2} \|DMx\|_{\infty} \sim \|Mx\|_1/(2u)$, where $u$ is an exponential random variable. Given the exponential tail of $u$, the bound for all $x$ follows from a standard net argument. While this already improves the distortion of \cite{MM12}, a more refined analysis gives a distortion of $\tilde{O}(d^{3/2})\log^{1/2} n$ provided $d > \log n$. 


For $p > 2$, we need to embed our subspace into $\ell_{\infty}$. A feature is that it implies one can obtain a $\poly(d)$-approximation to $\ell_p$-regression by solving an $\ell_{\infty}$-regression problem, in $O(\nnz(M)) + \poly(d)$ time. As $\ell_{\infty}$-regression can be solved with linear programming, this may result in significant practical savings over convex program solvers for general $p$. This is also why we use the $\ell_{\infty}$-estimator for vector $p$-norms rather than the estimators of previous works \cite{IW05,AKO11,BGKS06,BO10} which were not norms, and therefore did not have efficient optimization procedures, such as finding a well-conditioned basis, in the sketch space. 
%
Our embedding is into $n^{1-2/p} \poly(d)$ dimensions, whereas previous work was into $n/\poly(d)$ dimensions. This translates into near-optimal communication and computation protocols for distributed $\ell_p$-regression for every $p$. A parallel least squares regression solver LSRN was developed in \cite{MSM11}, and the extension to $1 \leq p < 2$ was a motivation of \cite{MM12}. Our result gives the analogous result for every $2 < p < \infty$, which is near-optimal in light of an $\Omega(n^{1-2/p})$ sketching lower bound for estimating the $p$-norm of a vector over the reals \cite{pw12}. 

\section{Preliminaries}
\label{sec:preliminary}
In this paper we only consider the real RAM model of computation, and state our running times in terms of the number of arithmetic operations.

Given a matrix $M \in \mathbb{R}^{n \times d}$, let $M_1, \ldots, M_d$ be the columns of $M$, and $M^1, \ldots, M^n$ be the rows of $M$. Define $\ell_i = \norm{M^i}_p\ (i = 1, \ldots, n)$, where the $\ell_i^p$ are known as the {\em leverage scores} of $M$. Let $\text{range}(M) = \{y\ |\ y = Mx, x \in \mathbb{R}^d\}$. W.l.o.g., we constrain $\norm{x}_1 = 1, x \in \mathbb{R}^d$; by scaling our results will hold for all $x \in \mathbb{R}^d$. Define $\norm{M}_p$ to be the element-wise $\ell_p$ norm of $M$. 
That is, $\norm{M}_p = (\sum_{i \in [d]} \norm{M_i}_p^p)^{1/p} = (\sum_{j \in [n]} \norm{M^j}_p^p)^{1/p}$.

Let $[n] = \{1, \ldots, n\}$. Let $\omega$ denote the exponent of matrix multiplication.

\subsection{Well-Conditioning of A Matrix}
\label{sec:well-conditioning}
We introduce two definitions on the well-conditioning of matrices.
\begin{definition}[($\alpha, \beta, p$)-well-conditioning \cite{DDHKM09}]
\label{def:well-condition-1}
Given a matrix $M \in \mathbb{R}^{n \times d}$ and $p \in [1, \infty)$, let $q$ be the dual norm of $p$, that is, $1/p + 1/q = 1$. We say $M$ is {\em $(\alpha, \beta, p)$-well-conditioned} if 
(1) $\norm{x}_q \le \beta \norm{Mx}_p$ for any $x \in \mathbb{R}^d$, and (2) $\norm{M}_p \le \alpha$.
Define $\Delta'_p(M) = \alpha \beta$.
\end{definition}

It is well known that the Auerbach basis \cite{Auerbach1930} (denoted by $A$ throughout this paper) for a $d$-dimensional subspace $(\mathbb{R}^n, \norm{\cdot}_p)$ is $(d^{1/p}, 1, p)$-well-conditioned. Thus by definition we have
$\norm{x}_q \le \norm{Ax}_p$ for any $x \in \mathbb{R}^d$, and $\norm{A}_p \le d^{1/p}$. In addition, the Auerbach basis also has the property that $\norm{A_i}_p = 1$ for all $i \in [d]$.

\begin{definition}[$\ell_p$-conditioning \cite{CDMMMW12}]
\label{def:well-condition-2}
Given a matrix $M \in \mathbb{R}^{n \times d}$ and $p \in [1, \infty)$, define $\zeta_p^{\max}(M) = \max_{\norm{x}_2 \le 1} \norm{Mx}_p$ and $\zeta_p^{\min}(M) = \min_{\norm{x}_2 \ge 1} \norm{Mx}_p$. Define $\Delta_p(M) = \zeta_p^{\max}(M) / \zeta_p^{\min}(M)$ to be the {\em $\ell_p$-norm condition number} of $M$.
\end{definition}

The following lemma states the relationship between the two definitions.
\begin{lemma}[\cite{DDHKM09}]
\label{lem:relation-well-condition}
Given a matrix $M \in \mathbb{R}^{n \times d}$ and $p \in [1, \infty)$, we have
$$d^{-\abs{1/2-1/p}} \Delta_p(M) \le \Delta'_p(M) \le d^{\max\{1/2, 1/p\}} \Delta_p(M).$$
\end{lemma}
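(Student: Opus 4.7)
The plan is to extract $\alpha, \beta$ from the quantities $\zeta_p^{\max}(M), \zeta_p^{\min}(M)$ (and vice versa) using only (i) the elementary monotonicity $\|x\|_s \le \|x\|_r \le d^{1/r-1/s}\|x\|_s$ for $1\le r\le s\le \infty$ on $\mathbb{R}^d$, and (ii) the fact that the operator norm of $M$ from $\ell_2^d$ to $\ell_p^n$ is controlled by the columns $M_i = Me_i$. The case $p\le 2$ (so $q\ge 2$) and the case $p>2$ (so $q<2$) must be handled separately, but both reduce to the same template; the exponent $|1/2-1/p|$ is exactly what one loses when converting between $\|\cdot\|_q$ and $\|\cdot\|_2$, or between $\|\cdot\|_p$ and $\|\cdot\|_2$, on $\mathbb{R}^d$.

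For the upper bound, the strategy is to exhibit concrete $\alpha, \beta$. Taking $\alpha=\|M\|_p$, since $\|M_i\|_p = \|Me_i\|_p \le \zeta_p^{\max}(M)$ for each $i\in[d]$, one obtains $\alpha \le d^{1/p}\zeta_p^{\max}(M)$. For $\beta$, homogeneity gives $\|Mx\|_p \ge \zeta_p^{\min}(M)\,\|x\|_2$ for all $x$, so it suffices to bound $\|x\|_q$ by a multiple of $\|x\|_2$: when $q\ge 2$ (i.e.\ $p\le 2$) we have $\|x\|_q\le \|x\|_2$, yielding $\alpha\beta \le d^{1/p}\Delta_p(M)$; when $q<2$ (i.e.\ $p>2$) we have $\|x\|_q \le d^{1/q-1/2}\|x\|_2$, yielding $\alpha\beta \le d^{1/p+1/q-1/2}\Delta_p(M) = d^{1/2}\Delta_p(M)$. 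Combining, $\Delta'_p(M)\le d^{\max\{1/2,1/p\}}\Delta_p(M)$.

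For the lower bound, I would apply the two defining inequalities of $(\alpha,\beta,p)$-well-conditioning to extremal vectors. Pick $x^*$ with $\|x^*\|_2=1$ and $\|Mx^*\|_p = \zeta_p^{\min}(M)$; then $\beta \ge \|x^*\|_q/\zeta_p^{\min}(M)$, and $\|x^*\|_q \ge 1$ if $q<2$ while $\|x^*\|_q \ge d^{1/q-1/2}$ if $q\ge 2$. For $\alpha$, use $\|Mx\|_p \le \sum_i|x_i|\,\|M_i\|_p \le \|x\|_2\,(\sum_i\|M_i\|_p^2)^{1/2}$ by triangle inequality and Cauchy--Schwarz, so $\zeta_p^{\max}(M) \le \bigl(\sum_i\|M_i\|_p^2\bigr)^{1/2}$. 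Then convert this $\ell_2$ sum of column $p$-norms into $\|M\|_p = \bigl(\sum_i\|M_i\|_p^p\bigr)^{1/p}$: for $p\ge 2$, $\|\cdot\|_2\le d^{1/2-1/p}\|\cdot\|_p$, so $\alpha\ge \|M\|_p \ge d^{1/p-1/2}\zeta_p^{\max}(M)$; for $p\le 2$, $\|\cdot\|_2\le \|\cdot\|_p$ directly, so $\alpha \ge \zeta_p^{\max}(M)$. Multiplying the appropriate $\alpha$ and $\beta$ bounds in each regime gives exactly $d^{-|1/2-1/p|}\Delta_p(M)$.

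The only real obstacle is bookkeeping: making sure the norm-comparison constants on $\mathbb{R}^d$ are applied in the correct direction in each of the four subcases ($p$ vs.\ $2$ combined with $q$ vs.\ $2$). Once one observes that the exponents collapse to $\max\{1/2,1/p\}$ and $-|1/2-1/p|$ respectively, the proof is a few lines; no probabilistic or subspace-embedding machinery is needed.
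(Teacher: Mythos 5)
The paper does not supply its own proof of this lemma; it is imported verbatim from~\cite{DDHKM09}, so there is no in-paper argument to compare against. Your proof is correct and self-contained: for the upper bound you exhibit the canonical pair $\alpha=\|M\|_p$, $\beta=\sup_x\|x\|_q/\|Mx\|_p$ and bound them via $\|M_i\|_p=\|Me_i\|_p\le\zeta_p^{\max}(M)$ and the homogeneity inequality $\|Mx\|_p\ge\zeta_p^{\min}(M)\|x\|_2$; for the lower bound you correctly observe that \emph{any} admissible $\alpha$ must satisfy $\alpha\ge\|M\|_p$ and any admissible $\beta$ must satisfy $\beta\ge\|x^*\|_q/\zeta_p^{\min}(M)$ at a minimizing $x^*$, and you control $\zeta_p^{\max}(M)$ by $\bigl(\sum_i\|M_i\|_p^2\bigr)^{1/2}$ via triangle inequality plus Cauchy--Schwarz. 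The four $\ell_r$-vs-$\ell_2$ comparisons on $\mathbb{R}^d$ are applied in the correct directions, and using $1/p+1/q=1$ the exponents collapse exactly to $\max\{1/2,1/p\}$ and $-|1/2-1/p|$ as required.
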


\subsection{Oblivious Subspace Embeddings}
An oblivious subspace embedding (OSE) for the Euclidean norm, given a parameter $d$, is a distribution $\mathcal{D}$ over $m \times n$ matrices such that for any $d$-dimensional subspace $\mathcal{S} \subset \mathbb{R}^n$, with probability $0.99$ over the choice of $\Pi \sim \mathcal{D}$, we have
$$1/2 \cdot \norm{x}_2 \le \norm{\Pi x}_2 \le 3/2 \cdot \norm{x}_2, \quad \forall x \in \mathcal{S}.$$ Note that OSE's only work for the $2$-norm, while in this paper we  get similar results for $\ell_p$-norms for all $p \in [1, \infty) \backslash \{2\}$. Two important parameters that we want to minimize in the construction of OSE's are: (1) The number of rows of $\Pi$, that is, $m$. This is the dimension of the embedding. (2) The number of non-zero entries in the columns of $\Pi$, denoted by $s$. This affects the running time of the embedding.

In \cite{NN12}, buiding upon \cite{CW12}, several OSE constructions are given. In particular, they show that there exist OSE's with $(m, s) = \left(O(d^2), 1\right)$ and $(m, s) = \left(O(d^{1+\gamma}), O(1)\right)$ for any constant $\gamma > 0$ and $(m, s) = (\tilde{O}(d), \log^{O(1)}d)$. 

\subsection{Distributions}

\paragraph{$p$-stable Distribution.}
We say a distribution $\mathcal{D}_p$ is $p$-stable, if for any vector $\alpha = (\alpha_1, \ldots, \alpha_n) \in \mathbb{R}^n$ and $X_1, \ldots, X_n \stackrel{i.i.d.}{\sim} \mathcal{D}_p$, we have
$\textstyle \sum_{i \in [n]} \alpha_i X_i \simeq \norm{\alpha}_p X$,  where $X \sim \mathcal{D}_p$. It is well-known that $p$-stable distribution exists if and only if $p \in [1,2]$ (see. e.g., \cite{Indyk06}). For $p = 2$ it is the Gaussian distribution and for $p = 1$ it is the Cauchy  distribution. 
We say a random variable $X$ is $p$-stable if $X$ is chosen from a $p$-stable distribution.

\paragraph{Exponential Distribution.}
An exponential distribution has support $x \in [0, \infty)$,  probability density function (PDF) $f(x) = e^{-x}$ and cumulative distribution function (CDF) $F(x) = 1 - e^{-x}$. We say a random variable $X$ is exponential if $X$ is chosen from the exponential distribution. 
\begin{property}
\label{prop:exp}
The exponential distribution has the following properties.
\begin{enumerate}
\item ({\bf max stability}) 
If $u_1, \ldots, u_n$ are exponentially distributed, and $\alpha_i > 0\ (i = 1, \ldots, n)$ are real numbers, then 
$\textstyle \max\{\alpha_1 / u_1, \ldots, \alpha_n / u_n\} \simeq \left(\sum_{i \in [n]} \alpha_i \right) \left/ u \right.$,
where $u$ is exponential.

\item ({\bf lower tail bound})
For any $X$ that is exponential, there exist absolute constants $c_e, c'_e$  such that, \\
$\min\{0.5, c'_e t\} \le \Pr[X \le t] \le c_e t, \ \ \forall t \ge 0$.
\end{enumerate}
\end{property}
The second property holds since the median of the exponential distribution is the constant $\ln 2$ (that is, $\Pr[x \le \ln 2] = 50\%$), and the PDFs on $x = 0, x = \ln 2$ are $f(0) = 1, f(\ln 2) = 1/2$, differing by a factor of $2$. Here we use that the PDF is monotone decreasing.

\paragraph{Reciprocal of Exponential to the $p$-th Power.}
Let $E_i \sim 1/U_i^p$ where $U_i$ is an exponential. We call $E_i$ reciprocal of exponential to the $p$-th power. The PDF  of $E_i$ is given by $g(x) = p x^{-(p+1)} e^{-1/x^p}$.

\medskip


The following lemma shows the relationship between the $p$-stable distribution and the exponential distribution. 

\begin{lemma}
\label{lem:stable-exp}
Let $y_1, \ldots, y_d \geq 0$ be scalars. Let $z \in \{1,2\}$.
Let $E_1, \ldots, E_d$ be $d$ independent reciprocal of exponential random variables to the $p$-th power ($p \in (0, 2)$), 
and let $X = (\sum_{i \in [d]} (y_i E_i)^z)^{1/z}$. Let $S_1, \ldots, S_d$ be $d$ independent $p$-stable
random variables, and let $Y = (\sum_{i \in [d]} (y_i \abs{S_i})^z)^{1/z}$. 
There is a constant $\gamma > 0$ for which for any $t > 0$,
$$\Pr[X \geq t] \leq \Pr[Y \geq \gamma t].$$
\end{lemma}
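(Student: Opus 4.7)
The plan is to compare the tails of $E$ and $|S|$ termwise and then lift the comparison to the sums. The crux is that both random variables are regularly varying of index $-p$: from the stated PDF one gets $\Pr[E \geq t] = 1 - e^{-1/t^p} \leq t^{-p}$, and the standard $p$-stable tail for $p \in (0,2)$ yields $\Pr[|S| \geq t] \geq c_p t^{-p}$ for $t \geq 1$ and an absolute constant $c_p > 0$. At the single-variable level I expect to establish $\Pr[yE \geq t] \leq c_1 \Pr[c_2\, y|S| \geq t]$ for all $y, t \geq 0$ with constants $c_1, c_2$ depending only on $p$: the large-$t$ regime is closed by matching the polynomial tails, and the bounded-$t$ regime by the fact that both probabilities are continuous and bounded away from $0$ on any compact subset of $(0,\infty)$.

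Next, I would lift the termwise comparison to $X = (\sum_i (y_i E_i)^z)^{1/z}$ and $Y = (\sum_i (y_i|S_i|)^z)^{1/z}$ in a unified way for $z \in \{1,2\}$, by observing that the summands $(y_iE_i)^z$ and $(y_i|S_i|)^z$ are independent nonnegative heavy-tailed random variables of tail index $p$ (when $z=1$) or $p/2$ (when $z=2$, noting that $E_i^2 = 1/U_i^{2/p}$ has CDF $e^{-1/t^{p/2}}$ and $|S_i|^2$ has tail of index $p/2$). For such variables the sum is tail-equivalent to the maximum: an elementary union bound gives $\Pr[X \geq t] \leq \sum_i \Pr[y_i E_i \geq t]\,(1+o(1))$ in the tail regime, while independence and inclusion-exclusion give $\Pr[Y \geq \gamma t] \geq \tfrac{1}{2}\sum_i \Pr[y_i|S_i| \geq \gamma t]$ whenever the right-hand sum is at most a constant. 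Plugging in the termwise comparison and choosing $\gamma$ small enough (depending only on $p$ and on $c_1, c_2, c_p$) absorbs every multiplicative factor and yields $\Pr[X \geq t] \leq \Pr[Y \geq \gamma t]$ in the tail regime.

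The main obstacle I foresee is making the comparison uniform in $t > 0$ rather than only in the tail. For $t$ small the deficit $1 - \Pr[E \geq t] = e^{-1/t^p}$ is super-polynomially small, whereas $1 - \Pr[|S| \geq \gamma t]$ is only linear in $\gamma t$ (since the symmetric $p$-stable density is positive at $0$), so a naive pointwise coupling of individual $E$ to $|S|$ cannot work. The way I would close this gap is to leverage independence across the coordinates: $\Pr[X \geq t]$ near $t = 0$ requires every $y_i E_i$ to be unusually small simultaneously, a jointly super-polynomially rare event that absorbs the weaker behavior of $\Pr[Y \geq \gamma t]$ near the origin. Balancing these two rates and pushing all numerical slack into the single constant $\gamma$ is the delicate step; once done, combining the large-$t$ tail comparison with this boundary analysis gives the inequality for every $t > 0$.
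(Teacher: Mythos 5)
The paper proves the lemma by establishing a pointwise \emph{density} domination — the density $k$ of $y_i^z E_i^z$ is shown, via Nolan's asymptotic for the stable density, to satisfy $k(t) \le h(\kappa t)/\kappa$ where $h$ is the density of $y_i^z |S_i|^z$ — and then passes this to the sums in one step through a multivariate change of variables $s_i = \kappa t_i$. Your plan is genuinely different: you work at the CDF level (comparing tails of the individual summands), and then lift to the sums via a heavy-tail ``sum $\approx$ max'' argument (union bound for the upper tail of $X$, inclusion-exclusion for the lower tail of $Y$). That is a reasonable alternative skeleton, and it correctly isolates the regular-variation of index $-p$ (or $-p/2$ when $z = 2$) as the driving force. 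The paper's density-level route is cleaner in that it handles $z \in \{1,2\}$ and all coordinates simultaneously without any regime-splitting.

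However, there are two concrete problems with your proposal. First, your termwise comparison naturally produces $\Pr[yE \geq t] \le c_1 \Pr[c_2 y|S| \geq t]$ with a multiplicative constant $c_1$, and the union-bound/inclusion-exclusion step introduces additional $(1+o(1))$ and $1/2$ factors. The lemma claims $\Pr[X \geq t] \le \Pr[Y \geq \gamma t]$ with \emph{no} prefactor, and you cannot absorb a prefactor $C > 1$ into $\gamma$ when both sides are probabilities bounded by $1$; this needs to be addressed explicitly, not just ``pushed into $\gamma$.''

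Second, and more importantly, the small-$t$ obstacle you identify is real, but your proposed fix — ``leverage independence: $\Pr[X < t]$ near $0$ is a jointly super-polynomially rare event'' — points in the \emph{wrong} direction. The inequality to be shown is equivalent to $\Pr[X < t] \geq \Pr[Y < \gamma t]$. Near $t = 0$, independence makes $\Pr[X < t]$ decay super-exponentially (on the order of $\exp(-\sum_i (y_i/t)^p)$), so $X$ is almost never below $t$; meanwhile $\Pr[Y < \gamma t]$ decays only polynomially (like $(\gamma t)^d$, since the stable density is positive at $0$). Thus $\Pr[X < t] \ll \Pr[Y < \gamma t]$ for $t$ small, which is the opposite of what is needed. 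Independence only exacerbates the problem. In fact, already for $d = 1$, $p = 1$, $z = 1$, one checks that $\Pr[1/U \geq t] = 1 - e^{-1/t}$ exceeds $\Pr[|C| \geq \gamma t] = 1 - \tfrac{2}{\pi}\arctan(\gamma t)$ for any fixed $\gamma$ once $t$ is small enough. So your boundary analysis is not a ``delicate step'' that can be closed by the idea you describe; the claimed all-$t$ inequality genuinely breaks near the origin, and every downstream use of the lemma in the paper is in the tail regime $t = \Omega(1)$. If you want a clean statement, you should either (i) restrict to $t \geq t_0$ for a constant $t_0$, or (ii) prove the weaker bound $\Pr[X \geq t] \leq C\,\Pr[Y \geq \gamma t]$ and carry the constant $C$ through the applications.
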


\begin{proof}
By Nolan (\cite{Nolan13}, Theorem 1.12), there exist constants $c_N, c_p, c'_p > 0$ such that the PDF $f(x)$ of the $p$-stable ($p \in (0, 2)$) distribution satisfies $$c_p x^{-(p+1)} \le f(x) \le c'_p x^{-(p+1)},$$ for $\forall {x} > c_N$. Also, $p$-stable distribution is continuous, bounded and symmetric with respect to $y$-axis~\footnote{See, e.g., \url{http://en.wikipedia.org/wiki/Stable_distribution}}.

We first analyze the PDF $h$ of $y_i^z \abs{S_i}^z$. Letting $t = y_i^z \abs{S_i}^z$, the inverse function is $\abs{S_i} = t^{1/z}/y_i$. Taking the derivative,
we have $\frac{d\abs{S_i}}{dt} = \frac{1}{z y_i} t^{1/z-1}$. Let $f(t) = c(t) \cdot t^{-(p+1)}$ be the PDF of the absolute value of a $p$-stable random variable, where $2c_p \le c(t) \le 2c'_p$ for $t > c_N$. We have by the change of variable technique,
\begin{eqnarray}
h(t) & = & c\left(\frac{t^{1/z}}{y_i}\right) \cdot \left(\frac{t^{1/z}}{y_i}\right)^{-(p+1)} \cdot \frac{1}{z y_i} \cdot t^{1/z-1} \nonumber \\
\label{eq:stable-1}
& \ge &  2c_p \cdot \frac{y_i^p}{z t^{\frac{p}{z}+1}} \quad \quad \text{ if } t > (c_N y_i)^z.
\end{eqnarray}

We next analyze the PDF $k$ of $u_i^z E_i^z$. Letting $t = y_i^z E_i^z$, the inverse function is $E_i = t^{1/z}/y_i$. Taking
the derivative, $\frac{dE_i}{dt} = \frac{1}{zy_i} t^{1/z-1}$. Letting $g(t) = p t^{-(p+1)} e^{-1/t^p}$ be the PDF of the reciprocal of exponential to the $p$-th power,
we have by the change of variable technique,
\begin{eqnarray}
k(t) & = & e^{-\left(\frac{t^{1/z}}{y_i}\right)^{-p}} \cdot p \cdot \left(\frac{t^{1/z}}{y_i}\right)^{-(p+1)} \cdot \frac{1}{z y_i} \cdot t^{1/z-1} \nonumber \\
& = & e^{-\left(\frac{y_i}{t^{1/z}}\right)^p} \cdot p \cdot \frac{y_i^p}{z t^{\frac{p}{z}+1}} \nonumber \\
\label{eq:exp-1}
&\le& p \cdot \frac{y_i^p}{z t^{\frac{p}{z}+1}} \quad \quad (e^{-x} \le 1 \text{ for } x \ge 0)
\end{eqnarray}
By (\ref{eq:stable-1}) and (\ref{eq:exp-1}), when $t > (c_N y_i)^z$, $$k(t) \le p \cdot \frac{y_i^p}{z t^{\frac{p}{z}+1}} \le 2c_p \cdot \frac{1}{\kappa^{\frac{p}{z}+1}} \cdot \frac{y_i^p}{z t^{\frac{p}{z}+1}}  \le \frac{h(\kappa t)}{\kappa}$$ for a sufficiently small constant $\kappa$. When $t \le (c_N y_i)^z = O(1)$, we also have $k(t) \le \frac{h(\kappa t)}{\kappa}$ for a sufficiently small constant $\kappa$.

We thus have,
\begin{eqnarray*}
\Pr[X \geq t] & = & \Pr[X^z \geq t^z]\\
& = & \Pr\left[\sum_{i=1}^d y_i^z E_i^z \geq t^z\right]\\
& = & \int_{\sum_{i=1}^d t_i \geq t^z} k(t_1) \cdots k(t_d) dt_1 \cdots dt_d\\
& \leq & \int_{\sum_{i=1}^d t_i \geq t^z} \kappa^{-d} h(\kappa t_1) \cdots h(\kappa t_d) dt_1 \cdots dt_d\\
& \leq & \int_{\sum_{i=1}^d s_i \geq \kappa t^z} f(s_1) \cdots f(s_d) ds_1 \cdots ds_d\\
& = & \Pr[Y^z \geq \kappa t^z]\\
& = & \Pr[Y \geq \kappa^{1/z} t],
\end{eqnarray*}
where we made the change of variables $s_i = \kappa t_i$. Setting $\gamma = \kappa^{1/z}$ completes the proof. 
\end{proof}

\begin{lemma}
\label{lem:tail-exp}
Let $U_1, \ldots, U_d$ be $d$ independent exponentials. Let $X = \sum_{i \in [d]} 1/U_i$. There is a constant $\gamma > 0$ for which for any $t \ge 1$,
$$\Pr[X \ge t d / \gamma] \le (1+o(1)) \log(td)/t.$$
\end{lemma}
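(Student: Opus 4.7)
The variables $1/U_i$ are heavy-tailed: $\Pr[1/U_i > M] = 1 - e^{-1/M} \leq 1/M$, so $\mathbb{E}[1/U_i] = \infty$ and one cannot apply Markov directly to $X$. My plan is a standard truncation argument. Fix a threshold $M > 0$ to be tuned, set $Y_i = (1/U_i)\,\mathbb{1}[1/U_i \leq M]$, and write
\[
\Pr[X \geq s] \;\leq\; \Pr\!\left[\max_i \tfrac{1}{U_i} > M\right] \;+\; \Pr\!\left[\sum_{i=1}^d Y_i \geq s\right].
\]

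The first summand I would control by a union bound: $\Pr[\max_i 1/U_i > M] \leq d(1-e^{-1/M}) \leq d/M$. For the second, the key computation is the truncated first moment
\[
\mathbb{E}[Y_i] \;=\; \int_{1/M}^{\infty} \frac{e^{-u}}{u}\,du \;\leq\; \int_{1/M}^{1}\frac{du}{u} + \int_1^{\infty} e^{-u}\,du \;\leq\; \log M + 1,
\]
where on $(1/M,1)$ I use $e^{-u} \leq 1$ and on $(1,\infty)$ I use $1/u \leq 1$. Markov's inequality then yields $\Pr[\sum_i Y_i \geq s] \leq d(\log M + 1)/s$. The logarithmic factor in $M$ is exactly the price one pays for the heavy tail of $1/U_i$.

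To finish I would take $s = td/\gamma$ and balance the two contributions by choosing $M = td$, which gives
\[
\Pr[X \geq td/\gamma] \;\leq\; \frac{1}{t} + \frac{\gamma(\log(td)+1)}{t} \;=\; \frac{\gamma \log(td) + O(1)}{t}.
\]
Any constant $\gamma \leq 1$ then yields the desired $(1+o(1))\log(td)/t$ bound as $td \to \infty$. The only delicate point is tracking constants tightly enough: the $(1+o(1))$ factor forces $M$ to be of the \emph{same} order as $td$ (not, say, $(td)^{1+\eta}$ for some $\eta > 0$), since a larger $M$ would inflate the $\log M$ factor while a smaller $M$ would blow up $d/M$; the choice $M = td$ is precisely the one that makes the union-bound contribution and the truncated Markov contribution both of order $\log(td)/t$.
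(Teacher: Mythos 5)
Your proof is correct, and it is a genuinely different route from the paper's. The paper does not argue about $1/U_i$ directly at all: it applies Lemma~\ref{lem:stable-exp} (with $z=1$, $p=1$, $y_i\equiv 1$) to dominate the tail of $\sum_i 1/U_i$ by the tail of a sum of absolute Cauchy random variables $\sum_i |C_i|$, and then cites Lemma~2.3 of \cite{CDMMMW12} for the bound $\Pr[\sum_i |C_i| \geq td] \leq (1+o(1))\log(td)/t$. Your proof instead runs the truncation/Markov argument directly on the $1/U_i$'s: cap each at $M$, bound $\Pr[\max_i 1/U_i > M] \leq d/M$, compute the truncated mean $\int_{1/M}^\infty e^{-u}/u\,du \leq \log M + 1$, and balance with $M = td$. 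Since the cited Cauchy lemma is itself proved by essentially the same truncation trick, your version short-circuits the detour through $p$-stable comparison and is both more self-contained (it doesn't invoke Lemma~\ref{lem:stable-exp} or \cite{CDMMMW12}) and more transparent about where the $\log(td)$ factor comes from. What the paper's route buys is reuse: Lemma~\ref{lem:stable-exp} is already established for other purposes (e.g., Lemma~\ref{lem:cauchy-l2} and Corollary~\ref{cor:upper}), so citing it here is free, whereas your approach requires writing out the truncated-moment computation. One small point worth making explicit in your write-up: the reduction $\Pr[X\ge s] \le \Pr[\max_i 1/U_i > M] + \Pr[\sum_i Y_i \ge s]$ uses that on the event $\{\max_i 1/U_i \le M\}$ one has $X = \sum_i Y_i$ exactly, which is what lets you replace $X$ by the truncated sum; you use this implicitly but it deserves a sentence.
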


\begin{proof}
Let $C_1, \ldots, C_d$ be $d$ independent Cauchy ($1$-stable)
random variables, and let $Y = \sum_{i \in [d]} |C_i|$.  By Lemma 2.3 in \cite{CDMMMW12} we have for any $t \ge 1$,
$$\Pr[Y \ge t d] \le (1+o(1))\log(td)/t.$$
This lemma then follows from Lemma~\ref{lem:stable-exp} (setting $z=1$, $p=1$, and $y_1 = \ldots = y_d = 1$).
\end{proof}

We next use Lemma \ref{lem:stable-exp} (setting $z = 2$, $p = 1$) to show a bound on $\Pr[Y \geq t]$ for $Y = (\sum_{i \in [d]} y_i^2 C_i^2)^{1/2}$, where we have replaced $p$-stable random variable $S_i$ with Cauchy ($1$-stable) random variable $C_i$. Let $y = (y_1, \ldots, y_d)$.

\begin{lemma}\label{lem:cauchy-l2}
There is a constant $c > 0$ so that for any $r > 0$,
$$\Pr[Y \geq r \norm{y}_1] \leq \frac{c}{r}.$$
\end{lemma}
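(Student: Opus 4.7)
The plan is to bound $\Pr[Y\ge r\|y\|_1]=\Pr[Y^2\ge r^2\|y\|_1^2]$ by a direct truncation-plus-union-bound argument on the nonnegative summands $y_i^2C_i^2$. Note that Lemma~\ref{lem:stable-exp} is \emph{not} directly applicable here: it controls exponential-type tails by stable-type tails, which is the reverse of the direction we need.

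Set $t:=r^2\|y\|_1^2$ and $Z_i:=y_i^2C_i^2$, so $Y^2=\sum_{i=1}^d Z_i$. The standard Cauchy tail estimate $\Pr[|C_i|\ge x]\le 2/(\pi x)$ immediately gives $\Pr[Z_i\ge s]\le\frac{2|y_i|}{\pi\sqrt{s}}$ for all $s>0$. I then decompose the event $\{Y^2\ge t\}$ into two sub-events: (i) some summand is already large, $\max_i Z_i\ge t/2$; or (ii) every summand is at most $t/2$ and yet the truncated sum $S:=\sum_i Z_i\,\mathbf{1}[Z_i\le t/2]$ exceeds $t$.

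Case (i) is a union bound: $\Pr[\max_i Z_i\ge t/2]\le\sum_i\frac{2\sqrt{2}\,|y_i|}{\pi\sqrt{t}}=O(\|y\|_1/\sqrt{t})=O(1/r)$. For case (ii), tail integration yields $\E[Z_i\,\mathbf{1}[Z_i\le t/2]]\le\int_0^{t/2}\Pr[Z_i\ge s]\,ds = O(|y_i|\sqrt{t})$, hence $\E[S]=O(\|y\|_1\sqrt{t})$, and Markov's inequality gives $\Pr[S\ge t]=O(\|y\|_1/\sqrt{t})=O(1/r)$. Adding the two contributions and absorbing constants into $c$ gives $\Pr[Y\ge r\|y\|_1]\le c/r$.

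The main subtlety is the decision to work directly with $Z_i=y_i^2C_i^2$ rather than with the cruder majorant $Y\le\sum_i|y_i||C_i|$. The squared-Cauchy tail $\Pr[Z_i\ge s]\propto|y_i|s^{-1/2}$ makes the truncated first moment of $Z_i$ scale exactly as $\sqrt{t}$, which matches Markov's threshold and yields a clean $1/r$ bound with no logarithmic loss. Passing instead to $\sum_i|y_i||C_i|$ and invoking an $\ell_1$-type estimate such as Lemma~2.3 of \cite{CDMMMW12} would introduce an unwanted $\log(rd)$ factor. A secondary technical point is that the tail integral $\int_0^{t/2}s^{-1/2}\,ds$ is finite (unlike $\int_0^{t/2}s^{-1}\,ds$), which is exactly why the $Z_i$-truncation, not the $|y_iC_i|$-truncation, is the right starting point.
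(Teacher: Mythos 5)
Your proof is correct, and it takes a genuinely different route from the paper's. The paper introduces auxiliary Rademacher signs $\sigma_i$ and studies $Z=\sum_i\sigma_i y_i C_i$: by $1$-stability of the Cauchy distribution (and the symmetry of Cauchy), $Z$ is distributed as $\norm{y}_1$ times a standard Cauchy, giving the tail bound $\Pr[Z\ge r\norm{y}_1]\le\frac{2}{\pi r}$; then, conditioning on the $C_i$'s and applying Paley--Zygmund to $Z^2$, it shows $\Pr[Z\ge Y/\sqrt{2}]\ge\frac{1}{12}$, so a large tail for $Y$ would force a large tail for $Z$, a contradiction. Your argument instead works directly with the nonnegative summands $Z_i=y_i^2C_i^2$, whose individual tails $\Pr[Z_i\ge s]\le\frac{2|y_i|}{\pi\sqrt{s}}$ follow from the Cauchy CDF; you split $\{Y^2\ge t\}$ into a heavy-coordinate event (handled by a union bound) and a truncated-sum event (handled by bounding $\E[Z_i\mathbf{1}[Z_i\le t/2]]=O(|y_i|\sqrt{t})$ and applying Markov). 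Both give $O(1/r)$ with explicit constants. Your approach is more elementary -- it needs neither the sign-decoupling trick nor Paley--Zygmund nor $1$-stability, only the tail of a single Cauchy and first-moment calculus -- and it extends immediately to any i.i.d.\ family with a Pareto-type tail of index one. The paper's approach is slicker and, by routing everything through the exact stability identity, avoids the truncation bookkeeping; it also isolates the one structural fact (1-stability) that the whole argument hinges on, which fits the paper's broader theme of transferring bounds between stable and exponential-reciprocal variables via Lemma~\ref{lem:stable-exp}. Your observation that Lemma~\ref{lem:stable-exp} is not applicable here (it controls exponential-type tails \emph{by} stable-type tails, not vice versa) is also correct and worth noting.
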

\begin{proof}
For $i \in [d]$, let $\sigma_i \in \{-1,+1\}$ be i.i.d.\ random variables with $\Pr[\sigma_i = -1] = \Pr[\sigma_i = 1] = 1/2$. 
Let $Z = \sum_{i \in [d]} \sigma_i y_i C_i$. We will obtain tail bounds for $Z$ in two different ways, and use this to establish
the lemma.

On the one hand, by the $1$-stability of the Cauchy distribution, we have that $Z \sim \norm{y}_1 C$, where $C$ is a standard
Cauchy random variable. Note that this holds for any fixing of the $\sigma_i$. The cumulative distribution function
of the absolute value of Cauchy distribution is $F(z) = \frac{2}{\pi} \arctan(z).$ Hence for any $r > 0$,
\begin{eqnarray*}
\Pr[Z \geq r \norm{y}_1] & \le & \Pr[\abs{C} \geq r] = 1 - \frac{2}{\pi} \arctan(r).
\end{eqnarray*}
We can use the identity 
$$\arctan(r) + \arctan \left (\frac{1}{r} \right) = \frac{\pi}{2},$$
and therefore using the Taylor series for $\arctan$ for $r > 1$,
$$\arctan(r) \geq \frac{\pi}{2} - \frac{1}{r}.$$
Hence,
\begin{eqnarray}\label{eqn:upper}
\Pr[Z \geq r \norm{y}_1] \leq \frac{2}{\pi r}.
\end{eqnarray}
On the other hand, for any fixing of $C_1, \ldots, C_d$, we have
$${\bf E}[Z^2] = \sum_{i \in [d]} y_i^2 C_i^2,$$
and also 
$${\bf E}[Z^4] = 3 \sum_{i \neq j \in [d]} y_i^2 y_j^2 C_i^2 C_j^2 + \sum_{i \in [d]} y_i^4 C_j^4.$$
We recall the Paley-Zygmund inequality.
\begin{fact}
If $R \geq 0$ is a random variable with finite variance, and $0 < \theta < 1$, then
$$\Pr[R \geq \theta {\bf E}[R]] \geq (1-\theta)^2 \cdot \frac{{\bf E}[R]^2}{{\bf E}[R^2]}.$$
\end{fact}
Applying this inequality with $R = Z^2$ and $\theta = 1/2$, we have
\begin{eqnarray*}
\Pr\left[Z^2 \geq \frac{1}{2} \cdot \sum_{i \in [d]} y_i^2 C_i^2\right] & \geq & 
\frac{1}{4} \cdot \frac{\left (\sum_{i \in [d]} y_i^2 C_i^2 \right )^2}{3 \sum_{i \neq j \in [d]} y_i^2 y_j^2 C_i^2 C_j^2 + \sum_{i \in [d]} y_i^4 C_i^4} \geq \frac{1}{12},
\end{eqnarray*}
or equivalently 
\begin{eqnarray}\label{eqn:applyPZ}
\Pr\left[Z \geq \frac{1}{\sqrt{2}} \left(\sum_{i \in [d]} y_i^2 C_i^2\right)^{1/2}\right] \geq \frac{1}{12}.
\end{eqnarray}
Suppose, towards a contradiction, that $\Pr[Y \geq r \norm{y}_1] \geq c/r$ for a sufficiently large constant $c > 0$. By independence of the $\sigma_i$ and the $C_i$, by
(\ref{eqn:applyPZ}) this implies
$$\Pr\left[Z \geq \frac{r \norm{y}_1}{\sqrt{2}}\right] \geq \frac{c}{12r}.$$
By (\ref{eqn:upper}), this is a contradiction for $c > \frac{24}{\pi}$. It follows that $\Pr[Y \geq r \norm{y}_1] < c/r$, as desired. 
\end{proof}

\begin{corollary}\label{cor:upper}
Let $y_1, \ldots, y_d \geq 0$ be scalars. 
Let $U_1, \ldots, U_d$ be $d$ independendent exponential random variables, 
and let $X = (\sum_{i \in [d]} y_i^2/U_i^2)^{1/2}$. There is a constant $c > 0$ for which for any $r > 0$,
$$\Pr[X > r \norm{y}_1] \leq c/r.$$
\end{corollary}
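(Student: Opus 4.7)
The plan is to simply chain together the two preceding lemmas. Observing that when $p=1$ the ``reciprocal of exponential to the $p$-th power'' is just $E_i = 1/U_i$, the random variable $X = \bigl(\sum_{i \in [d]} (y_i/U_i)^2\bigr)^{1/2}$ is exactly of the form $\bigl(\sum_{i \in [d]} (y_i E_i)^z\bigr)^{1/z}$ appearing in Lemma~\ref{lem:stable-exp} with the parameters $z = 2$ and $p = 1$ (both within the allowed range).

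First I would invoke Lemma~\ref{lem:stable-exp} to get a constant $\gamma > 0$ such that, for all $t > 0$,
\begin{equation*}
\Pr[X \geq t] \leq \Pr[Y \geq \gamma t], \qquad Y = \Bigl(\sum_{i \in [d]} y_i^2 C_i^2\Bigr)^{1/2},
\end{equation*}
where $C_1,\dots,C_d$ are i.i.d.\ standard Cauchy random variables. Specializing to $t = r\norm{y}_1$ yields $\Pr[X > r\norm{y}_1] \leq \Pr[Y \geq \gamma r \norm{y}_1]$.

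Next I would apply Lemma~\ref{lem:cauchy-l2} with its parameter set to $\gamma r$: this gives a constant $c' > 0$ such that $\Pr[Y \geq \gamma r \norm{y}_1] \leq c'/(\gamma r)$. Setting $c = c'/\gamma$ then gives $\Pr[X > r \norm{y}_1] \leq c/r$, which is the claimed bound. There is no real obstacle here since both required inequalities are already established; the only thing to verify is that the parameter substitution $z=2,p=1$ in Lemma~\ref{lem:stable-exp} is legal, which it is.
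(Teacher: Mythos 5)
Your proposal is correct and matches the paper's own proof: the paper likewise combines Lemma~\ref{lem:stable-exp} (with $z=2$, $p=1$) and Lemma~\ref{lem:cauchy-l2}, rescaling the constant by $1/\gamma$. Your write-up is just a more explicit spelling-out of the same two-step chaining.
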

\begin{proof}
The corollary follows by combining Lemma \ref{lem:stable-exp} with Lemma \ref{lem:cauchy-l2}, and rescaling the constant $c$ from Lemma \ref{lem:cauchy-l2}
by $1/\gamma$, where $\gamma$ is the constant of Lemma \ref{lem:stable-exp}. 
\end{proof}

\paragraph{Conventions.}
In the paper we will define several events $\mathcal{E}_0, \mathcal{E}_1, \ldots$ in the early analysis, which we will condition on in the later analysis. Each of these events holds with probability $0.99$, and there will be no more than ten of them. Thus by a union bound all of them hold simultaneously with probability $0.9$. Therefore these conditions will not affect our overall error probability by more than $0.1$. 

\paragraph{Global Parameters.} We set a few parameters which will be used throughout the paper:
$\rho = c_1 d \log d$; 
$\iota = 1/(2\rho^{1/p})$; 
$\eta = c_2  d \log d  \log n$; 
$\tau = \iota / (d\eta)$. 





\section{$p$-norm with $p > 2$}
\label{sec:big}
\subsection{Algorithm}
We set the subspace embedding matrix $\Pi = S D$, where $D \in \mathbb{R}^{n \times n}$ is a diagonal matrix with $1/u_1^{1/p}, \ldots, 1/u_n^{1/p}$ on the diagonal such that all $u_i\ (i = 1, 2, \ldots, n)$ are i.i.d.\ exponentials. And $S$ is an $(m,s)$-OSE with $(m, s) = \left(6 n^{1-2/p} \eta / \iota^2 + d^{5+4p}, 1\right)$. More precisely, we pick random hash functions $h : [n] \to [m]$ and $\sigma: [n] \to \{-1, 1\}$. For each $i \in [n]$, we set $S_{h(i), i} = \sigma(i)$. Since $m = \omega(d^2)$, by \cite{NN12} such an $S$ is an OSE. 

\subsection{Analysis}
In this section we prove the following Theorem.
\begin{theorem}
\label{thm:big-p}
Let $A \in \mathbb{R}^{n \times d}$ be an Auerbach basis of a $d$-dimensional subspace of $(\mathbb{R}^n, \norm{\cdot}_p)$. Given the above choices of $\Pi \in \mathbb{R}^{(6 n^{1-2/p} \eta / \iota^2 + d^{5+4p}) \times n}$, for any $p > 2$ we have
\begin{equation*}
\Omega(1 / (d \log d)^{1/p}) \cdot \norm{Ax}_p \le \norm{\Pi A x}_\infty \le O((d \log d)^{1/p}) \cdot \norm{Ax}_p, \quad \forall x \in \mathbb{R}^d. 
\end{equation*}
\end{theorem}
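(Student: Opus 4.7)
We prove the two inequalities independently, and in each case argue first for a single fixed $x$ with $\norm{Ax}_p = 1$ and then extend to all $x$ by a $\tau$-net on the unit $\ell_p$-sphere of $\text{range}(A)$. That net has cardinality $(1/\tau)^{O(d)} = \exp(O(d \log d \log n))$ for the parameter $\tau = \iota/(d\eta)$ declared in the Preliminaries, so the per-$x$ failure probability must be pushed down to $\exp(-\Omega(d\log d \log n))$; the choices of $\rho = c_1 d\log d$ and $\eta = c_2 d\log d \log n$ are made exactly so that the exponential tail and the random-sign concentration for $S$ reach that level.

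\textbf{Contraction for fixed $x$.} Set $w = DAx$ and let $i^\star = \arg\max_i |w_i|$. By max-stability (Property~1.1), $|w_{i^\star}|^p \sim \norm{Ax}_p^p/u$ for a single exponential $u$, so on the event $u\le \rho/2^{2p}$ (probability $1-\exp(-\Omega(d\log d))$) we have $|w_{i^\star}| \ge 2\iota\norm{Ax}_p$. It then suffices to show $|(Sw)_{h(i^\star)}|\ge\iota\norm{Ax}_p$. Writing $(Sw)_{h(i^\star)} = \sigma(i^\star) w_{i^\star} + \sum_{i\in B_{h(i^\star)}\setminus\{i^\star\}} \sigma(i) w_i$, the second sum is a Rademacher sum, which by Khintchine has magnitude $\le\iota\norm{Ax}_p$ with failure probability $\exp(-\Omega(d\log d\log n))$ as soon as its second moment $\norm{w_{B_{h(i^\star)}\setminus\{i^\star\}}}_2^2$ is at most roughly $\iota^2/\eta$. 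We bound that second moment using the Count-Sketch-style identity $\E_h[\norm{w_B}_2^2] = \norm{w}_2^2/m$ together with a concentration of $\norm{w}_2^2 = \norm{DAx}_2^2$ (which uses $\E[u^{-2/p}] < \infty$ for $p > 2$, and the $\ell_2$-OSE property of $S$ to force bucket masses to behave typically). The bucket count $m = \Omega(n^{1-2/p}\eta/\iota^2)$ is exactly sized so this goes through.

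\textbf{Dilation for fixed $x$.} Using the Auerbach bound $|(Ax)_i| \le \ell_i\norm{Ax}_p$ with $\sum_i \ell_i^p \le d$, max-stability gives $\max_i \ell_i^p/u_i \sim d/u$, whence $\norm{w}_\infty \le O((d\log d)^{1/p})\norm{Ax}_p$ conditional on a moderate lower-tail event on $u$. That single event is not strong enough for the net, so I strengthen it by truncation: arguing via binomial concentration that at most $O(\log n)$ of the $u_i$ lie below $\iota^p$ (probability $1-\exp(-\Omega(d\log d\log n))$), and for the rest $|w_i|\le 2\rho^{1/p}\ell_i\norm{Ax}_p$ deterministically. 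For each bucket $j$, $|(Sw)_j|$ is then bounded by the contributions from $\le O(\log n)$ heavy coordinates that happen to land in it, plus a Rademacher sum over the light ones; the first piece is at most $\norm{w}_\infty$ times a bucket-collision count (bounded by Chernoff, since $m$ is much larger than the number of heavies squared), and the second is controlled by Khintchine against $\norm{w_{B_j}}_2^2$ exactly as in the contraction step. A union bound over $m$ buckets yields $\norm{SDAx}_\infty \le O((d\log d)^{1/p})\norm{Ax}_p$.

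\textbf{Main obstacle.} The dilation is the delicate direction: the naive chain $\norm{Sw}_\infty \le \norm{Sw}_2 \le O(\norm{w}_2)$ has an unavoidable $n^{1/2-1/p}$ loss and is useless, so the heavy/light split with its dual Khintchine and collision bounds is essential. Matching the thresholds $\iota, \eta$ against the bucket count $m = 6n^{1-2/p}\eta/\iota^2 + d^{5+4p}$ so that both the heavy-collision contribution and the light Rademacher contribution come in below $(d\log d)^{1/p}\norm{Ax}_p$ with failure probability small enough to beat the $\exp(\Omega(d\log d))$-sized net is what drives the specific parameter choices in the theorem statement; the $d^{5+4p}$ slack in $m$ is what ensures the $\ell_2$-OSE approximation of $S$ holds with accuracy commensurate with the tiny failure probability we need.
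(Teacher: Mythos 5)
There is a genuine and fairly fundamental gap in your dilation argument, and the contraction step also leans on a claim that doesn't hold as stated.

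\textbf{Dilation: the paper does not use a net, and your net argument has a quantitative error.} The paper proves the upper bound $\norm{SDAx}_\infty \le O((d\log d)^{1/p})\norm{Ax}_p$ \emph{column by column}: it first shows, via Andoni's Bernstein-type bound, that $\max_j z_j(A_i) \le \iota$ simultaneously for the $d$ columns $A_i$ (only a $d\cdot m$ union bound), which gives $\norm{SDA_i}_\infty \le \norm{DA_i}_\infty + \iota$. Since $\norm{DA_i}_\infty \sim \norm{A_i}_p / u_i^{1/p} = 1/u_i^{1/p}$, the sum $\sum_i \norm{DA_i}_\infty^p \sim \sum_i 1/u_i$ is controlled by Lemma~\ref{lem:tail-exp}, and H\"older's inequality with the well-conditioning of $A$ gives the bound for all $x$ with $\norm{x}_1=1$ at once. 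No net over $x$ is required for the dilation. Your plan instead runs a net over all $x$; but that forces a bootstrapping step (you need a dilation bound to control $\norm{SD(y-y')}_\infty$ for the off-net remainder, which is exactly what you are trying to prove), and it also relies on the claim that with probability $1-\exp(-\Omega(d\log d\log n))$ at most $O(\log n)$ of the $u_i$ fall below $\iota^p$. That claim is false: $\Pr[u_i < \iota^p] \approx \iota^p = \Theta(1/(d\log d))$, so the expected number of such indices is $\Theta(n/(d\log d))$, which for $n \gg \poly(d)$ is far larger than $O(\log n)$ and no Chernoff bound will rescue the statement. (The paper's heavy set $H$ is defined by $\ell_i/u_i^{1/p} \ge \tau$, whose expected size is $\poly(d, \log n)$, not $O(\log n)$, and its role is only to guarantee heavy coordinates are perfectly hashed, i.e., $\sqrt{m} = \omega(|H|)$.)

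\textbf{Contraction: the per-bucket second-moment control cannot come from the $\ell_2$-OSE property.} Your plan is structurally close to the paper's (isolate the argmax coordinate $i^\star$, show its bucket's ``noise'' is at most $\iota\norm{y}_p$ with failure $e^{-\eta}$, union over buckets, then a net). However, you bound the bucket second moment by combining $\E_h[\norm{w_{B}}_2^2]=\norm{w}_2^2/m$ with ``the $\ell_2$-OSE property of $S$ to force bucket masses to behave typically.'' The $\ell_2$-OSE property only says $\norm{SDAx}_2 \approx \norm{DAx}_2$; it does not control the mass of any individual bucket, and in particular does not exclude a single other heavy coordinate landing in bucket $h(i^\star)$ and wrecking the Khintchine estimate. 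The paper deals with this head on: the event $\mathcal{E}_1$ (no two indices of $H$ collide, enabled by the $d^{5+4p}$ slack in $m$ via birthday paradox) guarantees $i^\star$'s bucket contains no other heavy coordinate, and Andoni's Lemma~\ref{lem:alex} then applies Bernstein's inequality to the \emph{light} coordinates using their $\ell_\infty$ bound $d\tau\norm{y}_p$ and a per-bucket variance bound of $n^{1-2/p}/m$ (which itself comes from $\norm{y}_p \le 1 \Rightarrow \norm{y}_2^2 \le n^{1-2/p}$, not from an OSE). Without an explicit heavy/light split and a no-collision event for heavies, the per-bucket second-moment control you invoke has no justification.

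In short: your high-level instincts (argmax isolation, Bernstein/Khintchine in each bucket, choose $m, \eta, \iota, \rho$ to beat a net) match the paper's contraction argument, but you have transplanted the net to the dilation side where it is both unnecessary and breaks, and you have replaced the paper's heavy/light + no-collision mechanism with an OSE appeal that does not do the work you need.
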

\begin{remark}
\label{rem:auerbach}
Note that since the inequality holds for all $x \in \mathbb{R}^d$, this theorem also holds if we replace the Auerbach basis $A$ by any matrix $M$ whose column space is a $d$-dimensional subspace of $(\mathbb{R}^n, \norm{\cdot}_p)$.
\end{remark}

\begin{property}
\label{prop:y}
Let $A \in \mathbb{R}^{n \times d}$ be a $(d^{1/p}, 1, p)$-well-conditioned Auerbach basis. For an $x \in \mathbb{R}^d$, let $y = Ax \in \text{range}(A) \subseteq \mathbb{R}^n$.  Each such $y$ has the following properties. Recall that we can assume $\norm{x}_1 = 1$.
\begin{enumerate}
\item $\norm{y}_p \le \sum_{i \in d} \norm{A_i}_p \cdot \abs{x_i} = \norm{x}_1 = 1$.

\item $\norm{y}_p = \norm{Ax}_p \ge \norm{x}_q \ge \norm{x}_1 / d^{1-1/q} = 1/d^{1/p}$. 

\item For all $i \in [n]$,
$\abs{y_i} = \abs{(A^i)^T x} \le \norm{A^i}_1 \cdot \norm{x}_\infty \le d^{1-1/p} \norm{A^i}_p \cdot \norm{x}_1 = d^{1-1/p} \ell_i.$
\end{enumerate}
\end{property}

Let $H$ be the set of indices $i \in [n]$ such that $\ell_i / u_i^{1/p} \ge \tau$. Let $L = [n] \backslash H$. Then
\begin{eqnarray*} 
\E[\abs{H}]  &=& \textstyle \sum_{i \in [n]} \Pr[\ell_i / u_i^{1/p} \ge \tau] \\
&=& \textstyle \sum_{i \in [n]} \Pr[u_i \le \ell_i^p / \tau^p] \\
&\le& \textstyle \sum_{i \in [n]}  c_e \ell_i^p / \tau^p \quad \text{(Property~\ref{prop:exp})}\\ 
&\le& \textstyle c_e d / \tau^p. \quad (\sum_{i \in [n]} \ell_i^p = \norm{A}_p^p \le d)
\end{eqnarray*}
Therefore with probability $0.99$, we have $\abs{H} \le 100 c_e d / \tau^p$. Let $\mathcal{E}_0$ denote this event, which we will condition on in the rest of the proof.

For a $y \in \text{range}(A)$, let $w_i = 1/u_i^{1/p} \cdot y_i$. For all $i \in L$, we have 
\begin{equation*}
\abs{w_i} = 1/u_i^{1/p} \cdot \abs{y_i} \le d^{1-1/p} \ell_i / u_i^{1/p} < d^{1-1/p} \tau \le d^{1-1/p} \tau \cdot d^{1/p} \norm{y}_p = d \tau \norm{y}_p. 
\end{equation*}
In the first and third inequalities we use Property~\ref{prop:y}, and the second inequality follows from the definition of $L$.
For $j \in [m]$, let 
$$z_j(y) = \sum_{i : (i \in L) \wedge (h(i)=j)} \sigma(j) \cdot w_i.$$ 
Define $\mathcal{E}_1$ to be the event that for all $i, j \in H$, we have $h(i) \neq h(j)$. The rest of the proof conditions on $\mathcal{E}_1$. The following lemma is implicit in \cite{Andoni12}. 

\begin{lemma}[\cite{Andoni12}]
\label{lem:alex}

\begin{enumerate}
\item Assuming that $\mathcal{E}_0$ holds, $\mathcal{E}_1$ holds with probability at least $0.99$. 

\item For any $\iota > 0$, for all $j \in [m]$,
\begin{equation*}
\Pr[\abs{z_j(y)} \ge \iota \norm{y}_p] \le \exp \left[- \frac{\iota^2/2}{n^{1-2/p}/m + \iota d\tau/3} \right] = e^{-\eta}.
\end{equation*}
\end{enumerate}
\end{lemma}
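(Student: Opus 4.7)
For the first claim I plan a standard birthday argument. Conditional on $\mathcal{E}_0$ we have $\abs{H} \le 100 c_e d/\tau^p$, and $\mathcal{E}_1$ fails exactly when some pair in $H$ collides under $h$. A union bound over the $\binom{\abs{H}}{2}$ pairs, each colliding with probability $1/m$, gives $\Pr[\neg \mathcal{E}_1 \mid \mathcal{E}_0] \le \abs{H}^2/(2m)$. Plugging in $\tau = \iota/(d\eta)$ we get $\abs{H}^2 = \tilde{O}(d^{4p+4})$, while $m \ge d^{5+4p}$, so this ratio is $\tilde{O}(1/d)$ and lies below $0.01$ for the parameter regime in play.

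For the second claim, the plan is to apply Bernstein's inequality to the independent summands
\begin{equation*}
T_i \;:=\; \sigma(i)\, w_i\,\mathbf{1}[i \in L,\ h(i)=j], \qquad i \in [n],
\end{equation*}
so that $z_j(y) = \sum_i T_i$. Each $T_i$ depends only on $(u_i, h(i), \sigma(i))$, so the summands are independent across $i$; each has mean zero by the symmetric Rademacher sign $\sigma(i)$; and $\abs{T_i} \le d\tau \norm{y}_p$ on the event $i \in L$ by the bound $\abs{w_i} \le d\tau \norm{y}_p$ that the paper has already established just before the lemma statement. For the variance, since $\mathbf{1}[i \in L] = \mathbf{1}[u_i > (\ell_i/\tau)^p]$ depends only on $u_i$, I can compute
\begin{equation*}
V \;=\; \sum_i \E[T_i^2] \;=\; \frac{1}{m}\sum_i \E\!\left[\frac{y_i^2}{u_i^{2/p}}\,\mathbf{1}[i \in L]\right] \;\le\; \frac{\Gamma(1-2/p)}{m}\norm{y}_2^2 \;\le\; \frac{\Gamma(1-2/p)\,n^{1-2/p}}{m}\norm{y}_p^2,
\end{equation*}
using $\E[u_i^{-2/p}] = \Gamma(1-2/p) = O(1)$ for $p > 2$ together with $\norm{y}_2 \le n^{1/2-1/p}\norm{y}_p$ (H\"older). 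Bernstein's inequality with $t = \iota \norm{y}_p$, $M = d\tau \norm{y}_p$, and $V$ as above produces the stated exponent, and substituting $m = 6 n^{1-2/p}\eta/\iota^2$ and $\tau = \iota/(d\eta)$ then yields $n^{1-2/p}/m = \iota^2/(6\eta)$ and $\iota d\tau/3 = \iota^2/(3\eta)$, giving exponent exactly $\eta$ and matching $e^{-\eta}$.

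The main subtlety is the independence structure: the argument hinges on $L$ being defined index-by-index, so that $\mathbf{1}[i \in L]$ sits cleanly inside the $i$-th summand without coupling across $i$. Once that is noticed, the rest reduces to (a) the finite $(2/p)$-th moment of an exponential, available precisely because $p>2$, which controls $V$, and (b) one clean application of Bernstein; no additional concentration step or union bound over $j$ is needed for the per-bucket statement.
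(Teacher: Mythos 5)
Your proof is correct and follows essentially the same approach the paper sketches (birthday paradox for the first item, Bernstein's inequality applied to the mean-zero terms $\sigma(i)w_i\mathbf{1}[i\in L,\,h(i)=j]$ for the second, with the full details deferred to \cite{Andoni12}). One small arithmetic note: your variance bound carries a factor $\Gamma(1-2/p)$ from $\E[u^{-2/p}]$, and the two-sided Bernstein bound carries a leading factor of $2$, so the bound exactly as written, $\le e^{-\eta}$, only holds after absorbing these universal constants into the constant $6$ in the definition of $m$; this does not affect the conclusion or the parameter regime.
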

\begin{proof} (sketch, and we refer readers to \cite{Andoni12} for the full proof). The first item simply follows from the birthday paradox; note that by our choice of $m$ we have $\sqrt{m} = \omega(d / \tau^p)$. For the second item, we use Bernstein's inequality to show that for each $j \in [m]$, $z_j(y)$ is tightly concentrated around its mean, which is $0$.
\end{proof}

\subsubsection{No Overestimation}
By Lemma~\ref{lem:alex} we have that with probability $(1 - m \cdot d \cdot e^{-\eta}) \ge 0.99$, $\max_{j \in [m]} z_j(A_i) \le \iota \norm{A_i}_p = \iota$ for all $i \in [d]$. Let $\mathcal{E}_2$ denote this event, which we condition on. Note that $A_i \in \text{range}(A)$ for all $i \in [d]$. Thus,
\begin{eqnarray}
\norm{SDAx}_\infty &\le& \textstyle \sum_{i \in [d]}\norm{SDA_i}_\infty \cdot \abs{x_i} \nonumber \\
&\le& \textstyle \sum_{i \in [d]} \left(\norm{DA_i}_\infty + \max_{j \in [m]} z_j(A_i)\right) \cdot \abs{x_i} \label{eq:b-1} \quad (\text{conditioned on } \mathcal{E}_1) \nonumber \\
&\le& \textstyle \sum_{i \in [d]} (\norm{DA_i}_\infty \cdot \abs{x_i})+ \iota \cdot \norm{x}_1, \quad (\text{conditioned on } \mathcal{E}_2) \label{eq:b-10}
\end{eqnarray}
Let $v_i = \norm{DA_i}_\infty$ and $v = \{v_1, \ldots, v_d\}$. By H\"older's inequality, we have $$\textstyle \sum_{i \in [d]} (\norm{DA_i}_\infty \cdot \abs{x_i}) = \sum_{i \in [d]} (v_i \cdot \abs{x_i}) \le \norm{v}_p \norm{x}_q.$$ 
We next bound $\norm{v}_p$:
$$\textstyle \norm{v}_p^p = \sum_{i \in [d]} \norm{DA_i}_\infty^p \sim \sum_{i \in [d]} \norm{A_i}_p^p / u_i = \sum_{i \in [d]} 1 / u_i,$$
where each $u_i\ (i \in [d])$ is an exponential. By Lemma~\ref{lem:tail-exp} we know that with probability $0.99$, 
$\sum_{i \in [d]} 1 / u_i \le 200/\kappa_1 \cdot d\log d$, thus $\norm{v}_p \le (200 / \kappa_1 \cdot d\log d)^{1/p}$. Denote this event by $\mathcal{E}_3$ which we condition on. Thus,
\begin{eqnarray}
(\ref{eq:b-10}) & \le & \norm{v}_p \norm{x}_q  + \iota  \norm{x}_1 \nonumber \\
&\le& (200/\kappa_1 \cdot d\log d)^{1/p} \norm{x}_q + \iota d^{1-1/q} \norm{x}_q \quad \text{(conditioned on $\mathcal{E}_3)$} \nonumber \\
&\le &  2(200/\kappa_1 \cdot d\log d)^{1/p} \norm{x}_q \quad (\iota < 1/d^{1/p}) \nonumber \\
& \le & 2(200/\kappa_1 \cdot d\log d)^{1/p} \cdot \norm{Ax}_p. \quad (A \text{ is $(d^{1/p}, 1, p)$-well-conditioned}) \label{eq:e-1}
\end{eqnarray}

\subsubsection{No Underestimation}    
\label{sec:big-p-no-under}
In this section we lower bound $\norm{SDAx}_\infty$, or $\norm{SDy}_\infty$, for all $y \in \text{range}(A)$. 
For a fixed $y \in \text{range}(A)$, by the triangle inequality
\begin{eqnarray*}
\norm{SDy}_\infty &\ge& \textstyle \norm{Dy}_\infty - \max_{j \in [m]} z_j(y).
\end{eqnarray*}
By Lemma~\ref{lem:alex} we have that with probability $(1 - m \cdot e^{-\eta})$, $z_j(y) \le \iota \norm{y}_p$ for all $j \in [m]$.  We next bound $\norm{Dy}_\infty$. By Property~\ref{prop:exp}, it holds that $\norm{Dy}_\infty \sim \norm{y}_p / v^{1/p}$, where $v$ is an exponential. Since $\Pr[v \ge \rho] \le e^{-\rho}$ for an exponential $v$, with probability $(1 - e^{-\rho})$ we have
\begin{eqnarray}
\label{eq:a-0}
\norm{Dy}_\infty & \ge & 1/\rho^{1/p} \cdot \norm{y}_p,  \quad \forall y \in \text{range}(A). 
\end{eqnarray}
Therefore, with probability $(1 - m \cdot e^{-\eta} - e^{-\rho}) \ge (1 - 2 e^{-\rho})$, 
\begin{eqnarray}
\label{eq:a-1}
\norm{SDy}_\infty &\ge& \norm{Dy}_\infty - \iota \norm{y}_p \ge 1/(2\rho^{1/p}) \cdot \norm{y}_p. 
\end{eqnarray}

Given the above ``for each" result (for each $y$, the bound holds with probability $1 - 2e^{-\rho}$), we next use a standard net-argument to show 
\begin{equation}
\label{eq:a-3}
\norm{SDy}_\infty \ge \Omega\left(1 / \rho^{1/p} \cdot \norm{y}_p\right),  \quad \forall y \in \text{range}(A).
\end{equation}

Let the ball $B = \{y \in \mathbb{R}^n\ |\ y = Ax, \norm{x}_1 = 1\}$. By Property~\ref{prop:y} we have $\norm{y}_p \le 1$ for all $y \in B$. Call $B_\eps \subseteq B$ an $\eps$-net of $B$ if for any $y \in B$, we can find a $y' \in B_\eps$ such that $\norm{y - y'}_p \le \eps$. It is well-known that $B$ has an $\eps$-net of size at most $(3/\eps)^d$ \cite{BLM89}.  We choose $\eps = 1/(8(200/\kappa_1 \cdot \rho d^2 \log d)^{1/p}$, then with probability 
\begin{eqnarray}
1 - 2e^{-\rho} \cdot (3/\eps)^d & = & 1 - 2 e^{-c_1 d \log d} \cdot \left(24 (200/\kappa_1 \cdot c_1 d \log d \cdot d^2  \log d)^{1/p}\right)^d \nonumber \\
&\ge& 0.99, \quad (c_1 \text{ sufficiently large}) \nonumber
\end{eqnarray}
$\norm{SDy'}_\infty \ge 1/(2\rho^{1/p}) \cdot \norm{y'}_p $ holds for all $y' \in B_\eps$. Let $\mathcal{E}_4$ denote this event which we condition on.

Now we consider $\{y\ |\ y \in B \backslash B_\eps\}$. Given any $y \in B \backslash B_\eps$, let $y' \in B_\eps$ such that $\norm{y - y'}_p \le \eps$. By the triangle inequality we have
\begin{eqnarray}
\label{eq:a-2}
\norm{SDy}_\infty &\ge& \norm{SDy'}_\infty - \norm{SD(y  -y')}_\infty.
\end{eqnarray}
Let $x'$ be such that $Ax' = y'$. Let $\tilde{x} = x - x'$. Let $\tilde{y} = A\tilde{x} = y - y'$. Thus $\norm{\tilde{y}}_p  = \norm{A\tilde{x}}_p \le \eps$.
\begin{eqnarray}
\norm{SD(y - y')}_\infty & = & \norm{SDA\tilde{x}}_\infty \nonumber \\
&\le& 2(200/\kappa_1 \cdot d\log d)^{1/p} \cdot \norm{A\tilde{x}}_p \quad (\text{by } (\ref{eq:e-1})) \nonumber \\
&\le& 2(200/\kappa_1 \cdot d\log d)^{1/p}  \cdot \eps. \nonumber \\
&\le& 2(200/\kappa_1 \cdot d\log d)^{1/p} \cdot \eps \cdot d^{1/p} \cdot \norm{y}_p \quad (\text{by Property~\ref{prop:y}}) \nonumber \\
&=& 1/(4\rho^{1/p}) \cdot \norm{y}_p \quad (\eps = 1/(8(200/\kappa_1 \cdot \rho d^2 \log d)^{1/p})  \label{eq:g-1}
\end{eqnarray}
By (\ref{eq:a-1}),  (\ref{eq:a-2}) ,  (\ref{eq:g-1}), conditioned on $\mathcal{E}_4$,  we have for all $y \in \text{range}(A)$, it holds that
\begin{eqnarray*}
\norm{SDy}_\infty \ge 1/(2\rho^{1/p}) \cdot \norm{y}_p - 1/(4\rho^{1/p}) \cdot \norm{y}_p \ge 1/(4\rho^{1/p}) \cdot \norm{y}_p.
\end{eqnarray*}

Finally, Theorem \ref{thm:big-p} follows from inequalities (\ref{eq:e-1}), (\ref{eq:a-3}), and our choice of $\rho$.

\section{$p$-norm with $1 \le p \le 2$}
\label{sec:small}
\subsection{Algorithm}
Our construction of the subspace embedding matrix $\Pi$ is  similar to that for  $p$-norms with $p > 2$: We again set $\Pi = S D$, where $D$ is an $n \times n$ diagonal matrix with $1/u_1^{1/p}, \ldots, 1/u_n^{1/p}$ on the diagonal, where $u_i\ (i = 1, \ldots, n)$ are i.i.d.\ exponentials. The difference is that this time we choose $S$ to be an $(m,s)$-OSE with $(m, s) = \left(O(d^{1+\gamma}), O(1)\right)$ from \cite{NN12} ($\gamma$ is an arbitrary small constant). More precisely, we first pick random hash functions $h : [n] \times [s] \to [m/s], \sigma: [n] \times [s] \to \{-1, 1\}$. 
For each $(i, j) \in [n] \times [s]$, we set $S_{(j-1)s+h(i,j), i} = \sigma(i,j)/\sqrt{s}$, where $\sqrt{s}$ is just a normalization factor.

\subsection{Analysis}
In this section we prove the following theorem.
\begin{theorem}
\label{thm:small-p}
Let $A$ be an Auerbach basis of a $d$-dimensional subspace of $(\mathbb{R}^n, \norm{\cdot}_p)\ (1 \le p < 2)$. Given the above choices of $\Pi \in \mathbb{R}^{O(d^{1+\gamma}) \times n}$, with probability $2/3$,
\begin{equation*}
\Omega\left(\max\left\{1 / {(d \log d \log n)^{\frac{1}{p}-\frac{1}{2}}}, 1 / (d \log d)^{1/p}\right\}\right) \cdot \norm{Ax}_p \le \norm{\Pi A x}_2 \le O((d \log d)^{1/p}) \cdot \norm{Ax}_p, \ \  \forall x \in \mathbb{R}^d. 
\end{equation*}
\end{theorem}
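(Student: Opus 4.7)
Plan. The proof mirrors the three-step template of Theorem~\ref{thm:big-p} in Section~\ref{sec:big}, but with $\|\cdot\|_\infty$ replaced by $\|\cdot\|_2$ and the range of $p$ shifted to $[1,2)$. First, conditioning on the random diagonal matrix $D$, the subspace $D\cdot\text{range}(A) \subset \mathbb{R}^n$ is a fixed $d$-dimensional subspace, so the $(O(d^{1+\gamma}), O(1))$-OSE property of $S$ guarantees, with probability $0.99$, that $\tfrac12\|Dy\|_2 \le \|SDy\|_2 \le \tfrac32\|Dy\|_2$ for all $y$ in this subspace. Condition on this event henceforth. It remains to bound $\|Dy\|_2$ from above uniformly in $y = Ax$, and from below per $y$ (followed by a net).

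\emph{No overestimation.} I will apply the triangle and H\"older inequalities to write $\|DAx\|_2 \le \sum_{i\in[d]} |x_i|\,\|DA_i\|_2 \le \|x\|_q \bigl(\sum_i \|DA_i\|_2^p\bigr)^{1/p}$, use the vector inequality $\|DA_i\|_2 \le \|DA_i\|_p$ (valid since $p\le 2$), and collapse the column sum into $\sum_i \|DA_i\|_2^p \le \|DA\|_p^p = \sum_{j\in[n]} \ell_j^p/u_j$. Since the weights satisfy $\sum_j \ell_j^p = \|A\|_p^p \le d$ by the Auerbach property, a weighted version of Lemma~\ref{lem:tail-exp}, derived by plugging $z=p=1$ into Lemma~\ref{lem:stable-exp} and invoking the weighted Cauchy tail bound of \cite{CDMMMW12}, gives $\sum_j \ell_j^p/u_j = \tilde O(d)$ with constant probability. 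Combined with $\|x\|_q \le \|Ax\|_p$, this yields $\|SDAx\|_2 \le O((d\log d)^{1/p})\,\|Ax\|_p$ uniformly in $x$, which is the expansion in the theorem.

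\emph{No underestimation, basic bound.} For the factor $\Omega(1/(d\log d)^{1/p})$, I will fall back on the $\ell_\infty$-estimator exactly as in Section~\ref{sec:big-p-no-under}: for a fixed $y$, $\|SDy\|_2 \ge \tfrac12\|Dy\|_2 \ge \tfrac12\|Dy\|_\infty$, and by max-stability $\|Dy\|_\infty \sim \|y\|_p/u^{1/p}$ for a single exponential $u$, so $\Pr[\|SDy\|_2 \le \|y\|_p/(2\rho^{1/p})] \le e^{-\rho}$. Taking $\rho = \Theta(d\log d)$ and an $\epsilon$-net of $B = \{Ax : \|x\|_1 = 1\}$ of size $(3/\epsilon)^d$ with $\epsilon$ polynomially small in $d$, the union bound over the net succeeds, and residuals on non-net points are absorbed by the expansion bound above (by the same triangle-inequality argument as after (\ref{eq:a-2})).

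\emph{No underestimation, improved bound.} This is the step I expect to be the main obstacle, since the plain $\|Dy\|_\infty$ tail only delivers failure probability $e^{-r^p}$ at scale $r$, which is too weak to beat $1/(d\log d)^{1/p}$. To replace the exponent $1/p$ by $1/p - 1/2$ in the distortion, the per-$y$ lower tail of $\|Dy\|_2^2 = \sum_j y_j^2/u_j^{2/p}$ must be genuinely sharpened using the $\ell_2$ aggregation of many coordinates. The plan is to split $[n]$ into a heavy set $H$ (where $|y_j|/u_j^{1/p}$ is large, still handled by max-stability as before) and a light set $L$ on which the truncated variables $y_j^2/u_j^{2/p}$ are bounded by $\Theta(\log n)$ so as to have finite moments; Bernstein's inequality applied to $\sum_{j\in L} y_j^2/u_j^{2/p}$ then gives a sub-Gaussian-type lower tail for $\|Dy\|_2^2$, with the $\log n$ factor in the final distortion coming from the truncation threshold. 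Feeding this sharper per-$y$ failure probability into the same net and residual argument as above yields the claimed $\Omega(1/(d\log d\log n)^{1/p-1/2})$. The delicate point is choosing the heavy/light threshold so that the heavy piece is supported on few indices (so that no $\|y\|_p^p$ mass is wasted there) while the light piece has moments controlled enough for Bernstein to improve on the trivial $e^{-r^p}$ tail; carrying this balance uniformly across all $y \in \text{range}(A)$, and threading the expansion bound to absorb the resulting residuals, is the technical crux.
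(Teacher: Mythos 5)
Your high-level template (reduce to $\|Dy\|_2$ via the OSE property, then bound expansion and contraction separately with a net) matches the paper. The basic contraction bound $\Omega(1/(d\log d)^{1/p})$ is argued exactly as in Section~\ref{sec:small-p-no-under} and is correct. The two places where you deviate are the expansion bound and the improved contraction bound, and both have issues.

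\emph{Expansion.} Your direct route --- $\|DAx\|_2 \le \|x\|_q (\sum_i \|DA_i\|_2^p)^{1/p}$, then $\|DA_i\|_2 \le \|DA_i\|_p$, then control $\sum_{j\in[n]} \ell_j^p/u_j$ --- is simpler and more self-contained than the paper's, which compares $\|SDy\|_2$ to $\|S'D'y\|_2$ for a $p$-stable diagonal $D'$ via Lemma~\ref{lem:stable-exp} and then invokes the MM12 uniform dilation bound $\|S'_iD'y\|_p \le O((d\log d)^{1/p})\|y\|_p$ (event $\mathcal{E}_6$). But your route overclaims the constant. The random variable $\sum_{j\in[n]}\ell_j^p/u_j$ is a weighted sum of $n$ reciprocals of exponentials with weights summing to $d$; the weighted Cauchy tail from \cite{CDMMMW12} applied through Lemma~\ref{lem:stable-exp} gives $\Pr[\sum_j \ell_j^p/u_j > td] \lesssim \log(nt)/t$, since $n$ Cauchy variables are involved. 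To make this $\le 0.01$ you need $t = \Theta(\log n)$, so what you actually get is $\sum_j \ell_j^p/u_j = O(d\log n)$, not $\tilde{O}(d)$ in the paper's sense of $d\,\mathrm{poly}\log d$. This is tight in general (take all $\ell_j^p = d/n$). So your argument yields $\|SDAx\|_2 \le O((d\log n)^{1/p})\|Ax\|_p$, weaker than the stated $O((d\log d)^{1/p})$ when $n\gg d$. The $\log d$ (rather than $\log n$) in the paper comes precisely from first aggregating the $n$ coordinates into $\mathrm{poly}(d)$ buckets via $S'$ and then using the subspace structure, which your route bypasses.

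\emph{Improved contraction.} Your plan (heavy/light split, truncate the light part, Bernstein) is a genuinely different strategy from the paper's, which partitions coordinates of $y$ by magnitude into $O(\log n)$ geometric levels, treats the top-$K$ case via a leverage-score argument ($1/u_i \ge 1/(10\log d)$ on the set $H$ of heavy leverage scores), and treats the rest via a per-level Chernoff bound plus a power-mean inequality that produces the $\log n$ factor. Your sketch does not yet close: Bernstein on the truncated light sum gives a vacuous bound precisely when the mass of $y$ is concentrated on $O(d\log d)$ coordinates (then $\mu \approx M$ and the exponent is $O(1)$), which is exactly the regime the paper's Case~1 handles with the leverage-score argument. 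You do mention a heavy set $H$ ``still handled by max-stability,'' but max-stability only yields the $(d\log d)^{1/p}$ rate, not the $(d\log d\log n)^{1/p-1/2}$ rate, so the heavy case needs its own argument. You also do not explain where the $\log n$ arises in your framework (in the paper it comes from distributing $\ell_p^p$-mass across up to $\log n$ levels and applying convexity). As written this part remains a plan rather than a proof.
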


Same as Remark~\ref{rem:auerbach}, since the inequality holds for all $x \in \mathbb{R}^d$, the theorem holds if we replace the Auerbach basis $A$ by any matrix $M$ whose column space is a $d$-dimensional subspace of $(\mathbb{R}^n, \norm{\cdot}_p)$. The embedding $\Pi M$ can be computed in time $O(\nnz(M) + \tilde{O}(d^{2+\gamma}))$.

\begin{remark}
\label{rem:inter-norm}
Using the inter-norm inequality $\norm{\Pi A x}_2 \le \norm{\Pi A x}_p \le d^{(1+\gamma)(1/p-1/2)}\norm{\Pi A x}_2, \  \forall p \in [1, 2)$, we can replace the $2$-norm estimator by the $p$-norm estimator in Theorem \ref{thm:small-p} by introducing another $d^{(1+\gamma)(1/p-1/2)}$ factor in the distortion. We will remove this extra factor for $p = 1$ below. 
\end{remark}

In the rest of the section we prove Theorem~\ref{thm:small-p}.
Define $\mathcal{E}_5$ to be the event that $\norm{S D A x}_2 = (1 \pm 1/2) \norm{D A x}_2$ for any $x \in \mathbb{R}^d$. Since $S$ is an OSE, $\mathcal{E}_5$ holds with probability $0.99$. 

\subsubsection{No Overestimation}
We can write $S = \frac{1}{\sqrt{s}}  (S_1, \ldots, S_s)^T$, where each $S_i \in \mathbb{R}^{(m/s) \times n}$ with one $\pm 1$ on each column in a random row. Let $S' \sim S$, and we also write $S' = \frac{1}{\sqrt{s}}  (S'_1, \ldots, S'_s)^T$.
 Let $D' \in \mathbb{R}^{n \times n}$ be a diagonal matrix with i.i.d.\ $p$-stable random variables on the diagonal. Let $\mathcal{E}'_5$ to be the event that $\norm{S' D' A x}_2 = (1 \pm 1/2) \norm{D' A x}_2$ for any $x \in \mathbb{R}^d$, which holds with probability $0.99$. 
 
For any $x \in \mathbb{R}^d$, let $y = Ax \in \mathbb{R}^n$. Let $\mathcal{E}_6$ be the event that for all $i \in [s]$, $\norm{S'_i D'y}_p \le c_4 (d\log d)^{1/p} \cdot \norm{y}_p$ for all $y \in \text{range}(A)$, where $c_4$ is a constant. Since $s = O(1)$ and $S'_1, \ldots, S'_s$ are independent, we know by \cite{MM12} (Sec. A.2 in \cite{MM12}) that $\mathcal{E}_6$ holds with probability $0.99$. 

The following deductions link the tail of $\norm{SDy}_2$ to the tail of $\norm{S'D'y}_2$.
\begin{eqnarray}
\Pr_{S,D}[\norm{SDy}_2 > t] &=& \Pr_D [\Pr_S [\norm{SDy}_2 > t] ] \nonumber\\
&\le& \Pr_D [\Pr_S [\norm{SDy}_2 > t\ |\ \mathcal{E}_5] \cdot \Pr_S[\mathcal{E}_5] + \Pr_S[\neg \mathcal{E}_5]] \nonumber\\
&\le& \Pr_D [\norm{Dy}_2 > t/2] \cdot 0.99 + 0.01 \nonumber\\
&\le& \Pr_{D'} \left[\norm{D'y}_2 > \gamma t/2\right] + 0.01 \quad (\text{Lemma~\ref{lem:stable-exp}}) \nonumber\\
&\le& \left(\Pr_{D'} \left[\Pr_{S'} \left[\norm{S' D' y}_2 > \gamma t/4\ |\ \mathcal{E}'_5 \right]\right] \cdot \Pr_{S'} [\mathcal{E}'_5] + \Pr_{S'} [\neg \mathcal{E}'_5] \right) + 0.01 \nonumber\\
&\le& \left(\Pr_{D'} \left[\Pr_{S'} \left[\norm{S' D' y}_2 > \gamma t/4\ |\ \mathcal{E}'_5 \right]\right] \cdot 0.99 + 0.01 \right) + 0.01 \nonumber\\
&\le& \Pr_{D'} \left[\Pr_{S'} \left[\norm{S' D' y}_2 > \gamma t/4\ |\ \mathcal{E}'_5 \right]\right] +0.02 \nonumber\\
&\le& \Pr_{D', S'} \left[\norm{S' D' y}_2 > \gamma t/4\ |\ \mathcal{E}'_5, \mathcal{E}_6 \right] +0.03. \label{eq:stable-exp-1}
\end{eqnarray}
We next analyze $\norm{S' D' y}_2$ conditioned on $\mathcal{E}_6$.
\begin{eqnarray}
\norm{S' D' y}_2  &\le& \norm{S' D' y}_p \nonumber \\
&\le& \frac{1}{\sqrt{s}} \sum_{i \in [s]} \norm{S'_i D'y}_p  \quad (\text{triangle inequality}) \nonumber \\
&\le& \frac{1}{\sqrt{s}} \cdot s \cdot c_4 (d\log d)^{1/p} \cdot \norm{y}_p \quad (\text{conditioned on } \mathcal{E}_6) \nonumber \\
&\le& c'_5 (d\log d)^{1/p} \cdot \norm{y}_p, \quad (c'_5 \text{ sufficiently large; note that } s = O(1)) \label{eq:exp-tail-1} 
\end{eqnarray}
Setting $\gamma t/4 = c'_5 (d \log d)^{1/p} \cdot \norm{y}_p$, or, $t = c_5 (d \log d)^{1/p} \cdot \norm{y}_p$ where $c_5 = 4c'_5/\gamma$, we have
\begin{eqnarray*}
&&\Pr_{S,D}[\norm{SDy}_2 > c_5 (d \log d)^{1/p} \cdot \norm{y}_p] \nonumber \\
&\le& \Pr_{D', S'} \left[\norm{S' D' y}_2 >  c'_5 (d \log d)^{1/p} \cdot \norm{y}_p\ |\ \mathcal{E}'_5, \mathcal{E}_6 \right] +0.03 \quad \text{(by (\ref{eq:stable-exp-1}))} \nonumber \\
&=& 0.03. \quad \text{(by (\ref{eq:exp-tail-1}))}  
\end{eqnarray*}
Let $\mathcal{E}_8$ be the event that 
\begin{equation}
\norm{S D y}_2 \le  c_5 (d \log d)^{1/p} \cdot \norm{y}_p,
\label{eq:h-3}
\end{equation}
which we condition on in the rest of the analysis. Note that $\mathcal{E}_8$ holds with probability $0.97$ conditioned on $\mathcal{E}'_5$ and $\mathcal{E}_6$ holds.

\subsubsection{No Underestimation}
\label{sec:small-p-no-under}
For any $x \in \mathbb{R}^d$, let $y = Ax \in \mathbb{R}^n$.
\begin{eqnarray}
\norm{SDy}_2 &\ge& 1/2 \cdot \norm{Dy}_2 \quad (\text{conditioned on } \mathcal{E}_5) \nonumber \\ 
&\ge& 1/2 \cdot \norm{Dy}_\infty \sim 1/2 \cdot \norm{y}_p / u \quad \text{($u$ is exponential)} \nonumber \\
&\ge& 1/2 \cdot 1/\rho^{1/p} \cdot \norm{y}_p.  \quad (\text{By (\ref{eq:a-0}), holds w.pr. $(1 - e^{-\rho})$}) \label{eq:f-3}
\end{eqnarray}

Given this ``for each" result, we again use a net-argument to show 
\begin{equation}
\label{eq:k-3}
\norm{SDy}_2 \ge \Omega\left(1 / \rho^{1/p} \cdot \norm{y}_p\right) = \Omega\left(1 / (d \log d)^{1/p}\right)  \cdot \norm{y}_p,  \quad \forall y \in \text{range}(A).
\end{equation}

Let the ball $B = \{y \in \mathbb{R}^n\ |\ y = Ax, \norm{y}_p \le 1\}$. Let $B_\eps \subseteq B$ be an $\eps$-net of $B$ with size at most $(3/\eps)^d$. We choose $\eps =  1/(4c_5(\rho d^2 \log d)^{1/p})$. Then with probability $1 - e^{-\rho} \cdot (3/\eps)^d \ge 0.99$, $\norm{SDy'}_2 \ge 1/(2\rho^{1/p}) \cdot \norm{y'}_p$ holds for all $y' \in B_\eps$. Let $\mathcal{E}_7$ denote this event which we condition on. For $y \in B \backslash B_\eps$, let $y' \in B_\eps$ such that $\norm{y - y'}_p \le \eps$. By the triangle inequality,
\begin{equation}
\label{eq:k-1}
\norm{SDy}_2  \ge \norm{SDy'}_2 - \norm{SD(y  -y')}_2.
\end{equation}
By (\ref{eq:h-3}) we have 
\begin{eqnarray}
\norm{SD(y - y')}_2 &\le& c_5 (d\log d)^{1/p} \cdot \norm{y - y'}_p \nonumber \\ 
&\le& c_5 (d\log d)^{1/p} \cdot \eps \nonumber \\ 
&\le& c_5 (d\log d)^{1/p} \cdot \eps \cdot d^{1/p} \norm{y}_p \nonumber \\ &=& 1/(4\rho^{1/p}) \cdot \norm{y}_p. \label{eq:k-2}
\end{eqnarray}
By (\ref{eq:f-3}) (\ref{eq:k-1}) and (\ref{eq:k-2}), conditioned on $\mathcal{E}_7$, we have for all $y \in \text{range}(A)$, it holds that
\begin{equation*}
\norm{SDy}_2 \ge 1/(2\rho^{1/p}) \cdot \norm{y}_p - 1/(4\rho^{1/p}) \cdot \norm{y}_p  \ge 1/(4\rho^{1/p}) \cdot \norm{y}_p.
\end{equation*}

In the case when $d \ge \log^{2/p-1} n$, using a finer analysis we can show that 
\begin{eqnarray}
\label{eq:finer}
\norm{SDy}_2 \ge \Omega\left(1 \left/ {(d \log d \log n)^{\frac{1}{p}-\frac{1}{2}}} \right. \right) \cdot \norm{y}_p, \quad \forall y \in \text{range}(A).
\end{eqnarray}
The analysis will be given in the Section~\ref{sec:improve-small-p}.

Finally, Theorem \ref{thm:small-p} follows from (\ref{eq:h-3}), (\ref{eq:k-3}), (\ref{eq:finer}) and our choices of $\rho$.

\subsection{An Improved Contraction for $\ell_p\ (p \in [1,2))$ Subspace Embeddings when $d \ge \log^{2/p-1} n$}
\label{sec:improve-small-p}

In this section we give an improved analysis for the contraction assuming that $d \ge \log^{2/p-1} n$. 

Given a $y$, let $y_X \ (X \subseteq [n])$ be a vector such that $(y_X)_i = y_i$ if $i \in X$ and $0$ if $i \in [n] \backslash X$. For convenience, we assume that the coordinates of $y$ are sorted, that is, $y_1 \ge y_2 \ge \ldots \ge y_n$. Of course this order is unknown and not used by our algorithms.

We partition the $n$ coordinates of $y$ into $L = \log n + 2$ groups $W_1, \ldots, W_L$ such that $W_\ell = \{i\ |\ \norm{y}_p/2^\ell < y_i \le \norm{y}_p/2^{\ell-1}\}$. Let $w_\ell = \abs{W_\ell}\ (\ell \in [L])$ and let $W =\bigcup_{\ell \in [L]} W_\ell$. Thus 
$$\norm{y_W}_p^p \ge \norm{y}_p^p - n \cdot \norm{y}_p^p/(2^{L-1})^p \ge \norm{y}_p^p/2.$$ Let $K = c_K d \log d$ for a sufficiently large constant $c_K$. Define $T =  \{1, \ldots, K\}$ and $B = W \backslash T$. Obviously, $W_1 \cup \ldots \cup W_{\log K - 1} \subseteq T$. Let $\lambda = 1/(10 d^p K)$ be a threshold parameter.

As before (Section~\ref{sec:small-p-no-under}), we have $\norm{SDy}_2 \ge 1/2 \cdot \norm{Dy}_2$. Now we analyze $\norm{Dy}_2$ by two cases.

\paragraph{Case 1: $\norm{y_T}_p^p \ge \norm{y}_p^p/4$.}  

Let $H = \{i\ |\ (i \in [n]) \wedge (\ell_i^p \ge \lambda)\}$, where $\ell_i^p$ is the $i$-th leverage score of $A$. Since $\sum_{i \in [n]} \ell_i^p = d$, it holds that $\abs{H} \le d / \lambda$.

We next claim that $\norm{y_{T \cap H}}_p^p \ge \norm{y}_p^p / 8$. To see this, recall that for each $y_i\ (i \in [n])$ we have $\abs{y_i^p} \le d^{p-1} \ell_i^p$ (Property~\ref{prop:y}). Suppose that $\norm{y_{T \cap H}}_p^p \le \norm{y}_p^p / 8$, let $y_{i_{\max}}$ be the  coordinate in $y_{T \backslash H}$ with maximum absolute value, then 
\begin{eqnarray*}
\abs{y_{i_{\max}}^p} &\ge& \norm{y}_p^p / (8K) \\
&\ge& (1/d) / (8K) \quad \text{(by Property~\ref{prop:y})}\\
&>& d^{p-1} \lambda \\
&>& d^{p-1} \ell_{i_{\max}}^p.  \quad ({i_{\max}} \not\in H)
\end{eqnarray*} 
This is a contradiction. 

Now we consider $\{u_i\ |\ i \in H\}$. Since the CDF of an exponential $u$ is $(1 - e^{-x})$, we have with probability $(1 - d^{-10})$ that $1/u \ge 1/(10 \log d)$. By a union bound, with probability $(1 - d^{-10} \abs{H}) \ge (1 - d^{-10} \cdot 10 d^{p+1} K) \ge 0.99$, it holds that $1/u_i \ge 1/(10 \log d)$ for all $i \in H$. Let $\mathcal{E}_7$ be this event which we condition on. Then for any $y$ such that $\norm{y_T}_p^p \ge \norm{y}_p^p/4$, we have $\sum_{i \in T \cap H} \abs{y_i^p} /u_i \ge \norm{y}_p^p / (80 \log d)$, and consequently, 
$$\norm{Dy}_2 \ge \frac{\norm{Dy}_p}{K^{1/p-1/2}} \ge \frac{\norm{y}_p}{(80 \log d)^{1/p} \cdot K^{1/p-1/2}}.$$

\paragraph{Case 2: $\norm{y_B}_p^p \ge \norm{y}_p^p/4$.} Let $W'_\ell = B \cap W_\ell\ (\ell \in [L])$ and $w'_\ell = \abs{W'_\ell}$. Let $F = \{\ell\ |\ w'_\ell \ge K/32\}$ and let $W' = \bigcup_{\ell \in F} W_\ell$. We have
\begin{eqnarray*}
\norm{y_{W'}}_p^p &\ge& \norm{y}_p^p/4 - \sum_{\ell = \log K}^L \left(K/32 \cdot  (\norm{y}_p / 2^{\ell - 1})^p \right) \\
&\ge&  \norm{y}_p^p/4 -  \norm{y}_p^p \cdot K/32 \cdot \sum_{\ell = \log K}^L \left(1 / 2^{\ell-1} \right) \\
&\ge& \norm{y}_p^p/8.
\end{eqnarray*}

For each $\ell \in F$, let $\alpha_\ell = w'_\ell / (2^\ell)^p$. We have 
$$\norm{y}_p^p/8 \le \norm{y_{W'}}_p^p = \sum_{\ell \in F} \left(w'_\ell \cdot \left({\norm{y}_p}/{2^{\ell-1}}\right)^p \right) \le \sum_{\ell \in F} \left( \alpha_\ell \cdot 4 \norm{y}_p^p \right).$$
Thus $\sum_{\ell \in F} \alpha_\ell \ge 1/32$.

Now for each $\ell \in F$, we consider $\sum_{i \in W_{\ell}} \left(y_i / u_i^{1/p}\right)^2$.  By Property $\ref{prop:exp}$, for an exponential $u$ we have $\Pr[1 / u \ge 
w'_{\ell}/K] \ge c'_e \cdot K/w'_{\ell} \ (c'_e = \Theta(1))$. By a Chernoff bound, with probability $(1 - 
e^{-\Omega(K)})$, there are at least $\Omega(K)$ of $i \in W_{\ell}$ such 
that $1/u_i \ge w'_{\ell} / K$. Thus with probability at least $(1 - 
e^{-\Omega(K)})$, we have
\begin{equation*}
\sum_{i \in W_{\ell}} \left(y_i / u_i^{1/p}\right)^2 \ge \Omega(K)  \cdot \left(\frac{\norm{y}_p}{2^{\ell}} \cdot \frac{{w'_{\ell}}^{1/p}}{K^{1/p}}\right)^2 \ge \Omega\left(\frac{\alpha_\ell^{2/p} \norm{y}_p^{2}}{K^{2/p - 1}}\right).
\end{equation*}
Therefore with probability $(1 - L \cdot e^{-\Omega(K)}) \ge (1 - 
e^{-\Omega(d \log d)})$, we have 
\begin{eqnarray}
\norm{Dy}_2^2 &\ge& \sum_{\ell \in F}\sum_{i \in W_{\ell}} \left(y_i / u_i^{1/p}\right)^2 \nonumber \\
&\ge& \Omega\left(\frac{\norm{y}_p^{2}}{K^{2/p - 1}} \cdot \sum_{\ell \in F} \alpha_\ell^{2/p} \right) \nonumber \\
&\ge& \Omega \left(\frac{\norm{y}_p^{2}}{(K \log n)^{2/p - 1}}\right) \label{eq:z-3} \quad \textstyle (\sum_{\ell \in F} \alpha_\ell \ge 1/32 \text{ and } \abs{F} \le \log n) 
\end{eqnarray}

Since the success probability is as high as $(1 - 
e^{-\Omega(d \log d)})$, we can further show that (\ref{eq:z-3}) holds for all $y \in \text{range}(A)$ using a net-argument as in previous sections.

To sum up the two cases, we have that for $\forall y \in \text{range}(A)$ and $p \in [1,2)$, $\norm{Dy}_2 \ge \Omega\left(\frac{\norm{y}_p}{(d \log d \log n)^{\frac{1}{p}-\frac{1}{2}}}\right)$.

\subsection{An Improved Dilation for $\ell_1$ Subspace Embeddings}
We can further improve the dilation for $\ell_1$ using the $1$-norm estimator in Remark~\ref{rem:inter-norm}. Let $S \in \mathbb{R}^{\tilde{O}(d) \times n}$ be a $(\tilde{O}(d), \log^{O(1)} d)$-OSE, which can be written as $\frac{1}{\sqrt{s}}(S_1, \ldots, S_s)^T$ where $s = \log^{O(1)}d$, and each $S_i \in \mathbb{R}^{(\tilde{O}(d)/s) \times n}$ with one $\pm 1$ on each column in a random row. Let $D$ is a diagonal matrix with $1/u_1^{1/p}, \ldots, 1/u_n^{1/p}$ on the diagonal. Let $\Pi = S D \in \mathbb{R}^{\tilde{O}(d) \times n}$. Note that the change of parameters of the OSE will not affect the contraction.

\begin{theorem}
\label{thm:ell_1}
Let $A$ be an Auerbach basis of a $d$-dimensional subspace of $(\mathbb{R}^n, \norm{\cdot}_1)$. Let $\Pi$ be defined as above. With probability $2/3$,
\begin{equation*}
\label{eq:w-1}
\Omega\left(\max\left\{1 / (d \log d), 1/\sqrt{d \log d \log n}\right\}\right)  \cdot \norm{Ax}_1 \le \norm{\Pi A x}_1 \le \tilde{O}(d) \cdot \norm{Ax}_1, \quad \forall x \in \mathbb{R}^d.
\end{equation*}
\end{theorem}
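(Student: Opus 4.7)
The theorem decomposes into a contraction (lower bound) and a dilation (upper bound). Contraction follows immediately from Theorem~\ref{thm:small-p} specialized to $p=1$ together with the trivial inequality $\|v\|_1 \ge \|v\|_2$: the no-underestimation argument of Section~\ref{sec:small-p-no-under} uses only that $S$ is an $\ell_2$-subspace embedding, which the new $(\tilde{O}(d),\log^{O(1)} d)$-OSE of \cite{NN12} continues to satisfy. Hence $\|\Pi A x\|_1 \ge \|\Pi A x\|_2 \ge \Omega(\max\{1/(d\log d),1/\sqrt{d\log d\log n}\}) \cdot \|Ax\|_1$ is free, and no new work is needed on the contraction side.

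For the dilation, write $S = \tfrac{1}{\sqrt{s}}(S_1^\top,\ldots,S_s^\top)^\top$ with $s = \log^{O(1)} d$, so that $\|\Pi A x\|_1 = \tfrac{1}{\sqrt{s}}\sum_{i=1}^s \|S_i D A x\|_1$; a union bound over the $s$ blocks reduces the task to showing, for each single-hash block $S_i \in \mathbb{R}^{(\tilde{O}(d)/s)\times n}$, that $\|S_i DAx\|_1 \le \tilde{O}(d)\|Ax\|_1$ uniformly in $x$ with probability at least $1-1/(100s)$. This is the exponential analogue of event $\mathcal{E}_6$ from Section~\ref{sec:small-p-no-under}; its Cauchy counterpart is established in \cite{MM12} (Sec.\,A.2) via $1$-stability bucket-by-bucket: $\sum_{j\in B_k}\sigma_j y_j C_j \sim \|y_{B_k}\|_1 \widetilde C_k$, so $\|S_i D'y\|_1 \sim \sum_k \|y_{B_k}\|_1 |\widetilde C_k|$, and a constant-probability union bound on $\max_k|\widetilde C_k|$ over the $\tilde{O}(d)/s$ buckets together with $\sum_k \|y_{B_k}\|_1 = \|y\|_1$ yields $\|S_i D'y\|_1 \le \tilde{O}(d)\|y\|_1$ uniformly in $y \in \mathrm{range}(A)$ via the net argument of \cite{MM12}.

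To transfer from Cauchys to reciprocal-exponentials I would mirror the chain (\ref{eq:stable-exp-1}) of Section~\ref{sec:small-p-no-under}, replacing every $\|\cdot\|_2$ by $\|\cdot\|_1$ and the role of the $\ell_2$-OSE event $\mathcal{E}_5$ by the bucket-level $1$-stability step above; Lemma~\ref{lem:stable-exp} then couples the $\ell_1$-tails of $\|S_i D y\|_1$ and $\|S_i D' y\|_1$ up to the constant $\gamma^{-1}$. The main obstacle I anticipate is that Lemma~\ref{lem:stable-exp} is stated for non-negative weighted sums, whereas each bucket $\sum_{j\in B_k}\sigma_j y_j/u_j$ is a \emph{signed} sum of reciprocal-exponentials, which admits no closed-form stability property. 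A naive triangle collapse $|\sum\sigma_j y_j/u_j|\le \sum|y_j|/u_j$ at each bucket reduces the problem to $\|Dy\|_1$ but, after a Lemma~\ref{lem:stable-exp} transfer to the Cauchy side, only delivers $\tilde{O}(d\log n)\|y\|_1$ through the truncated-Cauchy mean $\E[|C|\mathbf{1}_{|C|\le n}] = \Theta(\log n)$. The right fix is to extend Lemma~\ref{lem:stable-exp} to signed coefficients by a Rademacher-symmetrization argument exploiting the sign-symmetry of both the Cauchy distribution and the hashing signs $\sigma_j$, so that the bucket-level $1$-stability on the Cauchy side is preserved and the spurious $\log n$ factor is avoided. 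Combining the resulting dilation with the inherited contraction closes the proof of Theorem~\ref{thm:ell_1}.
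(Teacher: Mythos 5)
Your treatment of the lower bound is correct and coincides with the paper's: the contraction analysis in Section~\ref{sec:small-p-no-under} (and its refinement in Section~\ref{sec:improve-small-p}) relies only on the $\ell_2$-OSE property of $S$, which the $(\tilde{O}(d),\log^{O(1)}d)$ choice of $S$ retains, so $\norm{\Pi A x}_1 \ge \norm{\Pi A x}_2$ carries over the bound for free.

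\textbf{Dilation.} Here your route diverges from the paper's, and it has a gap. You propose to bound $\norm{S_i D y}_1$ \emph{uniformly over $y$} by first invoking bucket-level $1$-stability on the Cauchy side ($\sum_{j\in B_k}\sigma_j y_j C_j \sim \norm{y_{B_k}}_1 \widetilde{C}_k$), then transferring to the exponential side via a signed extension of Lemma~\ref{lem:stable-exp}, and finally "a union bound on $\max_k|\widetilde{C}_k|$ ... uniformly in $y$ via the net argument." The problem is that the Cauchy variables $\widetilde{C}_k = \bigl(\sum_{j\in B_k}\sigma_j y_j C_j\bigr)/\norm{y_{B_k}}_1$ depend on $y$: they are not a single fixed collection that one can bound once and for all, so a net argument for the dilation is not available in the form you describe. (This is why, in both \cite{MM12} and the present paper, the dilation is \emph{not} obtained by a net: instead one bounds the single $y$-independent random quantity $\sum_{i\in[d]}\norm{SDA_i}_1$ over the $d$ Auerbach columns and then closes via the triangle inequality and $\norm{x}_\infty\le\norm{Ax}_1$.) Once you switch to the ``for each column'' reduction, your second concern --- that Lemma~\ref{lem:stable-exp} is stated for non-negative sums --- also becomes moot, because the paper deals with the signs in a much lighter way: Khintchine's inequality (Fact~\ref{fact:khintchine}) over the hashing signs $\sigma$ bounds each entry $|(SDA_i)_j|$ by $\tilde{O}(\sqrt{\log d})\cdot\norm{Dz^{i,j}}_2$, which is a sum of \emph{squares} $\bigl(\sum_k y_k^2/U_k^2\bigr)^{1/2}$, so signs disappear and the already-established Corollary~\ref{cor:upper} (the $z=2$ case of Lemma~\ref{lem:stable-exp} together with a Paley--Zygmund argument for Cauchys) applies directly to give $\norm{Dz^{i,j}}_2\lesssim r\norm{z^{i,j}}_1$ with probability $1-O(1/r)$. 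Summing these in expectation (Claim~\ref{cla:sum}) produces the $\tilde{O}(d)$ dilation. Your instinct that a signed, $\ell_1$-based transfer lemma could be pushed through is plausibly correct, but it is not carried out, it is more delicate than you acknowledge, and it is unnecessary: reusing Corollary~\ref{cor:upper} after Khintchine is both simpler and already in place.
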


Same as Remark~\ref{rem:auerbach}, we can replace the Auerbach basis $A$ by any matrix $M$ whose column space is a $d$-dimensional subspace of $(\mathbb{R}^n, \norm{\cdot}_p)$. The embedding $\Pi M$ can be computed in time $O(\nnz(M) + \tilde{O}(d^{2}))$.

We need Khintchine's inequality.
\begin{fact}\label{fact:khintchine}
Let $z = \{z_1, \ldots, z_r\}$. Let $Z = \sum_{i=1}^r \sigma_i z_i$ for i.i.d.\ random variables $\sigma_i$
uniform in $\{-1,+1\}$. 
There exists a constant $c > 0$ for which for all $t > 0$
$$\Pr[|Z| > t \norm{z}_2] \leq \exp(-ct^2).$$
\end{fact}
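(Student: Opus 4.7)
The plan is to prove Fact~\ref{fact:khintchine} via the standard Chernoff/Cram\'er method applied to the moment generating function (MGF) of the Rademacher sum $Z = \sum_{i=1}^r \sigma_i z_i$. First I would bound the MGF of each summand: since $\sigma_i$ is uniformly distributed on $\{-1,+1\}$, one has $\E[e^{s\sigma_i z_i}] = \cosh(sz_i)$ exactly, and the elementary inequality $\cosh(x) \le e^{x^2/2}$ (which follows term-by-term from the Taylor expansion, since $(2k)! \ge 2^k k!$) upgrades this to $\E[e^{s\sigma_i z_i}] \le e^{s^2 z_i^2/2}$, i.e.\ each Rademacher term is $|z_i|$-sub-Gaussian.

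Next, by independence of the $\sigma_i$ the MGF of $Z$ factors, so for any real $s$,
\begin{equation*}
\E[e^{sZ}] \;=\; \prod_{i=1}^r \E[e^{s\sigma_i z_i}] \;\le\; \exp\!\left(\frac{s^2}{2}\,\norm{z}_2^2\right).
\end{equation*}
I would then apply Markov's inequality to the nonnegative random variable $e^{sZ}$: for any $s,\lambda > 0$, $\Pr[Z \ge \lambda] \le e^{-s\lambda}\,\E[e^{sZ}] \le \exp\!\bigl(s^2\norm{z}_2^2/2 - s\lambda\bigr)$. Optimizing over $s$ by setting $s = \lambda/\norm{z}_2^2$ yields $\Pr[Z \ge \lambda] \le \exp\!\bigl(-\lambda^2/(2\norm{z}_2^2)\bigr)$. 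Since the joint law of $(\sigma_1,\dots,\sigma_r)$ is invariant under negation, $Z$ and $-Z$ have the same distribution, so the identical bound controls $\Pr[Z \le -\lambda]$; a union bound then gives $\Pr[|Z| \ge \lambda] \le 2\exp\!\bigl(-\lambda^2/(2\norm{z}_2^2)\bigr)$.

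Finally, setting $\lambda = t\norm{z}_2$ produces $\Pr[|Z| > t\norm{z}_2] \le 2 e^{-t^2/2}$, and the leading factor of $2$ is absorbed into the exponent by choosing, say, $c = 1/4$: for $t \ge 2\sqrt{\ln 2}$ one has $2 e^{-t^2/2} \le e^{-t^2/4}$, while for smaller $t$ the target bound $e^{-ct^2}$ is bounded below by a positive constant, so the probability inequality becomes trivial after possibly shrinking $c$ once more. There is essentially no obstacle; the proof is a textbook sub-Gaussian Chernoff computation, and the only step worth double-checking in detail is the per-term MGF bound $\cosh(x) \le e^{x^2/2}$, which is the place where the Rademacher (as opposed to bounded) structure of the $\sigma_i$ is exploited.
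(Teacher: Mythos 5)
The paper gives no proof of this Fact at all --- it is quoted as the standard sub-Gaussian form of Khintchine's (equivalently, Hoeffding's) inequality --- so there is no in-paper argument to compare against. Your derivation is the textbook one and its engine is correct: $\E[e^{s\sigma_i z_i}]=\cosh(sz_i)\le e^{s^2z_i^2/2}$, the MGF factors by independence, Markov plus optimizing $s=\lambda/\norm{z}_2^2$ gives $\Pr[Z\ge\lambda]\le\exp(-\lambda^2/(2\norm{z}_2^2))$, and symmetry of $Z$ with a union bound yields $\Pr[|Z|>t\norm{z}_2]\le 2e^{-t^2/2}$.

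The one step that does not hold up is your absorption of the leading factor of $2$ for small $t$. You assert that for $t<2\sqrt{\ln 2}$ the target inequality ``becomes trivial'' because $e^{-ct^2}$ is bounded below by a positive constant; but a bound $\Pr[\cdot]\le e^{-ct^2}<1$ is never trivial, and in fact the Fact as literally stated is false in that regime: take $r=1$ and $z_1=1$, so $|Z|\equiv 1=\norm{z}_2$ and $\Pr[|Z|>t\norm{z}_2]=1>e^{-ct^2}$ for every $0<t<1$ and every $c>0$. No further shrinking of $c$ can repair this, so the clean statement your argument actually proves is $\Pr[|Z|>t\norm{z}_2]\le 2e^{-t^2/2}$ for all $t>0$, or $\le e^{-t^2/4}$ once $t\ge 2\sqrt{\ln 2}$. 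This is a defect of the Fact as recorded (a common abuse) rather than of your method, and it is harmless for the paper, which invokes the Fact only with $t=c'\sqrt{\log d}$, i.e., $t$ large; but in your write-up you should keep the factor of $2$ or restrict the range of $t$ rather than claim the constant can be absorbed uniformly.
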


Let $A = (A_1, \ldots, A_d)$ be an Auerbach basis of a $d$-dimensional subspace $(\mathbb{R}^n, \norm{\cdot}_1)$.
Applying Fact \ref{fact:khintchine} to a fixed entry $j$ of $SDA_i$ for a
fixed $i$, and letting $z^{i,j}$ denote the vector such that $(z^{i,j})_k = (A_i)_k$ if $S_{j,k} \neq 0$, and $(z^{i,j})_k = 0$ otherwise, 
we have for a constant $c' > 0$, 
$$\Pr\left[|(SDA_i)_j| > s \cdot c' \sqrt{\log d} \norm{Dz^{i,j}}_2\right] \leq \frac{1}{d^3}.$$
By a union bound, with probability $1-\frac{d^2 \log^{O(1)}d}{d^3}
= 1- \frac{\log^{O(1)}d}{d}$, for all $j$ and $i$
$$|(SDA_i)_j| \leq s \cdot c' \sqrt{\log d} \norm{Dz^{i,j}}_2,$$
which we denote by event $\mathcal{E}_9$ and condition on. 

We define event $\mathcal{F}_{i,j}$ to be the event that 
\begin{equation}
\label{eq:m-1}
\norm{Dz^{i,j}}_2 \leq 100 c \cdot d^2 \log^{O(1)}d \norm{z^{i,j}}_1,
\end{equation}
where $c > 0$ is the constant of Corollary \ref{cor:upper}. 
By Corollary \ref{cor:upper},
$\Pr[\mathcal{F}_{i,j}] \geq 1 - {1}/\left({100 d^2 \log^{O(1)} d}\right)$. Let $\mathcal{F}_j = \bigwedge_{i \in [d]}\mathcal{F}_{i,j}$, and let $\mathcal{F} = \bigwedge_{j \in [d \log^{O(1)}d]} \mathcal{F}_j$. By union bounds, $\Pr[\mathcal{F}_j] \ge 1 - {1}/\left({100 d \log^{O(1)} d}\right)$ for all $j \in [d \log^{O(1)}d]$, and  $\Pr[\mathcal{F}] \ge 1 - {1}/{100} = {99}/{100}$.

\begin{claim}
\label{cla:sum}
$\E\left[\sum_{i \in [d], j \in [d  \log^{O(1)} d]} \norm{Dz^{i,j}}_2 \ |\ \mathcal{E}_9, \mathcal{F}\right]
 \leq  c_p \ln d \sum_{i \in [d]} \norm{A_i}_1$ for a constant $c_p > 0$.
\end{claim}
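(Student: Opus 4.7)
The strategy is to peel off the conditioning on $\mathcal{E}_9$, bound each summand by truncating at the threshold enforced by $\mathcal{F}_{i,j}$, and then collect the sums using the column structure of $S$.

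Observe first that the vectors $z^{i,j}$ and the matrix $D$ depend only on the hash locations $h$ and the exponentials $u_k$, and are independent of the random signs $\sigma$ that define $\mathcal{E}_9$. Since the Khintchine argument gives $\Pr[\mathcal{E}_9 \mid h,D] \geq 1 - \log^{O(1)}(d)/d$ for \emph{every} fixing of $(h,D)$, conditioning on $\mathcal{E}_9$ perturbs the joint law of $(h,D)$ by at most a $1+o(1)$ factor; combined with $\Pr[\mathcal{F}] \geq 99/100$, this reduces the claim to bounding
\[
\sum_{i,j} \E\!\left[\|Dz^{i,j}\|_2 \,\mathbf{1}_{\mathcal{F}_{i,j}}\right]
\]
up to an absolute constant.

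For each summand, let $Y_{i,j} = \|Dz^{i,j}\|_2 / \|z^{i,j}\|_1$ (and skip the trivial case $z^{i,j}=0$). Corollary~\ref{cor:upper}, applied to the restriction of $A_i$ to the coordinates where $S_{j,k}\neq 0$, yields $\Pr[Y_{i,j} > t] \leq c/t$. The event $\mathcal{F}_{i,j}$ caps $Y_{i,j}$ at $R := 100 c \cdot d^2 \log^{O(1)} d$, so a layer-cake computation gives
\[
\E\!\left[\|Dz^{i,j}\|_2 \,\mathbf{1}_{\mathcal{F}_{i,j}}\right] \;\leq\; \|z^{i,j}\|_1 \int_0^R \Pr[Y_{i,j} > t]\,dt \;\leq\; \|z^{i,j}\|_1 \left(1 + c \int_1^R \frac{dt}{t}\right) \;=\; O(\log d) \cdot \|z^{i,j}\|_1.
\]
Summing uses the column structure of $S$: each column has exactly $s = \log^{O(1)} d$ nonzeros (one per block), so for every $i$, $\sum_j \|z^{i,j}\|_1 = s \cdot \|A_i\|_1$. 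Hence
\[
\sum_{i,j} \E\!\left[\|Dz^{i,j}\|_2\,\mathbf{1}_{\mathcal{F}_{i,j}}\right] \;\leq\; O(s \log d) \sum_{i \in [d]} \|A_i\|_1,
\]
which matches the claimed $c_p \ln d \sum_i \|A_i\|_1$ bound up to the polylogarithmic $s$ factor arising from the column density of $S$ (harmless for the final $\tilde O(d)$ dilation target).

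The main obstacle is that $\E[Y_{i,j}] = \infty$: the tail $c/t$ from Corollary~\ref{cor:upper} sits exactly on the borderline of non-integrability, so an unconditioned expectation of $\|Dz^{i,j}\|_2$ is meaningless. The truncation built into $\mathcal{F}$ is therefore essential, and it is precisely the logarithm that appears when integrating $1/t$ up to the $\poly(d)$ cutoff that converts the heavy tail into the $O(\log d)$ per-summand penalty.
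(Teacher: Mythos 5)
Your proof is correct and captures the same core mechanism as the paper's: a layer-cake integration of the tail bound from Corollary~\ref{cor:upper}, with the event $\mathcal{F}_{i,j}$ truncating the integral at the $\poly(d)$ cutoff, which is exactly what converts the non-integrable $c/t$ Cauchy-type tail into the $O(\log d)$ factor. You also correctly identify that this non-integrability is the essential obstacle, and correctly account for the $\sum_{j}\norm{z^{i,j}}_1 = s\norm{A_i}_1$ identity coming from the column structure of $S$.

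Where you differ from the paper is in how the conditioning on $\mathcal{E}_9 \wedge \mathcal{F}$ is handled. The paper first passes from conditioning on $\mathcal{F}$ to conditioning on $\mathcal{F}_j$ via an independence claim across blocks, then relates $\E[\cdot \mid \mathcal{E}_9, \mathcal{F}_j]$ to $\E[\cdot \mid \mathcal{E}_9, \mathcal{F}_{i,j}]$ through the inequality chain $\E[X\mid \mathcal{F}_{i,j}] \ge \E[X \mid \mathcal{F}_j]\cdot\Pr[\mathcal{F}_j \mid \mathcal{F}_{i,j}]$ and a lower bound $\Pr[\mathcal{F}_j\mid \mathcal{F}_{i,j}]\ge 1/2$. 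You instead observe directly that $\E[X\mid\mathcal{E}_9,\mathcal{F}] = \E[X\,\mathbf{1}_{\mathcal{E}_9\cap\mathcal{F}}]/\Pr[\mathcal{E}_9\cap\mathcal{F}] \le \E[X\,\mathbf{1}_{\mathcal{F}_{i,j}}]/\Pr[\mathcal{E}_9\cap\mathcal{F}]$ (using $\mathcal{E}_9\cap\mathcal{F}\subseteq\mathcal{F}_{i,j}$ and $X\ge 0$), with the denominator bounded below by a constant. This is a cleaner reduction: it sidesteps the independence-across-$j$ claim (which is delicate because all $\mathcal{F}_j$ share the same matrix $D$) and the conditional chain. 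Your remark that the result picks up an extra $s=\log^{O(1)}d$ factor beyond what the claim literally states is also accurate; the paper's proof has the same discrepancy (its layer-cake step drops a $\norm{z^{i,j}}_1$ normalization), and in either case the extra polylog is absorbed into the $\tilde O(d)$ dilation bound for which the claim is used. One small note: the ``$1+o(1)$ perturbation of the joint law'' observation is more machinery than you need; the simple lower bound on $\Pr[\mathcal{E}_9\cap\mathcal{F}]$ already suffices.
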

\begin{proof}
By independence, 
$$\E\left[\norm{Dz^{i,j}}_2 \ |\ \mathcal{E}_9, \mathcal{F}\right] = \E\left[\norm{Dz^{i,j}}_2 \ |\ \mathcal{E}_9, \mathcal{F}_j\right].$$

We now bound $\E[\norm{Dz^{i,j}}_2 \mid \mathcal{E}_9, \mathcal{F}_{i,j}]$. Letting $\eta = \Pr[\mathcal{E}_9 \wedge \mathcal{F}_{i,j}] \ge 99/100$, we have by Corollary~\ref{cor:upper} 
\begin{eqnarray*}
\E[\norm{Dz^{i,j}}_2 \mid \mathcal{E}_9, \mathcal{F}_{i,j}]
& = & \int_{r = 0}^{100 c d^2 \log^{O(1)}d} \Pr\left[\norm{Dz^{i,j}}_2 \geq r \norm{z^{i,j}}_1 
\mid \mathcal{E}_9, \mathcal{F}_{i,j}\right] dr \\
& \leq & \frac{1}{\eta} \left(1 + \int_{r = 1}^{100 c d^2 \log^{O(1)}d} \frac{c}{r}\right) dr\\
& \leq & c_p/2  \cdot \ln d \quad \quad \text{(for a large enough constant $c_p$)}.
\end{eqnarray*}
We can perform the following manipulation.
\begin{eqnarray*}
c_p/2 \cdot \ln d & \ge & \E\left[\norm{Dz^{i,j}}_2 \mid \mathcal{E}_9, \mathcal{F}_{i,j}\right] \\
&\ge& \E\left[\norm{Dz^{i,j}}_2 \mid \mathcal{E}_9, \mathcal{F}_j\right] \cdot \Pr[\mathcal{F}_j\ |\ \mathcal{F}_{i,j}] \\
&=& \E\left[\norm{Dz^{i,j}}_2 \mid \mathcal{E}_9, \mathcal{F}_j\right] \cdot \Pr[\mathcal{F}_j] / \Pr[\mathcal{F}_{i,j}] \\
&\ge& 1/2 \cdot \E\left[\norm{Dz^{i,j}}_2 \mid \mathcal{E}_9, \mathcal{F}_j\right] \\
&=& 1/2 \cdot \E\left[\norm{Dz^{i,j}}_2 \mid \mathcal{E}_9, \mathcal{F}\right]
\end{eqnarray*}
Therefore by linearity of expectation, $\E\left[\sum_{i \in [d], j \in [d  \log^{O(1)} d]} \norm{Dz^{i,j}}_2 \ |\ \mathcal{E}_9, \mathcal{F}\right]
 \leq  c_p \ln d \sum_{i \in [d]} \norm{A_i}_1$.
\end{proof}

Let $\mathcal{G}$ be the event that $\sum_{i \in [d], j \in [d  \log^{O(1)} d]} \norm{Dz^{i,j}}_2
 \leq  100 c_p \ln d \sum_{i \in [d]} \norm{A_i}_1$ conditioned on $\mathcal{E}_9, \mathcal{F}$. By Claim~\ref{cla:sum}, $\mathcal{G}$ holds with probability at least $99/100$. 
Then conditioned on $\mathcal{E}_9 \wedge \mathcal{F} \wedge \mathcal{G}$, which holds with probability at least $9/10$,
\begin{eqnarray*}
\norm{SDAx}_1 & \leq & \norm{x}_{\infty} \sum_{i \in [d]} \norm{SDA_i}_1\\
& \leq & \norm{Ax}_1 \sum_{i \in [d]} \norm{SDA_i}_1\\
& \le & \norm{Ax}_1 \sum_{i \in [d]} \sum_{j \in [d \log^{O(1)} d]} s \cdot c' \sqrt{\log d} \norm{D z^{i,j}}_2 \\
& \leq & \norm{Ax}_1 \cdot s \cdot c' \sqrt{\log d} \cdot 100 c_p \ln d \sum_{i \in [d]} \norm{A_i}_1\\
& \leq & \tilde{O}(d) \norm{Ax}_1,
\end{eqnarray*}
where the first inequality follows from the triangle inequality,
the second inequality uses that $\norm{x}_{\infty} \leq \norm{Ax}_1$ for a $(d,1,1)$-well-conditioned basis $A$, the third inequality uses 
Claim \ref{cla:sum}, and in the fourth inequality $\norm{A_i} = 1$ for all $i \in [d]$ for a $(d,1,1)$-well-conditioned basis $A$, and $s = \log^{O(1)}d$.

\subsection{A Tight Example}
\label{sec:app-tight}
We have the following example showing that given our embedding matrix $S D$, the distortion we get for $p=1$ is tight up to a polylog factor. The worst case $M$ is the same as the ``bad" example given in \cite{MM12}, that is, $M = (I_d, \mathbf{0})^T$ where $I_d$ is the $d \times d$ identity matrix. Suppose that the top $d$ rows of $M$ get perfectly hashed by $S$, then $\norm{S D M x}_2 = \left(\sum_{i \in [d]} (x_i / u_i)^2\right)^{1/2}$, where $u_i$ are i.i.d.\ exponentials. Let $i^* = \arg\max_{i \in [d]} 1/u_i$. We know from Property~\ref{prop:exp} that with constant probability,  $1/u_{i^*} = \Omega(d)$. Now if we choose $x$ such that $x_{i^*} = 1$ and $x_i= 0$ for all $i \neq i^*$, then $\norm{S D M x}_2 = d$.  On the other hand, we know that with constant probability, for $\Omega(d)$ of $i \in [d]$ we have $1/u_i = \Theta(1)$. Let $K\ (\abs{K} = \Omega(d))$ denote this set of indices. Now if we choose $x$ such that $x_i = 1/\abs{K}$ for all $i \in K$ and $x_i = 0$ for all $i \in [d] \backslash \abs{K}$, then  $\norm{S D M x}_2 = 1/\sqrt{\abs{K}} = O(1/\sqrt{d})$. Therefore the distortion is at least $\Omega(d^{3/2})$.

\section{Regression}
\label{sec:regression}
We need the following lemmas for $\ell_p$ regression.

\begin{lemma}[\cite{CDMMMW12}]
\label{lem:compute-well-condition}
Given a matrix $M \in \mathbb{R}^{n \times d}$  with full column rank and $p \in [1, \infty)$, it takes at most $O(n d^3 \log n)$ time to find a matrix $R \in \mathbb{R}^{d \times d}$ such that $MR^{-1}$ is $(\alpha, \beta, p)$-well-conditioned with $\alpha\beta \le 2d^{1+\max\{1/2, 1/p\}}$.
\end{lemma}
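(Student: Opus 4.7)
The plan is to reduce the task to computing an approximate $\sqrt{d}$-rounding ellipsoid for the $\ell_p$ unit ball of $\mathrm{range}(M)$. First I would compute a QR-style factorization $M = Q R_0$ in $O(nd^2)$ time, so that $Q \in \mathbb{R}^{n \times d}$ has orthonormal columns and $R_0 \in \mathbb{R}^{d \times d}$; it then suffices to well-condition $Q$ and compose with $R_0$ at the end, reducing the rounding work to a $d$-dimensional convex body.

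Consider the centrally symmetric convex body $C := \{z \in \mathbb{R}^d : \|Qz\|_p \le 1\}$. Using a fast approximate John's ellipsoid procedure, I would compute a positive definite matrix $F$ such that the ellipsoid $E = \{z : z^T F z \le 1\}$ satisfies
\[
E \;\subseteq\; C \;\subseteq\; \sqrt{2d}\; E.
\]
Taking the Cholesky factorization $F = L^T L$ and setting $R := L R_0$, define $U := M R^{-1} = Q L^{-1}$. Substituting $z = L^{-1} y$ transforms the two containments into the norm inequalities $\|Uy\|_p \le \|y\|_2$ and $\|y\|_2 \le \sqrt{2d}\, \|Uy\|_p$ for all $y \in \mathbb{R}^d$.

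Plugging $y = e_i$ into the first inequality gives $\|U_i\|_p \le 1$ for every column, so $\alpha := \|U\|_p \le d^{1/p}$. Combining the second inequality with the standard inter-norm bound $\|y\|_q \le d^{\max\{0,\, 1/q - 1/2\}} \|y\|_2$ on $\mathbb{R}^d$ gives $\beta \le \sqrt{2}\, d^{1/2 + \max\{0,\, 1/q - 1/2\}}$. A short case split on $p \ge 2$ versus $p \le 2$ (using $1/p + 1/q = 1$) then yields
\[
\alpha \beta \;\le\; \sqrt{2}\, d^{\, 1/2 + 1/p + \max\{0,\, 1/q - 1/2\}} \;\le\; 2\, d^{\,1 + \max\{1/2,\, 1/p\}},
\]
which is the claimed conditioning bound.

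The main obstacle is fitting the ellipsoidal rounding into the $O(nd^3 \log n)$ budget. My plan is to use a Khachiyan-style iterative scheme that maintains $F$ via rank-one updates, each driven by a separation oracle for $C$; this oracle amounts to computing (up to a constant factor) the maximal value of $\|Qz\|_p$ along a given direction, which is doable in $O(nd)$ time using the precomputed $Q$. Convergence to a $\sqrt{2d}$-rounding in $O(d^2 \log n)$ iterations then yields total cost $O(nd^3 \log n)$ after the preliminary QR step. The looser rounding constant (compared with John's exact $\sqrt{d}$) is precisely what introduces the factor-$2$ slack in the final bound on $\alpha\beta$.
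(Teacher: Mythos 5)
Your high-level structure (QR factor $M=QR_0$, ellipsoidal rounding of $C=\{z:\norm{Qz}_p\le 1\}$, Cholesky of the rounding ellipsoid, compose with $R_0$, then the $(\alpha,\beta,p)$ algebra) is exactly the approach the cited work takes, and your algebra converting a $\kappa$-rounding into an $(\alpha,\beta,p)$ bound is correct. The gap is the claimed rounding factor $\sqrt{2d}$. Separation-oracle-based rounding methods in the Khachiyan/Lov\'asz family yield a $\Theta(d)$-rounding of a centrally symmetric convex body within this oracle-call budget, not a $\Theta(\sqrt{d})$-rounding; getting within a constant of John's optimal $\sqrt{d}$ for a body given only by a separation oracle (here, an $\ell_p$-ball section that for $p\notin\{1,2,\infty\}$ is not even a polytope) is not achievable by the iterative scheme you describe. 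You do not actually need it either: the stated bound is tuned precisely to a $2d$-rounding. Plugging $E\subseteq C\subseteq 2d\cdot E$ into your own derivation gives $\alpha\le d^{1/p}$ and $\beta\le 2d\cdot d^{\max\{0,1/q-1/2\}}$, hence $\alpha\beta\le 2d^{1+1/p+\max\{0,1/q-1/2\}}$, which is $2d^{3/2}$ when $p\ge 2$ and $2d^{1+1/p}$ when $p<2$, i.e.\ exactly $2d^{1+\max\{1/2,1/p\}}$. Your reading that the factor $2$ is ``the slack relative to John's $\sqrt d$'' is therefore also off: the slack is an entire extra factor of $d$ in the rounding, and the constant $2$ is the constant in the $2d$-rounding. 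With the rounding claim corrected to $2d$ the rest of the argument, including the $O(nd^2)$ QR, the $O(nd)$-per-call separation oracle for $\norm{Qz}_p$, and the $O(d^2\log n)$ iteration count giving total $O(nd^3\log n)$, stands.
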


\begin{lemma}[\cite{CDMMMW12}]
\label{lem:subspace-sampling}
Given a matrix $M \in \mathbb{R}^{n \times d}, p \in [1, \infty), \eps > 0$, and a matrix $R \in \mathbb{R}^{d \times d}$ such that $M R^{-1}$ is $(\alpha, \beta, p)$-well-conditioned, it takes $O(\text{nnz}(M) \cdot \log n)$  time to compute a sampling matrix $\Pi \in \mathbb{R}^{t \times n}$ such that with probability $0.99$,
$(1 - \eps)\norm{M x}_p \le \norm{\Pi M x}_p \le (1+\eps)\norm{M x}_p, \ \forall x \in \mathbb{R}^d.$
The value $t$ is $O\left((\alpha \beta)^p d \log(1/\eps) / \eps^2\right)$ for $1 \le p < 2$ and $O\left((\alpha \beta)^p d^{p/2} \log(1/\eps) / \eps^2\right)$ for $p > 2$.
\end{lemma}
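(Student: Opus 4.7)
The strategy is $\ell_p$-importance sampling with respect to the row norms of the well-conditioned basis $U := MR^{-1}$. The key to the $O(\nnz(M)\log n)$ runtime is that one needs only constant-factor approximations to these row norms, obtainable via an oblivious sketch, rather than the exact values.

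\textbf{Step 1: Approximate leverage scores quickly.} I would not form $U$ explicitly (which would cost $O(nd^2)$ time). For $1 \le p \le 2$, let $G \in \mathbb{R}^{d \times O(\log n)}$ have i.i.d.\ $p$-stable entries, compute $R^{-1}G$ in $O(d^2 \log n)$ time, and then $M(R^{-1}G)$ in $O(\nnz(M)\log n)$ time. A row-wise median-of-$p$-stables estimator yields constant-factor estimates $\tilde\ell_i = \Theta(\norm{U^i}_p)$ for every $i$ simultaneously with probability $1 - n^{-O(1)}$. For $p > 2$, I would replace $G$ by the reciprocal-exponential sketch underlying Section~\ref{sec:big}, producing analogous row-norm estimates at the same cost.

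\textbf{Step 2: Sampling and per-$x$ concentration.} Set $p_i = \min\{1, t \tilde\ell_i^p / \sum_j \tilde\ell_j^p\}$, sample row $i$ independently with probability $p_i$, and include $e_i^T/p_i^{1/p}$ as a row of $\Pi$ whenever $i$ is sampled, so that $\E[\norm{\Pi Mx}_p^p] = \norm{Mx}_p^p$ for every $x$. For any fixed $x$, setting $y = Mx$, H\"older together with the $(\alpha,\beta,p)$-well-conditioning of $U$ gives $|y_i| = |U^i(Rx)| \le \norm{U^i}_p \norm{Rx}_q \le \ell_i \beta \norm{Mx}_p$, so each term $Z_i := (|y_i|^p/p_i)\,\mathbf{1}_{i\text{ sampled}}$ of the unbiased estimator $\norm{\Pi Mx}_p^p$ is bounded by $(\alpha\beta)^p \norm{Mx}_p^p / t$, and the same inequality controls $\sum_i \E[Z_i^2]$. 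A Bernstein inequality then gives $\norm{\Pi Mx}_p^p = (1\pm\eps)\norm{Mx}_p^p$ with failure probability $\exp(-\Omega(t\eps^2/(\alpha\beta)^p))$ for $1 \le p < 2$.

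\textbf{Step 3: Union bound over all $x$, and the $p > 2$ case.} A volumetric bound gives an $\eps$-net $B_\eps$ of the unit $\ell_p$-ball of $\text{range}(M)$ with $|B_\eps| \le (3/\eps)^d$. Union-bounding the per-$x$ guarantee over $B_\eps$ and absorbing the discretization error using the dilation bound forces $t = \Omega((\alpha\beta)^p d \log(1/\eps)/\eps^2)$, matching the $1 \le p < 2$ statement. The main obstacle is the $p > 2$ regime: here a few coordinates of $y$ can carry most of $\norm{y}_p^p$, so the second-moment bound $\sum_i \E[Z_i^2]$ is no longer tight to $(\alpha\beta)^p \norm{y}_p^{2p}/t$. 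I would handle this by splitting $y$ into a ``heavy'' part (coordinates whose $|y_i|^p$ exceed a threshold, which are deterministically sampled and contribute exactly) and a ``light'' part (controlled by Bernstein with a modified variance), paying an extra $d^{p/2-1}$ factor that comes from converting $\ell_p$-quantities to $\ell_2$-quantities on a $d$-dimensional subspace; this is precisely what produces the $d^{p/2}$ dependence in $t$.
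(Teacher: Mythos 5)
The paper does not prove this lemma; it cites it directly from \cite{CDMMMW12}. Your reconstruction of the underlying proof (oblivious sketches to approximate row norms, $\ell_p$-importance sampling with unbiased reweighting, Bernstein, then a $(3/\eps)^d$-size net) is the standard route and Steps~2--3 are essentially correct for $1\le p<2$. There are two genuine issues for $p>2$, both centered on Step~1.

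First, the claim that "the reciprocal-exponential sketch underlying Section~3" produces row-norm estimates with $O(\log n)$ columns "at the same cost" does not hold. Each row $U^i = M^i R^{-1}$ lives in $\mathbb{R}^d$, and estimating $\|v\|_p$ of a $d$-dimensional vector $v$ to a constant factor by any linear sketch requires $\Omega(d^{1-2/p})$ measurements for $p>2$; the exponential-diagonal max-estimator avoids this only by reading all $d$ coordinates, i.e.\ by not reducing dimension at all. Either way, the resulting cost of forming $M (R^{-1} G)$ is $\nnz(M)\cdot d^{1-2/p}\,\poly(\log)$ or worse, not $O(\nnz(M)\log n)$. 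The fix used in the literature is to use a Gaussian $G \in \mathbb{R}^{d\times O(\log n)}$ and estimate the \emph{$\ell_2$} row norms $\|U^i\|_2$ to a constant factor.

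Second, and relatedly, your explanation of the $d^{p/2}$ factor via a heavy/light decomposition of $y$ is not where that factor comes from. Your own Bernstein calculation already handles the heavy-tailed coordinates: the term bound $|y_i|^p/p_i \le (\alpha\beta)^p\|y\|_p^p/t$ and the variance bound $\sum_i \mathbf{E}[Z_i^2] \le (\alpha\beta)^p\|y\|_p^{2p}/t$ hold for every $p\ge 1$, because rows with $p_i=1$ are deterministic and rows with $p_i<1$ satisfy the worst-case bound uniformly. With exact $\ell_p$ leverage scores the same Bernstein-plus-net argument would give $t = O((\alpha\beta)^p d\log(1/\eps)/\eps^2)$ for all $p$. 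The $d^{p/2}$ appears precisely because for $p>2$ one samples with weights $w_i \propto \|U^i\|_2^p$ in lieu of $\|U^i\|_p^p$. One still has $w_i \ge \|U^i\|_p^p$, but the normalizer $W = \sum_i \|U^i\|_2^p$ satisfies only $W \le d^{p/2-1}\|U\|_p^p \le d^{p/2-1}\alpha^p$ (using $\|U^i\|_2 \le d^{1/2-1/p}\|U^i\|_p$ on each $d$-dimensional row), which inflates the Bernstein bound by exactly $d^{p/2-1}$ and hence $t$ by the same factor. If you replace your Step~1 for $p>2$ with the Gaussian sketch and trace this through, your Steps~2--3 close the argument; the heavy/light split is unnecessary.
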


\begin{lemma}[\cite{CDMMMW12}]
\label{lem:correct}
Given an $\ell_p$-regression problem specified by $M \in \mathbb{R}^{n \times (d - 1)}, b \in \mathbb{R}^n$, and $p \in [1, \infty)$, let $\Pi$ be a $(1\pm \eps)$-distortion embedding matrix of the subspace spanned by $M$'s columns and $b$ from Lemma~\ref{lem:subspace-sampling}, and let $\hat{x}$ be an optimal solution to the sub-sampled problem $\min_{x \in \mathbb{R}^d} \norm{\Pi M x - \Pi b}_p$. Then $\hat{x}$ is a $\frac{1+\eps}{1-\eps}$-approximation solution to the original problem.
\end{lemma}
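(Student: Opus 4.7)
The plan is to apply the subspace embedding property of $\Pi$ directly, via the standard ``sketch-and-solve'' argument. The key observation is to apply Lemma~\ref{lem:subspace-sampling} to the column space of the augmented matrix $\bar{M} = [M, -b] \in \mathbb{R}^{n \times d}$, which is a $d$-dimensional subspace containing every residual vector $Mx - b$. Thus the distortion guarantee $(1-\eps)\norm{\bar{M} z}_p \le \norm{\Pi \bar{M} z}_p \le (1+\eps)\norm{\bar{M} z}_p$ holds simultaneously for all $z \in \mathbb{R}^d$, and in particular applies to every vector of the form $(x, 1)$, giving
\[
(1-\eps)\norm{Mx - b}_p \le \norm{\Pi M x - \Pi b}_p \le (1+\eps)\norm{Mx - b}_p, \quad \forall x \in \mathbb{R}^{d-1}.
\]

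With this in hand, let $x^* = \mathrm{argmin}_x \norm{Mx - b}_p$ be optimal for the original problem and $\hat{x}$ be optimal for the sketched problem. I would then chain three inequalities: lower-distortion on $\hat{x}$, optimality of $\hat{x}$ in the sketched problem, and upper-distortion on $x^*$:
\[
(1-\eps) \norm{M\hat{x} - b}_p \;\le\; \norm{\Pi M\hat{x} - \Pi b}_p \;\le\; \norm{\Pi M x^* - \Pi b}_p \;\le\; (1+\eps)\norm{Mx^* - b}_p.
\]
Dividing by $(1-\eps)$ yields $\norm{M\hat{x} - b}_p \le \frac{1+\eps}{1-\eps} \norm{Mx^* - b}_p$, which is exactly the claimed approximation guarantee.

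There is essentially no obstacle here: the entire content is wrapped up in having a subspace embedding for the right subspace (the one that contains all residuals), and once we include $b$ as an extra column to form $\bar{M}$, Lemma~\ref{lem:subspace-sampling} supplies exactly this. The only thing one must be careful about is that the embedding guarantee needs to apply simultaneously to \emph{all} $x$, so that one can both invoke it at the (random) optimizer $\hat{x}$ of the sketched problem and at $x^*$ --- but this is precisely the ``for all $x$'' form of the subspace embedding statement, not a ``for each $x$'' statement, so no union bound or additional argument is required.
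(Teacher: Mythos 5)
Your proof is correct and is the standard sketch-and-solve argument. The paper itself cites this lemma from \cite{CDMMMW12} without reproducing a proof, and your derivation---embedding the column space of $\bar{M}=[M,-b]$ so that the $(1\pm\eps)$ guarantee holds simultaneously for all residuals $Mx-b$, then chaining the lower-distortion at $\hat{x}$, the optimality of $\hat{x}$ for the sketched problem, and the upper-distortion at $x^*$---is exactly the argument that citation refers to, including the correct observation that the uniform (``for all $x$'') form of the embedding is what lets you evaluate it at the $\Pi$-dependent point $\hat{x}$.
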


\subsection{Regression for $p$-norm with $p > 2$}
\begin{lemma}
\label{lem:big-p-well-condition}
Let $\Pi \in \mathbb{R}^{m \times n}$ be a subspace embedding matrix of the $d$-dimensional normed space spanned by the columns of matrix $M \in \mathbb{R}^{n \times d}$ such that 
$\mu_1  \norm{Mx}_p \le \norm{\Pi M x}_\infty \le \mu_2  \norm{Mx}_p$ for $\forall x \in \mathbb{R}^d.$
If $R$ is a matrix such that $\Pi M R^{-1}$ is $(\alpha, \beta, \infty)$-well-conditioned, then $MR^{-1}$ is $(\beta \mu_2 , d^{1/p} \alpha/\mu_1, p)$-well-conditioned for any $p \in (2, \infty)$.
\end{lemma}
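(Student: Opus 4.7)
The plan is to directly translate the two defining inequalities of $(\alpha,\beta,\infty)$-well-conditioning for $\Pi M R^{-1}$ into the two inequalities needed for $(\cdot,\cdot,p)$-well-conditioning of $M R^{-1}$, using the subspace embedding in opposite directions. First I would observe that substituting $x \leftarrow R^{-1}y$ in the hypothesis $\mu_1\|Mx\|_p \le \|\Pi M x\|_\infty \le \mu_2\|Mx\|_p$ transfers the embedding to $M R^{-1}$, giving $\mu_1\|MR^{-1}y\|_p \le \|\Pi M R^{-1} y\|_\infty \le \mu_2\|MR^{-1}y\|_p$ for every $y \in \mathbb{R}^d$. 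The hypothesis on $\Pi M R^{-1}$ then unpacks (with the dual exponent of $\infty$ being $1$) as $\|y\|_1 \le \beta\|\Pi M R^{-1}y\|_\infty$ for all $y$, and $\|\Pi M R^{-1}\|_\infty \le \alpha$, where the latter is the element-wise $\infty$-norm of the matrix (i.e.\ the maximum column $\infty$-norm).

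For the dual-norm inequality $\|y\|_q \le C\,\|MR^{-1}y\|_p$, I plan the three-step chain
\[
\|y\|_q \;\le\; \|y\|_1 \;\le\; \beta\,\|\Pi M R^{-1}y\|_\infty \;\le\; \beta\mu_2\,\|MR^{-1}y\|_p,
\]
where the first inequality is the usual monotonicity of $\ell_q$-norms (valid since $p > 2$ forces $q \in (1,2)$, so $q \ge 1$), the second is the hypothesis on $\Pi M R^{-1}$, and the third is the upper direction of the transferred embedding. This yields the constant $\beta\mu_2$.

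For the matrix-norm bound $\|MR^{-1}\|_p \le C'$, I would work columnwise. For the $i$-th standard basis vector $e_i \in \mathbb{R}^d$, the lower direction of the transferred embedding (applied to $y = e_i$) gives $\|MR^{-1}e_i\|_p \le \mu_1^{-1}\|\Pi M R^{-1}e_i\|_\infty \le \alpha/\mu_1$, where the last step uses that each column of $\Pi M R^{-1}$ has $\infty$-norm at most $\|\Pi M R^{-1}\|_\infty \le \alpha$. Summing the $p$-th powers over the $d$ columns and taking the $p$-th root then gives $\|MR^{-1}\|_p \le d^{1/p}\alpha/\mu_1$. Combining the two derivations identifies $MR^{-1}$ as well-conditioned in the $p$-norm sense with the claimed pair of constants $\{\beta\mu_2,\,d^{1/p}\alpha/\mu_1\}$. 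I do not anticipate any real obstacle: the argument is bookkeeping around the two-sided embedding plus the monotonicity $\|y\|_q \le \|y\|_1$. The only thing to watch is to match the two constants to the two slots in Definition~\ref{def:well-condition-1}, i.e.\ which of the two bounds $\|MR^{-1}\|_p$ and which appears in the $\|y\|_q$-inequality.
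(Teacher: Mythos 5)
Your proof is correct and matches the paper's argument essentially line for line: the same chain $\|y\|_q \le \|y\|_1 \le \beta\|\Pi M R^{-1}y\|_\infty \le \beta\mu_2\|MR^{-1}y\|_p$ for the dual-norm bound, and the same columnwise estimate $\|MR^{-1}e_i\|_p \le \mu_1^{-1}\|\Pi MR^{-1}e_i\|_\infty \le \alpha/\mu_1$ summed over $d$ columns for the element-wise $p$-norm. The explicit substitution $x \leftarrow R^{-1}y$ is a cosmetic difference only.
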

\begin{proof}
According to Definition~\ref{def:well-condition-1}, we only need to prove
\begin{eqnarray}
\norm{x}_q & \le & \norm{x}_1 
 \le  \beta \norm{\Pi M R^{-1} x}_\infty  \quad (\Pi M R^{-1} \text{ is $(\alpha, \beta, \infty)$-well-conditioned}) \nonumber \\
&\le& \beta \cdot \mu_2 \norm{M R^{-1} x}_p. \quad \text{(property of $\Pi$)}  \nonumber \quad \\
\text{And,}\quad \quad \quad  \quad \quad && \nonumber \\
\norm{MR^{-1}}_p^p &=& \sum_{i \in [d]} \norm{M R^{-1} e_i}_p^p \quad \text{($e_i$ is the standard basis in $\mathbb{R}^d$)} \nonumber \\
&\le& 1/\mu_1^p \sum_{i \in [d]} \norm{\Pi M R^{-1} e_i}_\infty^p \quad \text{(property of $\Pi$)}  \nonumber \\
&\le& 1 /\mu_1^p \cdot d \alpha^p. \quad (\Pi M R^{-1} \text{ is $(\alpha, \beta, \infty)$-well-conditioned}) \nonumber
\end{eqnarray}
\end{proof}

\begin{theorem}
\label{thm:regression-big-p}
There exists an algorithm that given an $\ell_p$-regression problem specified by $M \in \mathbb{R}^{n \times (d-1)}, b \in \mathbb{R}^n$ and $p \in (2, \infty)$, with constant probability computes a $(1+\eps)$-approximation to an $\ell_p$-regression problem in time $\tilde{O}\left(\text{nnz}(\bar{M}) + n^{1-2/p} d^{4+2/p} + d^{8+4p} + \phi(\tilde{O}(d^{3+2p}/\eps^2), d)\right)$, where $\bar{M} = [M, -b]$ and $\phi(t, d)$ is the time to solve $\ell_p$-regression problem on $t$ vectors in $d$ dimensions.
\end{theorem}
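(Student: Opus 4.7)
The plan is to follow the classical two-stage pipeline for $\ell_p$-regression (well-condition, then subspace sample) while using our new $\ell_\infty$-subspace embedding from Theorem~\ref{thm:big-p} to replace the expensive preprocessing of prior work. First I form $\bar{M} = [M, -b] \in \mathbb{R}^{n \times d}$, reducing the original problem to $\min_{x : x_d = 1} \|\bar{M} x\|_p$, and apply $\Pi$ from Theorem~\ref{thm:big-p} to $\bar{M}$ in $O(\nnz(\bar{M}))$ time. By Remark~\ref{rem:auerbach} this produces, for $m = \tilde{O}(n^{1-2/p} d^{1+2/p} + d^{5+4p})$, a matrix $\Pi \bar{M} \in \mathbb{R}^{m \times d}$ satisfying
\[
\mu_1 \|\bar{M}x\|_p \;\le\; \|\Pi \bar{M}x\|_\infty \;\le\; \mu_2 \|\bar{M}x\|_p, \qquad \mu_2/\mu_1 = \tilde{O}(d^{2/p}).
\]

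Next I compute a change-of-basis matrix $R \in \mathbb{R}^{d \times d}$ making $\Pi \bar{M} R^{-1}$ an $(\alpha,\beta,\infty)$-well-conditioned basis with $\alpha\beta = \tilde{O}(d^{3/2})$, by invoking the $p = \infty$ variant of Lemma~\ref{lem:compute-well-condition} (either by ellipsoidal rounding of the $\ell_\infty$-unit ball $\{x : \|\Pi \bar{M} x\|_\infty \le 1\}$, or by calling Lemma~\ref{lem:compute-well-condition} with $p = \Theta(\log m)$ and absorbing the loss into polylogs). The cost is $O(md^3 \log m) = \tilde{O}(n^{1-2/p} d^{4+2/p} + d^{8+4p})$. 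Lemma~\ref{lem:big-p-well-condition} now yields that $\bar{M} R^{-1}$ is $(\alpha',\beta',p)$-well-conditioned with
\[
\alpha'\beta' \;=\; \alpha\beta \cdot d^{1/p} \cdot (\mu_2/\mu_1) \;=\; \tilde{O}(d^{3/2 + 3/p}).
\]

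With this well-conditioned basis for $\text{range}(\bar{M})$ I apply Lemma~\ref{lem:subspace-sampling} to $\bar{M}$ and $R$ with accuracy $\eps$, obtaining an $\ell_p$-sampling matrix $\Pi'$ of
\[
t \;=\; O((\alpha'\beta')^p d^{p/2} \log(1/\eps)/\eps^2) \;=\; \tilde{O}(d^{3+2p}/\eps^2)
\]
rows in $O(\nnz(\bar{M}) \log n)$ time, with $\|\Pi' \bar{M}x\|_p = (1 \pm \eps)\|\bar{M}x\|_p$ for all $x$. I solve the reduced program $\min_{x : x_d = 1} \|\Pi' \bar{M} x\|_p$ in time $\phi(t,d)$ using the black-box $\ell_p$-regression routine; Lemma~\ref{lem:correct} converts its optimum into a $(1+O(\eps))$-approximation to the original problem, and rescaling $\eps$ by a constant gives the $(1+\eps)$-guarantee. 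Summing the four contributions gives the advertised running time.

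The main obstacle I expect is the $\ell_\infty$ well-conditioning step: Lemma~\ref{lem:compute-well-condition} is stated only for finite $p$, so some care is needed to get both the $\tilde{O}(md^3)$ time and the $\alpha\beta = \tilde{O}(d^{3/2})$ product (the latter is what makes the arithmetic $(\alpha'\beta')^p d^{p/2} = \tilde{O}(d^{3+2p})$ balance, matching the sample size in the theorem). The remaining pieces are direct applications of the cited lemmas; the $O(1)$ failure events introduced by Theorem~\ref{thm:big-p} and Lemmas~\ref{lem:compute-well-condition}--\ref{lem:subspace-sampling} each hold with constant probability, so a single union bound preserves constant success probability overall.
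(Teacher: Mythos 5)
Your proposal is essentially identical to the paper's proof: sketch with the $\ell_\infty$-subspace embedding $\Pi$ from Theorem~\ref{thm:big-p}, condition the small sketch via Lemma~\ref{lem:compute-well-condition} at $p=\infty$ to get $\alpha\beta \le 2d^{3/2}$, pull the conditioning back to $\bar{M}R^{-1}$ through Lemma~\ref{lem:big-p-well-condition}, then subsample with Lemma~\ref{lem:subspace-sampling} and appeal to Lemma~\ref{lem:correct}, with the same $t = \tilde{O}(d^{3+2p}/\eps^2)$ arithmetic. The paper applies Lemma~\ref{lem:compute-well-condition} at $p=\infty$ without comment (the formula $2d^{1+\max\{1/2,1/p\}}$ evaluates to $2d^{3/2}$ there), so the technicality you flag about the finite-$p$ hypothesis is real but is glossed over by the source as well.
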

\begin{proof}
Our algorithm is similar to those $\ell_p$-regression algorithms described in \cite{DDHKM09,CDMMMW12,MM12}. For completeness we sketch it here. Let $\Pi$ be the subspace embedding matrix in Section~\ref{sec:big} for $p > 2$. By Theorem~\ref{thm:big-p}, we have $(\mu_1, \mu_2) = \left(\Omega(1/(d\log d)^{1/p}), O((d \log d)^{1/p})\right)$.

\paragraph{Algorithm: $\ell_p$ regression for $p>2$}
\label{alg:regression-big-p}
\begin{enumerate}
\item Compute $\Pi \bar{M}$. 

\item Use Lemma~\ref{lem:compute-well-condition} to compute a matrix $R  \in \mathbb{R}^{d \times d}$ such that $\Pi \bar{M} R^{-1}$ is $(\alpha, \beta, \infty)$-well-conditioned with $\alpha\beta \le 2d^{3/2}$. By Lemma~\ref{lem:big-p-well-condition}, $\bar{M} R^{-1}$ is $(\beta \mu_2, d^{1/p}\alpha/\mu_1, p)$-well-conditioned. 

\item Given $R$, use Lemma~\ref{lem:subspace-sampling} to find a sampling matrix $\Pi^1$ such that \\
$(1-\eps) \cdot \norm{\bar{M} x}_p \le \norm{\Pi^1 \bar{M} x}_p \le (1+\eps) \cdot \norm{\bar{M} x}_p, \quad \forall x \in \mathbb{R}^d.$

\item Compute $\hat{x}$ which is the optimal solution to the sub-sampled problem $\min_{x \in \mathbb{R}^d} \norm{\Pi^1 M x - \Pi^1 b}_p$.
\end{enumerate}

\paragraph{Analysis.}
The correctness of the algorithm is guaranteed by Lemma~\ref{lem:correct}.
Now we analyze the running time. Step $1$ costs time $O(\text{nnz}(\bar{M}))$, by our choice of $\Pi$. Step $2$ costs time $O(m d^3 \log m)$ by Lemma~\ref{lem:compute-well-condition}, where $m = O(n^{1 - 2/p} \log n (d \log d)^{1 + 2/p} + d^{5+4p})$. Step $3$ costs time $O(\text{nnz}(\bar{M}) \log n)$  by Lemma~\ref{lem:subspace-sampling}, giving a sampling matrix $\Pi^1 \in \mathbb{R}^{t \times n}$ with $t = O(d^{3+2p} \log^2 d \log(1/\eps)/\eps^2)$. Step $4$ costs time $\phi(t, d)$, which is the time to solve $\ell_p$-regression problem on $t$ vectors in $d$ dimensions. 
To sum up, the total running time is
$O\left(\text{nnz}(\bar{M}) \log n + n^{1-2/p} d^{4+2/p} \log^2 n \log^{1+2/p} d + d^{8+4p} \log n + \phi(O(d^{3+2p} \log^2 d \log(1/\eps)/\eps^2), d)\right).$
\end{proof}

\subsection{Regression for $p$-norm with $1 \le p < 2$}
\begin{theorem}
\label{thm:regression-small-p}
There exists an algorithm that given an $\ell_p$ regression problem specified by $M \in \mathbb{R}^{n \times (d-1)}, b \in \mathbb{R}^n$ and $p \in [1, 2)$, with constant probability computes a $(1+\eps)$-approximation to an $\ell_p$-regression problem in time $\tilde{O}\left(\text{nnz}(\bar{M}) + d^{7-p/2} + \phi(\tilde{O}(d^{2+p}/\eps^2), d)\right)$, where $\bar{M} = [M, -b]$ and $\phi(t, d)$ is the time to solve $\ell_p$-regression problem on $t$ vectors in $d$ dimensions.
\end{theorem}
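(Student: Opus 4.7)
The plan is to mirror the proof of Theorem~\ref{thm:regression-big-p} exactly, substituting the subspace embedding of Theorem~\ref{thm:small-p} (which gives an $\ell_2$ estimator of $\ell_p$) for the $\ell_\infty$ estimator used in the $p>2$ case. The four-step algorithm becomes: (1) compute $\Pi\bar M$ where $\Pi\in\mathbb{R}^{\tilde{O}(d^{1+\gamma})\times n}$ is the matrix from Section~\ref{sec:small}, in $O(\nnz(\bar M)+\tilde{O}(d^{2+\gamma}))$ time; (2) find $R\in\mathbb{R}^{d\times d}$ so that $\Pi\bar M R^{-1}$ is well-conditioned (using Lemma~\ref{lem:compute-well-condition} directly in $\ell_p$ on the tall-and-thin matrix $\Pi\bar M$, or a fast QR in $\ell_2$); (3) apply Lemma~\ref{lem:subspace-sampling} to produce a sampling matrix $\Pi^1$ in $\tilde{O}(\nnz(\bar M))$ time; (4) solve $\min_x\norm{\Pi^1\bar M x}_p$ exactly at cost $\phi(t,d)$. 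Correctness of the returned $\hat x$ is immediate from Lemma~\ref{lem:correct}.

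The key new ingredient is a conditioning-conversion lemma analogous to Lemma~\ref{lem:big-p-well-condition} but for the $\ell_2$ estimator: if $\mu_1\norm{Mx}_p\le\norm{\Pi Mx}_2\le\mu_2\norm{Mx}_p$ and $\Pi MR^{-1}$ is $(\alpha,\beta,2)$-well-conditioned, then $MR^{-1}$ is $(\alpha\, d^{1/p-1/2}/\mu_1,\;\beta\mu_2,\;p)$-well-conditioned for $1\le p<2$. One direction uses $\norm{x}_q\le\norm{x}_2$ for $q\ge 2$, then the $\beta$-condition, then the upper embedding bound. The other applies the power-mean inequality $\sum_i a_i^p\le d^{1-p/2}(\sum_i a_i^2)^{p/2}$ column-by-column to $\Pi MR^{-1}$, combined with the lower embedding bound and the Frobenius interpretation of $\norm{\Pi MR^{-1}}_2\le\alpha$. (If one instead works entirely in $\ell_p$ via Remark~\ref{rem:inter-norm}, the analogous lemma is simpler and just picks up a factor $\mu_2/\mu_1$.)

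Combining this with Theorem~\ref{thm:small-p} (in particular the refined contraction bound from Section~\ref{sec:improve-small-p}) yields $\alpha'\beta'=\tilde{O}(d^{1+1/p})$, and Lemma~\ref{lem:subspace-sampling} then produces a sample of size $t=O((\alpha'\beta')^p d\log(1/\eps)/\eps^2)=\tilde{O}(d^{2+p}/\eps^2)$, matching the claimed bound. The non-regression portion of the running time sums Step~1 ($O(\nnz(\bar M))$), Step~2 (polynomial in $d$ on a $\tilde{O}(d^{1+\gamma})\times d$ matrix, which together with the downstream leverage-score computation required by Lemma~\ref{lem:subspace-sampling} yields the $\tilde{O}(d^{7-p/2})$ term), and Step~3 ($\tilde{O}(\nnz(\bar M))$), while Step~4 contributes $\phi(\tilde{O}(d^{2+p}/\eps^2),d)$.

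The hardest part will be calibrating (a) which version of $\mu_1$ from Theorem~\ref{thm:small-p} to use, and (b) whether to condition in $\ell_2$ (via QR, with $\alpha\beta=\sqrt d$) or in $\ell_p$ (via Lemma~\ref{lem:compute-well-condition}, with $\alpha\beta\le 2d^{1+1/p}$), so that the final $(\alpha'\beta')^p$ driving the sample complexity is truly $\tilde{O}(d^{1+p})$. A naive product $d^{1/p-1/2}\cdot\mu_2/\mu_1$ overshoots unless one exploits the $d\ge\log^{2/p-1}n$ regime of Section~\ref{sec:improve-small-p}; the complementary regime must be handled by the other branch of the $\mu_1$ bound and verified to only contribute polylogarithmic slack. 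A secondary bookkeeping obstacle is confirming that the auxiliary dense linear algebra (computing $\Pi\bar M$, conditioning, producing the leverage-score estimates used by Lemma~\ref{lem:subspace-sampling}) really fits within the $\tilde{O}(d^{7-p/2})$ budget for every $p\in[1,2)$.
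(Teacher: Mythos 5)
Your four-step plan mirrors the paper's $p>2$ algorithm, but the paper's actual proof of Theorem~\ref{thm:regression-small-p} uses a nine-step \emph{bootstrapping} scheme, and the bootstrapping is not optional --- a single conditioning/sampling pass, which is what you propose, cannot produce the claimed $t=\tilde{O}(d^{2+p}/\eps^2)$ sample size. Your proposed conditioning-conversion lemma is essentially the paper's Lemma~\ref{lem:small-p-well-condition} (taking $R$ from the QR decomposition of $\Pi\bar M$ gives $\alpha\beta \le d^{1/p}\mu_2/\mu_1$), so that part is sound. The problem is the arithmetic that follows: with the $(\mu_1,\mu_2)$ of Theorem~\ref{thm:small-p}, the first branch gives $\mu_2/\mu_1 = \tilde{O}(d^{2/p})$, hence $\alpha\beta = \tilde{O}(d^{3/p})$ and $(\alpha\beta)^p d = \tilde{O}(d^4)$; the refined branch gives $\mu_2/\mu_1 = \tilde{O}(d^{2/p-1/2}\log^{1/p-1/2} n)$, hence $\alpha\beta = \tilde{O}(d^{3/p-1/2}\log^{1/p-1/2} n)$ and $(\alpha\beta)^p d = \tilde{O}(d^{4-p/2}\log^{1-p/2} n)$. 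For $p=1$ these are $\tilde{O}(d^4/\eps^2)$ and $\tilde{O}(d^{3.5}\sqrt{\log n}/\eps^2)$ respectively, both strictly worse than the claimed $\tilde{O}(d^3/\eps^2)$. Your closing paragraph anticipates an ``overshoot'' but hopes the refined contraction regime rescues it; it does not, because $d^{1/p}$ (from the inter-norm gap) times either $\mu_2/\mu_1$ still exceeds $d^{1+1/p}$ by a polynomial factor in $d$.

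The idea you are missing is to use the sketch-then-sample pipeline \emph{iteratively} to shrink $\alpha\beta$. After the first (crude) conditioning, the paper samples down to $t_1=\tilde{O}(d^4)$ rows with constant distortion, then applies the dense $p$-stable sketch of Lemma~\ref{lem:dense-embed} to the small matrix $\Pi^1\bar M$ (cost only $O(t_1 d^{\omega-1})$), producing a combined map with a sharper $\mu_2/\mu_1$, then samples again to $t_3=\tilde{O}(d^{4-p/2})$ rows, and \emph{only then} runs the expensive ellipsoid-based conditioning of Lemma~\ref{lem:compute-well-condition} directly on the $\tilde{O}(d^{4-p/2})\times d$ sampled matrix --- since the sample is already a $(1\pm 1/2)$ embedding, this directly yields $\alpha\beta \le 2d^{1+1/p}$ for $\bar MR_2^{-1}$, which is what drives the final $t_4=\tilde{O}(d^{2+p}/\eps^2)$. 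The $\tilde{O}(d^{7-p/2})$ term in the running time is exactly the cost $O(t_3 d^3\log t_3)$ of that final conditioning step, not any cost incurred on the $\tilde{O}(d^{1+\gamma})\times d$ matrix from the first sketch as your bookkeeping assumed. Without the intermediate dense $p$-stable round and the second-stage conditioning on the sampled matrix, neither the sample size nor the $d^{7-p/2}$ running-time term can be derived.
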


We first introduce a few lemmas.

\begin{lemma}[\cite{SW11,MM12}]
\label{lem:dense-embed}
Given $M \in \mathbb{R}^{n \times d}$ with full column rank, $p \in [1, 2)$, and $\Pi \in \mathbb{R}^{m \times n}$ whose entries are i.i.d. $p$-stables, if $m = c d\log d$ for a sufficiently large constant $c$, then with  probability $0.99$, we have 
$$\Omega(1) \cdot \norm{M x}_p \le \norm{\Pi M x}_p \le O((d \log d)^{1/p}) \cdot \norm{M x}_p, \quad \forall x \in \mathbb{R}^d.$$
In addition, $\Pi M$ can be computed in time $O(nd^{\omega-1})$ where $\omega$ is the exponent of matrix multiplication.
\end{lemma}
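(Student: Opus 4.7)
The plan is to follow the standard $p$-stable embedding analysis of \cite{SW11,MM12}. I first reduce to an Auerbach basis $A$ of the column space of $M$, so that $\norm{A_i}_p = 1$ for every column and $\norm{x}_q \le \norm{Ax}_p$ for every $x$ (where $1/p + 1/q = 1$); this reduction is harmless because the desired inequality is multiplicative and uniform in $x$. By the defining property of $p$-stable distributions, for any \emph{fixed} $y \in \text{range}(A)$ the coordinates of $\Pi y$ are i.i.d.\ copies of $\norm{y}_p \cdot C$ where $C$ is a standard $p$-stable, so $\norm{\Pi y}_p^p$ has the same distribution as $\norm{y}_p^p \sum_{j=1}^m |C_j|^p$ with i.i.d.\ $p$-stables $C_j$. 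Everything reduces to understanding this single-variable sum.

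For the contraction (no underestimation), I observe that $|C|^p$ exceeds a fixed constant $c_0 > 0$ with probability at least $1/2$. A Chernoff bound on the indicators $\mathbf{1}_{|C_j|^p \ge c_0}$ shows that $\sum_j |C_j|^p \ge c_0 m/4$ for a fixed $y$ except with probability $e^{-\Omega(m)}$. Since $m = c d\log d$ with $c$ large, this failure probability is smaller than $(3/\veps)^{-d}$ for some $\veps = 1/\poly(d)$, so a standard $\veps$-net over the unit $p$-ball of $\text{range}(A)$ (of size $(3/\veps)^d$) combined with a Lipschitz extension using the dilation bound promotes the per-$y$ estimate to a uniform one, giving $\norm{\Pi y}_p \ge \Omega(m^{1/p}) \norm{y}_p$ simultaneously for all $y \in \text{range}(A)$. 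Rescaling by $m^{1/p}$ (or equivalently absorbing it into the dilation) yields the claimed $\Omega(1) \norm{y}_p$ lower bound.

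For the dilation (no overestimation), the core difficulty is that $\Pr[|C|^p > t] = \Theta(1/t)$, so $|C|^p$ has no finite mean and neither Chernoff nor variance-based concentration applies. I would truncate at a level $T = \poly(md)$: then $\E[\min(|C|^p, T)] = O(\log T)$ while $\Pr[|C|^p > T] = O(1/T)$, and a union bound over all $md$ entries shows no truncation fires with probability $0.99$. Markov's inequality on the truncated sum, combined with a union bound over the $d$ columns of $A$, then yields $\sum_j |C_j^{(i)}|^p \le O(m \log(md))$ for every $i \in [d]$ simultaneously, i.e.\ $\norm{\Pi A_i}_p = O((m\log(md))^{1/p})$. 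Passing to an arbitrary $x$ via the triangle inequality $\norm{\Pi A x}_p \le \sum_i |x_i| \norm{\Pi A_i}_p$ and the Auerbach bound $\norm{x}_1 \le d^{1/p} \norm{Ax}_p$ gives the desired dilation; sharpening the per-column concentration using the tighter tail bound for $\sum_j |C_j|$ obtained from Lemma~\ref{lem:tail-exp} (which reduces the $p$-stable sum to an exponential sum) is what brings the final constant down to $O((d \log d)^{1/p})$.

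The running-time claim is immediate: $\Pi M$ is the product of an $m \times n$ matrix and an $n \times d$ matrix with $m = \Theta(d \log d) = \tilde{O}(d)$, so blocked fast matrix multiplication computes it in $O(n d^{\omega-1})$ arithmetic operations. The main obstacle throughout the proof is the heavy tail of $|C|^p$ for $p < 2$: because no finite moment is available, every control on $\norm{\Pi M x}_p$ must be routed through truncation, and the correct constant in the dilation is obtained only by pushing the union bound onto the $d$ basis columns (rather than onto a net of size $d^{\Theta(d)}$, which would lose an extra factor of roughly $d^{1/p}$).
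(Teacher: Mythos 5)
This lemma is imported from \cite{SW11,MM12} and the paper gives no proof of its own, so your attempt can only be judged on its internal merits. Your overall architecture is the standard one (reduce to an Auerbach basis, use $p$-stability to collapse $\Pi y$ to $\norm{y}_p$ times i.i.d.\ $p$-stables, Chernoff plus a net for the contraction, truncation for the dilation), the contraction half is correct, and the normalization remark (the unnormalized sketch really gives $\Omega(m^{1/p})$ below, so the stated bounds are to be read after rescaling by $m^{-1/p}$) is the right reading of the statement.

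The dilation half, however, has a genuine quantitative gap. You claim that Markov's inequality on the truncated sum, union-bounded over the $d$ columns, yields $\sum_j |C_j^{(i)}|^p \le O(m\log(md))$ \emph{simultaneously} for all $i \in [d]$. It does not: since $\E[\sum_j \min(|C_j|^p,T)] = O(m\log T)$ and the available tail is only of Markov type (indeed $\Pr[\sum_j |C_j|^p \ge tm] \approx \log(tm)/t$, still a $1/t$ tail), driving the per-column failure probability down to $0.01/d$ forces the per-column bound up to $O(dm\log(md))$. Feeding that into $\norm{\Pi Ax}_p \le \max_i \norm{\Pi A_i}_p \cdot \norm{x}_1$ with $\norm{x}_1 \le d^{1/p}\norm{Ax}_p$ gives dilation $O((d^2\log d)^{1/p})$, a factor $d^{1/p}$ worse than claimed; your proposed repair via Lemma~\ref{lem:tail-exp} does not help because that lemma has the same $1/t$-type tail and the loss is structural to the per-column union bound. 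The standard fix, and the one this paper itself uses in its no-overestimation argument for $p>2$, is to apply the heavy-tailed-sum bound \emph{once} to the aggregate of all $dm$ entries, obtaining $\sum_{i\in[d]}\norm{\Pi A_i}_p^p = \norm{A}_p^p$-weighted total $\le O(dm\log(dm))$ with probability $0.99$, and then to pass to general $x$ via H\"older, $\sum_i |x_i|\norm{\Pi A_i}_p \le \norm{x}_q \bigl(\sum_i \norm{\Pi A_i}_p^p\bigr)^{1/p}$, combined with the Auerbach property $\norm{x}_q \le \norm{Ax}_p$. This yields $O((d\log d)^{1/p})$ after normalization and closes the gap; the rest of your argument then goes through.
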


\begin{lemma}
\label{lem:small-p-well-condition}
Let $\Pi \in \mathbb{R}^{m \times n}$ be a subspace embedding matrix of the $d$-dimensional normed space  spanned by the columns of matrix $M \in \mathbb{R}^{n \times d}$ such that 
\begin{equation}
\label{eq:l-3}
\mu_1 \cdot \norm{Mx}_p \le \norm{\Pi M x}_2 \le \mu_2 \cdot \norm{Mx}_p, \quad \forall x \in \mathbb{R}^d.
\end{equation}
If $R$ is the ``$R$" matrix in the $QR$-decomposition of $\Pi M$, then $MR^{-1}$ is $(\alpha, \beta, p)$-well-conditioned with $\alpha \beta \le d^{1/p} \mu_2/\mu_1$ for any $p \in [1, 2)$.
\end{lemma}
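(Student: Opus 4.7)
The plan is to exploit the fact that the QR-decomposition of $\Pi M$ makes $\Pi M R^{-1}$ into a matrix $Q$ with orthonormal columns, so that $\norm{\Pi M R^{-1} x}_2 = \norm{x}_2$ for every $x \in \mathbb{R}^d$. First I would note that $R^{-1}$ is well-defined: since $\mu_1 > 0$ and the embedding inequality applies to every $x \in \mathbb{R}^d$, the matrix $\Pi M$ has full column rank, so the thin QR-decomposition yields an invertible upper triangular $R$.

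Next I would feed $MR^{-1}$ into the embedding inequality (\ref{eq:l-3}). Replacing $x$ by $R^{-1}x$ in (\ref{eq:l-3}) (equivalently, applying it to the vector $R^{-1}x \in \mathbb{R}^d$) gives
\begin{equation*}
\mu_1 \cdot \norm{M R^{-1} x}_p \;\le\; \norm{\Pi M R^{-1} x}_2 \;=\; \norm{Q x}_2 \;=\; \norm{x}_2 \;\le\; \mu_2 \cdot \norm{M R^{-1} x}_p, \quad \forall x \in \mathbb{R}^d.
\end{equation*}
This is the single workhorse identity; both conditions of Definition~\ref{def:well-condition-1} will follow from it.

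To bound $\beta$, I would verify condition (1) of $(\alpha,\beta,p)$-well-conditioning by using the elementary monotonicity of $\ell_q$ norms on $\mathbb{R}^d$: since $p \in [1, 2)$ we have $q = p/(p-1) \ge 2$, hence $\norm{x}_q \le \norm{x}_2 \le \mu_2 \norm{MR^{-1}x}_p$. Thus we may take $\beta = \mu_2$. To bound $\alpha$, I would expand the element-wise $\ell_p$ norm over columns and use the upper half of the displayed inequality with $x = e_i$:
\begin{equation*}
\norm{MR^{-1}}_p^p \;=\; \sum_{i \in [d]} \norm{MR^{-1} e_i}_p^p \;\le\; \sum_{i \in [d]} \left(\frac{\norm{e_i}_2}{\mu_1}\right)^p \;=\; \frac{d}{\mu_1^p},
\end{equation*}
so we may take $\alpha = d^{1/p}/\mu_1$. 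Multiplying the two bounds gives $\alpha\beta \le d^{1/p}\mu_2/\mu_1$, as required.

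There is essentially no hard step here; the only subtlety is making sure that $R$ comes from the QR-decomposition of $\Pi M$ (not of $M$ itself), which is precisely what makes $\norm{\Pi M R^{-1} x}_2 = \norm{x}_2$ hold exactly rather than only up to distortion. Once that observation is in hand, the two well-conditioning bounds reduce to one-line applications of monotonicity of $\ell_q$ norms and summation over standard basis vectors.
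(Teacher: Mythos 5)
Your proof is correct, but it takes a genuinely different (and more self-contained) route than the paper's. Both proofs start from the same workhorse identity $\mu_1\norm{MR^{-1}x}_p \le \norm{\Pi M R^{-1}x}_2 = \norm{Qx}_2 = \norm{x}_2 \le \mu_2\norm{MR^{-1}x}_p$, obtained by substituting $R^{-1}x$ into (\ref{eq:l-3}) and using that $Q$ has orthonormal columns. You then verify the two clauses of Definition~\ref{def:well-condition-1} directly: $\beta = \mu_2$ via $\norm{x}_q \le \norm{x}_2$ (using $q = p/(p-1) \ge 2$ since $p \in [1,2)$), and $\alpha = d^{1/p}/\mu_1$ by summing $\norm{MR^{-1}e_i}_p^p \le 1/\mu_1^p$ over the $d$ standard basis vectors. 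The paper instead reads off the $\ell_p$-norm condition number of Definition~\ref{def:well-condition-2}, namely $\Delta_p(MR^{-1}) \le \mu_2/\mu_1$, and then invokes Lemma~\ref{lem:relation-well-condition} (the \cite{DDHKM09} relation $\Delta'_p \le d^{\max\{1/2,1/p\}}\Delta_p$) as a black box to get the $d^{1/p}$ factor. Your argument makes the origin of the $d^{1/p}$ factor transparent — it is just the $\ell_p$ sum over the $d$ columns — at the modest cost of redoing a computation the cited lemma already packages; the paper's route is shorter on the page but defers that bookkeeping to Lemma~\ref{lem:relation-well-condition}. The final bound $\alpha\beta \le d^{1/p}\mu_2/\mu_1$ is identical either way.
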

\begin{proof}
We first analyze $\Delta_p(M R^{-1}) = \mu_2 / \mu_1$ (Definition \ref{def:well-condition-2}).
\begin{eqnarray}
\norm{MR^{-1}x}_p & \le & 1/\mu_1 \cdot \norm{\Pi M R^{-1} x}_2 \quad \text{(by (\ref{eq:l-3}))}  \nonumber \\
& = & 1/\mu_1 \cdot \norm{Q x}_2 \quad (\Pi M R^{-1} = Q R R^{-1} = Q)  \nonumber \\
& = & 1/\mu_1 \cdot \norm{x}_2  \quad \text{($Q$ has orthonormal columns)} \nonumber 
\end{eqnarray}
And
\begin{eqnarray}
\norm{MR^{-1}x}_p & \ge & 1/\mu_2 \cdot \norm{\Pi M R^{-1} x}_2 \quad \text{(by (\ref{eq:l-3}))}  \nonumber \\
& = & 1/\mu_2 \cdot \norm{Q x}_2   \nonumber \\
& = & 1/\mu_2 \cdot \norm{x}_2 \nonumber 
\end{eqnarray}
Then by Lemma~\ref{lem:relation-well-condition} it holds that 
$$\alpha \beta = \Delta'_p(M R^{-1}) \le d^{\max\{1/2, 1/p\}} \Delta_p(M R^{-1}) = d^{1/p} \mu_2/\mu_1.$$
\end{proof}

\begin{proof} (for Theorem~\ref{thm:regression-small-p})
The regression algorithm for $1 \le p < 2$ is similar but slightly more complicated than that for $p > 2$, since we are trying to optimize the dependence on $d$ in the running time. Let $\Pi$ be the subspace embedding matrix in Section~\ref{sec:small} for $1 \le p < 2$. By theorem~\ref{thm:small-p}, we have $(\mu_1, \mu_2) = (\Omega(1/(d\log d)^{1/p}), O((d \log d)^{1/p}))$ (we can also use $(\Omega(1 / {(d \log d \log n)^{\frac{1}{p}-\frac{1}{2}}}), O((d \log d)^{1/p}))$ which will give the same result).

\paragraph{Algorithm: $\ell_p$-Regression for $1\le p<2$}
\label{alg:regression-small-p}

\begin{enumerate}
\item Compute $\Pi \bar{M}$. 

\item Compute the $QR$-decomposition of $\Pi \bar{M}$. Let $R  \in \mathbb{R}^{d \times d}$ be the ``$R$" in the $QR$-decomposition.

\item Given $R$, use Lemma~\ref{lem:subspace-sampling} to find a sampling matrix $\Pi^1 \in \mathbb{R}^{t_1 \times n}$ such that 
\begin{equation}
\label{eq:algo2-1}
(1-1/2) \cdot \norm{\bar{M} x}_p \le \norm{\Pi^1 \bar{M} x}_p \le (1+1/2) \cdot \norm{\bar{M} x}_p, \quad \forall x \in \mathbb{R}^d.
\end{equation}

\item Use Lemma~\ref{lem:dense-embed} to compute a matrix $\Pi^2 \in \mathbb{R}^{t_2 \times t_1}$ for $\Pi^1 \bar{M}$ such that 
$$\Omega(1) \cdot \norm{\Pi^1 \bar{M} x}_p \le \norm{\Pi^2 \Pi^1 \bar{M} x}_p \le O((d \log d)^{1/p}) \cdot \norm{\Pi^1 \bar{M} x}_p, \quad \forall x \in \mathbb{R}^d.$$
Let $\Pi^3 = \Pi^2 \Pi^1 \in \mathbb{R}^{t_2 \times n}$. By (\ref{eq:algo2-1}) and $\norm{z}_2 \le \norm{z}_p \le m^{1/p-1/2} \norm{z}_2$ for any $z \in \mathbb{R}^m$, we have
$$\Omega(1/{t_2}^{1/p-1/2}) \cdot \norm{\bar{M} x}_p \le \norm{\Pi^3 \bar{M} x}_2 \le O((d \log d)^{1/p}) \cdot \norm{\bar{M} x}_p, \quad \forall x \in \mathbb{R}^d.$$

\item Compute the $QR$-decomposition of $\Pi^3 \bar{M}$. Let $R_1  \in \mathbb{R}^{d \times d}$ be the ``$R$" in the $QR$-decomposition.

\item Given $R_1$, use Lemma~\ref{lem:subspace-sampling} again to find a sampling matrix $\Pi^4  \in \mathbb{R}^{t_3 \times n}$ such that $\Pi^4$ is a $(1\pm 1/2)$-distortion embedding matrix of the subspace spanned by $\bar{M}$. 

\item Use Lemma~\ref{lem:compute-well-condition} to compute a matrix $R_2  \in \mathbb{R}^{d \times d}$ such that $\Pi^4 \bar{M} {R_2}^{-1}$ is $(\alpha, \beta, p)$-well-conditioned with $\alpha\beta \le 2d^{1+1/p}$. 

\item Given $R_2$, use Lemma~\ref{lem:subspace-sampling} again to find a sampling matrix $\Pi^5 \in \mathbb{R}^{t_4 \times n}$ such that $\Pi^5$ is a $(1\pm \eps)$-distortion embedding matrix of the subspace spanned by $\bar{M}$. 

\item Compute $\hat{x}$ which is the optimal solution to the sub-sampled problem $\min_{x \in \mathbb{R}^d} \norm{\Pi^5 M x - \Pi^5 b}_p$.
\end{enumerate}

\paragraph{Analysis.}
The correctness of the algorithm is guaranteed by Lemma~\ref{lem:correct}.
Now we analyze the running time. Step $1$ costs time $O(\text{nnz}(\bar{M}))$, by our choice of $\Pi$. Step $2$ costs time $O(m d^2) = O(d^{3+\gamma})$ using standard $QR$-decomposition, where $\gamma$ is an arbitrarily small constant. Step $3$ costs time $O(\text{nnz}(\bar{M}) \log n)$  by Lemma~\ref{lem:subspace-sampling}, giving a sampling matrix $\Pi^1 \in \mathbb{R}^{t_1 \times n}$ with $t_1 = O(d^{4} \log^2 d)$. Step $4$ costs time $O(t_1 d^{\omega-1}) = O(d^{3+\omega} \log^2 d)$ where $\omega$ is the exponent of matrix multiplication, giving a  matrix $\Pi^3 \in \mathbb{R}^{t_2 \times n}$ with $t_2 = O(d\log d)$. Step $5$ costs time $O(t_2 d^2) = O(d^3 \log d)$. Step $6$ costs time $O(\text{nnz}(\bar{M}) \log n)$  by Lemma~\ref{lem:subspace-sampling}, giving a sampling matrix $\Pi^4 \in \mathbb{R}^{t_3 \times n}$ with $t_3 = O(d^{4-p/2} \log^{2-p/2} d)$. Step $7$ costs time $O(t_3 d^3 \log t_3) = O(d^{7-p/2} \log^{3-p/2} d)$. Step $8$ costs time $O(\text{nnz}(\bar{M}) \log n)$  by Lemma~\ref{lem:subspace-sampling}, giving a sampling matrix $\Pi^5 \in \mathbb{R}^{t_4 \times n}$ with $t_4 = O(d^{2+p} \log(1/\eps) / \eps^2)$. Step $9$ costs time $\phi(t_4, d)$, which is the time to solve $\ell_p$-regression problem on $t_4$ vectors in $d$ dimensions. 
To sum up, the total running time is $$O\left(\text{nnz}(\bar{M}) \log n + d^{7-p/2} \log^{3-p/2} d + \phi(O(d^{2+p} \log(1/\eps)/\eps^2), d)\right).$$
\end{proof}

\begin{remark}
In \cite{MM12} an algorithm together with several variants for $\ell_1$-regression are proposed, all with running time of the form $\tilde{O}\left(\text{nnz}(\bar{M}) + \text{poly}(d) + \phi(\tilde{O}(\text{poly}(d)/\eps^2), d)\right)$. Among all these variants, the power of $d$ in $\text{poly}(d)$ (ignoring log factors) in the second term is at least $7$, and the power of $d$ in $\text{poly}(d)$ in the third term is at least $3.5$. In our algorithm both terms are improved.
\end{remark}

\paragraph{Application to $\ell_1$ Subspace Approximation.} 
Given a matrix $M \in \mathbb{R}^{n \times d}$ and a parameter $k$, the $\ell_1$-subspace approximation is to compute a matrix $\hat{M}$ of rank $k \in [d-1]$ such that $\norm{M - \hat{M}}_1$ is minimized. When $k = d - 1$, $\hat{M}$ is a hyperplane, and the problem is called {\em $\ell_1$ best hyperplane fitting}. In \cite{CDMMMW12} it is shown that this problem is equivalent to solving the regression problem $\min_{W \in \mathcal{C}} \norm{AW}_1$, where the constraint set is $\mathcal{C} = \{W \in \mathbb{R}^{d \times d} : W_{ii} = -1 \}$. Therefore, our $\ell_1$-regression result directly implies an improved algorithm for $\ell_1$ best hyperplane fitting. Formally, we have 
\begin{theorem}
Given $M \in \mathbb{R}^{n \times d}$, there exists an algorithm that computes a $(1+\eps)$-approximation to the $\ell_1$ best hyperplane fitting problem with probability $0.9$, using time $O\left(\nnz(M) \log n + \frac{1}{\eps^2} \poly(d, \log\frac{d}{\eps}) \right)$.
\end{theorem}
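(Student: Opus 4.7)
(Proposal)
The plan is to reduce the $\ell_1$ best hyperplane fitting problem to $d$ independent $\ell_1$-regression subproblems, all living in the same column span $\text{range}(M)$, and then run a single preprocessing pass of the algorithm behind Theorem~\ref{thm:regression-small-p} to handle all $d$ subproblems at once.

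First I would use the reformulation quoted from \cite{CDMMMW12}: fitting is equivalent to $\min_{W \in \mathcal{C}} \norm{M W}_1$ where $\mathcal{C} = \{W \in \mathbb{R}^{d \times d} : W_{ii} = -1\}$. Since the entry-wise $\ell_1$ norm decouples across columns and the constraint $W_{ii}=-1$ is per-column, this decomposes into $d$ independent $\ell_1$-regression problems: for each $i \in [d]$,
\[
\min_{x \in \mathbb{R}^{d-1}} \norm{M_{-i}\, x - M_i}_1,
\]
where $M_{-i}$ denotes $M$ with its $i$-th column removed. Crucially, every feasible argument of every subproblem is of the form $Mx'$ for some $x' \in \mathbb{R}^d$ (just embed the coefficients appropriately and set the $i$-th coordinate to $-1$), so \emph{all $d$ subproblems lie in the fixed subspace} $\text{range}(M)$.

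Next I would run the preprocessing chain of Steps 1--7 of the algorithm in Theorem~\ref{thm:regression-small-p}, applied once to $M$ (not to any $\bar{M}$), to produce a well-conditioned basis $M R_2^{-1}$ with $\alpha\beta = O(d^{2})$, and then use Lemma~\ref{lem:subspace-sampling} with distortion parameter $\eps$ to obtain a single sampling matrix $\Pi^\star \in \mathbb{R}^{t\times n}$ with $t = \tilde{O}(d^3 \log(1/\eps)/\eps^2)$ that is a $(1\pm\eps)$-embedding of $\text{range}(M)$ in $\ell_1$. This construction costs $O(\nnz(M)\log n) + \poly(d,\log(d/\eps))$ time, independent of $\eps^{-2}$ outside the final sample bound, exactly as in Theorem~\ref{thm:regression-small-p}. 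Because $\Pi^\star$ works simultaneously for \emph{every} $x$ with $Mx\in\text{range}(M)$, a single embedding handles all $d$ subproblems, and by Lemma~\ref{lem:correct} the optimum of
\[
\min_{x \in \mathbb{R}^{d-1}} \norm{\Pi^\star M_{-i}\, x - \Pi^\star M_i}_1
\]
is a $\frac{1+\eps}{1-\eps}$-approximation to the $i$-th original subproblem; rescaling $\eps$ by a constant yields a $(1+\eps)$-approximation.

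Finally I would compute $\Pi^\star M$ explicitly (a $t \times d$ matrix, formed in time $O(t d)$ after $\Pi^\star$ is applied during the sampling step), and solve each of the $d$ sampled regression problems via linear programming on a $t \times (d-1)$ instance in time $\poly(d, 1/\eps, \log(d/\eps))$. Summing over $i$ gives a second-term cost of $d \cdot \poly(d,1/\eps,\log(d/\eps)) = \poly(d,\log(d/\eps))/\eps^2$, so the total runtime is $O(\nnz(M)\log n) + \poly(d,\log(d/\eps))/\eps^2$, as claimed. The success probability is constant because it is dominated by the constant-probability events in the preprocessing (the OSE, the conditioning, and the single sampling matrix $\Pi^\star$); rescaling constants brings it above $0.9$.

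The main obstacle I expect is the careful accounting that shows one sampling matrix $\Pi^\star$ genuinely suffices for all $d$ subproblems without a union bound blowup: this hinges on the fact that Lemma~\ref{lem:subspace-sampling} gives a \emph{for-all-$x$} guarantee on $\text{range}(M)$, so the approximation holds uniformly across the $d$ regression instances. A secondary bookkeeping issue is that one must not naively invoke Theorem~\ref{thm:regression-small-p} as a black box $d$ times — that would incur $d \cdot \nnz(M)\log n$ work in the preprocessing; the whole savings come from pulling the preprocessing outside the loop over $i$.
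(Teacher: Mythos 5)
Your proposal is correct and matches the paper's (largely implicit) argument: the paper simply cites \cite{CDMMMW12} for the reformulation $\min_{W\in\mathcal{C}}\norm{MW}_1$ and asserts that the $\ell_1$-regression machinery "directly implies" the result, while you correctly flesh out the missing steps — decomposing into $d$ per-column $\ell_1$-regression subproblems all living in $\text{range}(M)$, running the preprocessing of Theorem~\ref{thm:regression-small-p} once to get a single $(1\pm\eps)$-sampling matrix $\Pi^\star$, and invoking the for-all-$x$ guarantee of Lemma~\ref{lem:subspace-sampling} to reuse it across all $d$ subproblems. The only soft spot is the final line of accounting, where you write $d\cdot\poly(d,1/\eps,\log(d/\eps))=\poly(d,\log(d/\eps))/\eps^2$; that identity holds only if the LP-solving cost $\phi(t,d)$ is linear in $t$ (up to $\poly(d,\log t)$ factors), since $t=\tilde{O}(d^3\log(1/\eps)/\eps^2)$. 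This is the same implicit assumption the paper makes in stating the bound, so it is not a gap relative to the paper, but you should state it explicitly rather than hide it in a $\poly(\cdot)$.
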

The $\poly(d)$ factor in our algorithm is better than those by using the regression results in \cite{CW12,CDMMMW12,MM12}.

\section{Regression in the Distributed Setting}
\label{sec:dist-regression}
In this section we consider the $\ell_p$-regression problem in the distributed setting, where we have $k$ machines $P_1, \ldots, P_k$ and one central server. Each machine has a disjoint subset of the rows of $M \in \mathbb{R}^{n \times (d-1)}$ and $b \in \mathbb{R}^d$. The server has a $2$-way communication channel with each machine, and the server wants to communicate with the $k$ machines to solve the $\ell_p$-regression problem specified by $M, b$ and $p$. Our goal is to minimize the overall communication of the system, as well as the total running time.

Let $\bar{M} = [M, -b]$. Let $I_1, \ldots, I_k$ be the sets of rows that $P_1, \ldots, P_k$ have, respectively. Let $\bar{M}_i\ (i \in [k])$ be the matrix by setting all rows $j \in [n] \backslash I_i$ in $\bar{M}$ to $0$. We use $\Pi$ to denote the subspace embedding matrix proposed in Section~\ref{sec:big} for $p>2$ and Section~\ref{sec:small} for $1 \le p < 2$, respectively. We assume that both the server and the $k$ machines agree on such a $\Pi$ at the beginning of the distributed algorithms using, for example, shared randomness.

\subsection{Distributed $\ell_p$-regression for $p > 2$}
The distributed algorithm for $\ell_p$ regression with $p > 2$ is just a distributed implementation of Algorithm~\ref{alg:regression-big-p}.  

\paragraph{Algorithm:  Distributed $\ell_p$-regression for $p>2$}
\label{alg:dist-regression-big-p}
\begin{enumerate}
\item Each machine computes and sends $\norm{\bar{M}_i}_p$ to the server. And then the server computes $\norm{\bar{M}}_p = \left(\sum_{i \in [k]} \norm{\bar{M}_i}_p^p\right)^{1/p}$  and sends to each site. $\norm{\bar{M}}_p$ is needed for  Lemma~\ref{lem:subspace-sampling} which we will use later.

\item Each machine $P_i$ computes and sends $\Pi \bar{M}_i$ to the server.

\item The server computes $\Pi \bar{M}$ by summing up $\Pi \bar{M}_i\ (i = 1, \ldots, k)$. Next, the server uses Lemma~\ref{lem:compute-well-condition} to compute a matrix $R \in \mathbb{R}^{d \times d}$ such that $\Pi \bar{M} R^{-1}$ is $(\alpha, \beta, \infty)$-well-conditioned with $\alpha\beta \le 2d^{3/2}$, and sends $R$ to each of the $k$ machines.

\item Given $R$ and $\norm{\bar{M}}_p$, each machine uses Lemma~\ref{lem:subspace-sampling} to compute a sampling matrix $\Pi^1_i$ such that $\Pi^1_i$ is a $(1\pm \eps)$-distortion embedding matrix of the subspace spanned by $\bar{M_i}$, and then sends the sampled rows of $\Pi^1_i \bar{M}
_i$ that are in $I_i$ to the server.

\item The server constructs a global matrix $\Pi^1 \bar{M}$ such that the $j$-th row of $\Pi^1 \bar{M}$ is just the $j$-th row of $\Pi^1_i \bar{M}_i$ if $(j \in I_i) \wedge (j \text{ get sampled})$, and $0$ otherwise. Next, the server computes $\hat{x}$ which is the optimal solution to the sub-sampled problem $\min_{x \in \mathbb{R}^d} \norm{\Pi^1 M x - \Pi^1 b}_p$.
\end{enumerate}

\paragraph{Analysis.}
Step $1$ costs communication $O(k)$. Step $2$ costs communication $O(k md)$ where \\
$m = O(n^{1-2/p} \log n  (d\log d)^{1+2/p} + d^{5+4p})$. Step $3$ costs communication $O(k d^2)$. Step $4$ costs communication $O(td+k)$ where $t = O(d^{3+2p} \log^2 d \log(1/\eps)/\eps^2)$, that is, the total number of rows get sampled in rows $I_1 \cup I_2 \cup \cdots \cup I_k$. 
Therefore the total communication cost is 
$$O\left(k n^{1-2/p} d^{2+2/p} \log n \log^{1+2/p} d + kd^{6+4p} + d^{4+2p} \log^2 d \log(1/\eps)/\eps^2\right).$$

The total running time of the system, which is essentially the running time of the centralized algorithm (Theorem~\ref{thm:regression-big-p}) plus the communication cost, is $$O\left(\nnz(\bar{M}) \log n + (k + d^2 \log n) (n^{1-2/p} d^{2+2/p} \log n \log^{1+2/p} d + d^{6+4p})+ \phi(O(d^{3+2p} \log^2 d \log(1/\eps)/\eps^2), d)\right).$$ 

\subsection{Distributed $\ell_p$-regression for $1 \le p < 2$}
The distributed algorithm for $\ell_p$-regression with $1 \le p < 2$ is  a distributed implementation of Algorithm~\ref{alg:regression-small-p}. 

\paragraph{Algorithm: Distributed $\ell_p$-regression for $1 \le p < 2$}
\label{alg:dist-regression-small-p}
\begin{enumerate}
\item Each machine computes and sends $\norm{\bar{M}_i}_p$ to the server. And then the server computes $\norm{\bar{M}}_p = \left(\sum_{i \in [k]} \norm{\bar{M}_i}_p^p\right)^{1/p}$  and sends to each site.

\item Each machine $P_i$ computes and sends $\Pi \bar{M}_i$ to the server.

\item The server computes $\Pi \bar{M}$ by summing up $\Pi \bar{M}_i\ (i = 1, \ldots, k)$. Next, the server computes a $QR$-decomposition of $\Pi  \bar{M}$, and sends $R$ (the ``$R$" in $QR$-decomposition) to each of the $k$ machines.

\item Given $R$ and $\norm{\bar{M}}_p$, each machine $P_i$ uses Lemma~\ref{lem:subspace-sampling} to compute a sampling matrix $\Pi_i^1 \in \mathbb{R}^{t_1 \times n}$ such that $\Pi_i^1$ is a $(1\pm 1/2)$-distortion embedding matrix of the subspace spanned by $\bar{M_i}$, and then sends the sampled rows of $\Pi_i^1 \bar{M}
_i$ that are in $I_i$ to the server.

\item The server constructs a global matrix $\Pi^1 \bar{M}$ such that the $j$-th row of $\Pi^1 \bar{M}$ is just the $j$-th row of $\Pi^1_i \bar{M}_i$ if $(j \in I_i) \wedge (j \text{ get sampled})$, and $0$ otherwise. After that, the server uses Lemma~\ref{lem:dense-embed} to compute a matrix $\Pi^2 \in \mathbb{R}^{t_2 \times t_1}$ for $\Pi^1 \bar{M}$. Next, the server computes a $QR$-decomposition of $\Pi^2 \Pi^1 \bar{M}$, and sends $R_1$ (the ``$R$" in $QR$-decomposition) to each of the $k$ machines.

\item Given $R_1$ and $\norm{\bar{M}}_p$, each machine $P_i$ uses Lemma~\ref{lem:subspace-sampling} again to compute a sampling matrix $\Pi_i^4 \in \mathbb{R}^{t_3 \times n}$ such that $\Pi_i^4$ is a $(1\pm 1/2)$-distortion embedding matrix of the subspace spanned by $\bar{M_i}$, and then sends the sampled rows of $\Pi_i^4 \bar{M}
_i$ that are in $I_i$ to the server.

\item The server constructs a global matrix $\Pi^4 \bar{M}$ such that the $j$-th row of $\Pi^4 \bar{M}$ is just the $j$-th row of $\Pi^4_i \bar{M}_i$ if $(j \in I_i) \wedge (j \text{ get sampled})$, and $0$ otherwise. Next,  the server uses Lemma~\ref{lem:compute-well-condition} to compute a matrix $R_2 \in \mathbb{R}^{d \times d}$ such that $\Pi \bar{M} {R_2}^{-1}$ is $(\alpha, \beta, p)$-well-conditioned with $\alpha\beta \le 2d^{1+1/p}$, and sends $R_2$ to each of the $k$ machines.

\item Given $R_2$ and $\norm{\bar{M}}_p$, each machine $P_i$ uses Lemma~\ref{lem:subspace-sampling} again to compute a sampling matrix $\Pi^5_i \in \mathbb{R}^{t_4 \times n}$ such that $\Pi^5_i$ is a $(1\pm \eps)$-distortion embedding matrix of the subspace spanned by $\bar{M_i}$, and then sends the sampled rows of $\Pi^5_i \bar{M}_i$ that are in $I_i$ to the server.

\item The server constructs a global matrix $\Pi^5 \bar{M}$ such that the $j$-th row of $\Pi^5 \bar{M}$ is just the $j$-th row of $\Pi_i^5 \bar{M}_i$ if $(j \in I_i) \wedge (j \text{ get sampled})$, and $0$ otherwise.  Next, the server computes $\hat{x}$ which is the optimal solution to the sub-sampled problem $\min_{x \in \mathbb{R}^d} \norm{\Pi^5 M x - \Pi^5 b}_p$.
\end{enumerate}

\paragraph{Communication and running time.}
Step $1$ costs communication $O(k)$.
Step $2$ costs communication $O(k md)$ where $m = O(d^{1+\gamma})$ for some arbitrarily small $\gamma$. Step $3$ costs communication $O(k d^2)$. Step $4$ costs communication $O(t_1d+k)$ where $t_1 = O(d^4 \log^2 d)$. Step $5$ costs communication $O(k d^2)$. Step $6$ costs communication $O(t_3d+k)$ where $t_3 = O(d \log d)$. Step $7$ costs communication $O(k d^2)$. Step $8$ costs communication $O(t_4d+k)$ where $t_4 = O(d^{2+p} \log(1/\eps)/\eps^2)$. Therefore the total communication cost is 
$$O\left(k d^{2+\gamma} + d^5 \log^2 d + d^{3+p} \log(1/\eps)/\eps^2\right).$$

The total running time of the system, which is essentially the running time of the centralized algorithm (Theorem~\ref{thm:regression-small-p}) plus the communication cost, is
$$O\left(\text{nnz}(\bar{M}) \log n + k d^{2+\gamma} + d^{7-p/2} \log^{3-p/2} d + \phi(O(d^{2+p} \log(1/\eps)/\eps^2), d)\right).$$

\begin{remark}
It is interesting to note that the work done by the server $C$ is just $\poly(d)$, while the majority of the work at Step $2, 4, 6, 8$, which costs $O(\nnz(\bar{M}) \cdot \log n)$ time, is done by the $k$ machines. This feature makes the algorithm fully scalable. 
\end{remark}


%



\end{document}